\pdfoutput=1
\documentclass[a4paper,reqno]{amsart}

\usepackage[T1]{fontenc}
\usepackage{graphicx} 
\usepackage[utf8]{inputenc}

\usepackage[pdftex,linkcolor=black,pdfborder={0 0 0}]{hyperref} 

\usepackage{tikz,pgfplots}
\usetikzlibrary{patterns}
\usetikzlibrary{positioning}
\usetikzlibrary{arrows.meta, decorations.markings, bending}
\usetikzlibrary {arrows.meta,bending}
\usetikzlibrary{decorations.pathmorphing}
\usetikzlibrary{patterns.meta}
\usetikzlibrary{patterns}
\usepackage{tikz-3dplot}

\usepackage{bm}

\usepackage{braket}
\usepackage{bbold}
\usepackage{wrapfig}
\usepackage{float}
\usepackage{soul} 
\usepackage{ulem} 

\usepackage{calrsfs}
\DeclareMathAlphabet{\pazocal}{OMS}{zplm}{m}{n}

\usepackage{mathrsfs} 
\usepackage[backend=biber,style=alphabetic,maxnames=9,maxalphanames=5,isbn=false,backref]{biblatex}
\DeclareFieldFormat[misc]{title}{\mkbibquote{#1}}
\DeclareFieldFormat[book]{title}{\mkbibquote{#1}}
\DeclareFieldFormat{journaltitle}{\mkbibitalic{#1}}
\DeclareFieldFormat{booktitle}{\mkbibitalic{#1}}

\usepackage{graphicx} 
\usepackage{calc} 
\usepackage{enumitem} 

\usepackage[english]{babel}

\usepackage{wrapfig}
\usepackage{amsmath}
\usepackage{amssymb}
\usepackage[dvipsnames]{xcolor}
\usepackage{amsthm}
\usepackage{ytableau}
\usepackage{relsize}
\usepackage{mathtools}

\usepackage{cancel}

\usepackage{caption}
\usepackage{subcaption}

\usepackage{cleveref}
\crefname{lemma}{Lemma}{Lemmas}
\crefname{proposition}{Proposition}{Propositions}
\crefname{definition}{Definition}{Definitions}
\crefname{theorem}{Theorem}{Theorems}
\crefname{corollary}{Corollary}{Corollaries}
\crefname{conjecture}{Conjecture}{Conjectures}
\crefname{claim}{Claim}{Claims}
\crefname{section}{Section}{Sections}
\crefname{appendix}{Appendix}{Appendices}
\crefname{figure}{Fig.}{Figs.}
\crefname{table}{Table}{Tables}

\usepackage[a4paper, left=25mm, right=20mm, top=25mm, bottom=30mm, marginpar=25mm]{geometry} 

\usepackage[protrusion=true,expansion=true]{microtype} 

\usepackage{mathtools}

\hypersetup{ 	
pdfsubject = {},
pdftitle = {},
pdfauthor = {}
}

\usepackage[foot]{amsaddr}

\newtheorem{definition}{Definition}[section]
\newtheorem{problem}{Problem}[section]

\newtheorem{theorem}{Theorem}[section]
\newtheorem{corollary}{Corollary}[section]
\newtheorem{lemma}{Lemma}[section]

\newtheorem{conjecture}{Conjecture}[section]

\newcommand{\eat}[1]{} 

\newcommand{\Chi}{\scalebox{1.3}{$\chi$}}
\renewcommand{\l}{\lambda}

\newcommand{\partition}[1]{\buildrel #1\over\vdash}

\newcommand{\N}{\mathbb{N}}
\newcommand{\Z}{\mathbb{Z}}
\newcommand{\R}{\mathbb{R}}
\newcommand{\Co}{\mathbb{C}}

\renewcommand{\l}{\lambda}

\newcommand{\C}{\pazocal{C}}
\newcommand{\E}{\mathcal{E}}
\newcommand{\id}{\mathbb{1}}
\newcommand{\Tr}{\mathrm{Tr}}

\newcommand{\Del}{\mathrm{Del}}

\newcommand{\im}{\mathrm{i}}

\newcommand{\sgn}{\operatorname{sgn}}
\newcommand{\al}{\alpha}
\newcommand{\Rl}{\pazocal{R}}

\newcommand{\DL}{\mathrm{d}_\mathrm{L}} 
\newcommand{\DP}{\mathrm{d}_\mathrm{P}} 

\newcommand{\AddRef}[1]{{\color{Plum}{[REF]}\normalcolor}} 

\title[PI codes: a numerical study and qudit constructions]{Permutation-invariant codes: a numerical study and qudit constructions}
\author{Liam J. Bond$^{1,2}$}
\author{Ji\v{r}\'{i} Min\'{a}\v{r}$^{1,2}$}
\author{Maris Ozols$^{1,3,4}$}
\author{Arghavan Safavi-Naini$^{1,2}$}
\author{Vladyslav Visnevskyi$^{1,2,5}$}

\address[1]{QuSoft, Amsterdam}
\address[2]{Institute of Physics, University of Amsterdam}
\address[3]{Institute for Logic, Language and Computation, University of Amsterdam}
\address[4]{Korteweg-de Vries Institute for Mathematics, University of Amsterdam}
\address[5]{Centre for the Mathematics of Quantum Theory, University of Copenhagen}

\email{l.j.bond@uva.nl}
\email{j.minar@uva.nl}
\email{marozols@gmail.com}
\email{safavinaini@gmail.com}
\email{vladislav.visnevskyy@gmail.com}

\begin{document}
\begin{abstract}
    We investigate Permutation-Invariant (PI) quantum error-correcting codes encoding a logical qudit of dimension $\DL$ in PI states using physical qudits of dimension $\DP$.
    We extend the Knill--Laflamme (KL) conditions for $d-1$ deletion errors from qubits to qudits and investigate numerically both qubit ($\DL = \DP = 2$) and qudit ($\DL > 2$ or $\DP > 2$) PI codes. We analyze the scaling of the block length $n$ in terms of the code distance $d$, and compare to existing families of PI codes due to Ouyang, Aydin--Alekseyev--Barg (AAB) and Pollatsek--Ruskai (PR).
    Our three main findings are:
    (i) We conjecture that qubit PI codes correcting up to $d-1$ deletion errors have block length $n(d) \geq (3d^2 + 1) / 4$, which implies an upper bound $d \leq \sqrt{12n-3}/3$ on their code distance, and that PR codes can saturate this bound.
    (ii) For qudit PI codes
    encoding a single qudit we numerically observe that increasing $\DP$ results in $n$ monotonically decreasing and approaching the quantum Singleton bound $n(d) \geq 2d-1$.
    (iii) We propose a semi-analytic extension of the qubit AAB construction to qudits that finds explicit solutions by solving a linear program. 
    Our results therefore provide key insights into lower bounds on the block length scaling of both qubit and qudit PI codes, and demonstrate the benefit of increased physical local dimension in the context of PI codes. 
\end{abstract}

\tdplotsetmaincoords{70}{110}

\maketitle
\setcounter{tocdepth}{1}
\tableofcontents

\section{Introduction}\label{ch:Intro}
\subsection{Background}
Quantum computers promise to solve certain problems faster than classical ones. For example, the problem of factoring can be solved in exponentially faster runtime~\cite{Shor_1997}, and the problem of unstructured search in quadratically faster runtime~\cite{grover1996fastquantummechanicalalgorithm}. However, quantum computers are vulnerable to decoherence and quantum noise which limits their scalability, both in terms of the number of qubits and the number of gate operations. Solving this challenge will require the use of quantum error correction (QEC) to protect quantum information from noise-induced errors. 

At the heart of QEC is the quantum error-correcting code (QECC), whereby a logical qubit is encoded across many physical qubits in such a way that errors can be detected and corrected. Experimental realizations of quantum error correction beyond break-even, i.e.\ the regime in which the coherence time of an actively error-corrected logical qubit exceeds that of any qubit without active error correction, has been experimentally demonstrated in quantum hardware platforms including neutral atoms~\cite{Bluvstein_2023}, trapped ions~\cite{egan2021fault, dasu2025breakingmagicdemonstrationhighfidelity} and superconducting circuits~\cite{ofek2016extending, 2024, brock2024quantum}. Thus far, both experimental and theoretical efforts have primarily focused on stabilizer codes, such as surface codes, and bosonic codes, such as GKP codes. However, there exists an entire zoo of QECCs~\cite{albert2023errorcorrectionzoo}, each with different properties, for example in terms of the types of errors that are correctable or in the number of physical qubits that are required. 

Among the many classes of QECCs, Permutation-Invariant (PI) codes feature some key advantages over more well-known codes, such as stabilizer codes. First introduced by Ruskai and Pollatsek~\cite{pollatsek2004permutationally}, PI codes are defined by logical qubit states that are invariant under permutation of the physical qubits. PI codes are able to correct non-Pauli errors such as amplitude-damping errors~\cite{ouyangamplitudedamping}, as well as $t$-qubit deletion errors~\cite{hagiwara2020fourqubitscodequantumdeletion, nakayama2020singlequantumdeletionerrorcorrecting, Ouyang_2021_deletions}, which are $t$-qubit erasure errors whose locations are unknown. This is due to their permutation invariance, which implies that we can simply assume that the first $t$ qubits were erased, and therefore deletion errors on PI codes are equivalent to erasure errors. 
This distinguishes PI codes from stabilizer codes, which cannot correct deletion errors~\cite{Fowler_2012}. Some PI codes have exotic transversal gates that comprise the largest single-qubit gate set implementable transversally \cite{kubischta2023family}, with more efficiently distillable magic states. PI codes are also well-suited for realization on current quantum hardware platforms. They can be efficiently encoded using quantum circuits of one- and two-qubit gates~\cite{Bartschi2019} or with global variational quantum circuits~\cite{BondPRR2025,GutmanPRL2024,ZouPRA2003}, and appear naturally as the ground states of Heisenberg ferromagnetic Hamiltonians~\cite{ouyang2014permutation, aydin2024family}. PI codes have been recently shown to be equivalent to both Fock state codes and spin codes~\cite{aydin2025quantumerrorcorrectionsu2}, and therefore they are immediately applicable to quantum hardware platforms comprised of many qubits, large spins or bosonic modes. Recent works also present efficient algorithms for error-correction on PI codes~\cite{ouyang2026theoryquantumerrorcorrection}. Beyond quantum error correction, PI codes have also found recent application in quantum storage~\cite{ouyang2021storage}, quantum metrology~\cite{ouyang2212finite}, and quantum communication~\cite{bhalerao2025improvingquantumcommunicationrates}.

Previous works have introduced qubit PI code constructions that are either fully~\cite{aydin2024class, ouyang2014permutation, ouyang2017permutation, aydin2024family} or partially~\cite{pollatsek2004permutationally, aydin2025quantumerrorcorrectionsu2} analytic. However, the optimality of these constructions remains an open problem. In the first part of this manuscript, we numerically investigate the smallest block length $n$ required to construct error-correcting qubit PI codes, as a function of the code distance $d$, physical dimension $\DP$ and logical dimension $\DL$. The properties of all PI codes investigated in this paper are summarized in \cref{tab:codes_summary}.

A natural generalization of qubit PI codes is to extend their definition to qudits. A qudit PI code is then defined by logical qudits encoded in permutationally-invariant states of physical qudits. Due to their larger state space that is available for storing and processing quantum information \cite{wang2020qudits}, qudit systems may offer advantages over qubit systems, for example, in terms of scaling for realizing universal quantum computation~\cite{kiktenko2020scalable}, or a reduction in the elementary gate count required to construct gates such as the Toffoli gate~\cite{ralph2007efficient}.  Qudit systems can also be natively realized in experimental platforms when more than two physical levels are available, including in photons~\cite{milburn2009photons}, superconducting systems~\cite{chiorescu2003coherent}, trapped ions~\cite{blatt2008entangled} and neutral atoms~\cite{saffman2010quantum}. 

Qudit systems may also allow for more efficient logical state encoding in terms of the number of local physical systems that are required to construct a QECC. Strategies to reduce the overhead of QECCs are especially important for current experimental platforms, which are limited in their scalability. Although qudit PI code constructions are known due to Ouyang~\cite{ouyang2014permutation}, the number of physical qubits (block length) that they require to correct $t$ deletion errors is independent of the physical qudit dimension. That is, Ouyang qudit PI codes have no benefits to increasing the physical state space. In the second part of this manuscript we therefore ask, are there qudit PI codes whose block length, $n$, decreases with increasing physical qudit dimension, $\DP$? We numerically investigate the smallest $n$ required to construct error-correcting qudit PI codes, and find that for logical dimensions $\DL=2$, $\DL=3$ and $\DL=4$, $n$ decreases with increasing $\DP$. Our numerical results therefore provide a pathway to realizing qubit and qudit PI codes with a smaller physical overhead. 

Below, we provide an overview and summary of the results in this manuscript. Note that while nearing the completion of this manuscript, we became aware of a related work independently deriving Knill--Laflamme conditions for qudit PI codes, and constructing qudit PI codes using convex geometry~\cite{aydin2025quantumerrorcorrectionsu2}.

\subsection{Overview and summary of the results}

After introducing notation in \cref{sec:notation}, in \cref{ch:Numerical_study} we investigate qubit PI codes, $\DL = \DP = 2$. For complex codeword coefficients, we determine numerically the minimal block length, $n_{\rm min}$, as a function of the code distance $d$. Based on this numerical evidence we conjecture that: 
\begin{equation}\label{eq:overview_minblock}
    n_{\rm min}(d) = (3d^2+1)/4, 
\end{equation}
which scales as $n=\frac{3}{4}d^2 + O(d)$. In comparison, to the best of our knowledge the best existing analytic qubit PI code constructions in terms of the scaling $n(d)$ achieve $n=d^2 + O(d)$~\cite{aydin2024family,ouyang2014permutation}. We further conjecture that qubit PI codes with real coefficients obey the same minimal block length scaling as \cref{eq:overview_minblock}. For such real, minimal block length qubit PI codes, we observe that they: (i) feature symmetries in the codeword coefficients which allow the coefficients of the second codeword to be determined completely from the first, and (ii) have a solution space which, for any fixed $d$, is a $\lfloor (d+1)/2 \rfloor$-dimensional manifold. 

Next, in \cref{ch:KL_qudit_gen} we generalize the Knill--Laflamme (KL) conditions~\cite{knill1997theory}, also known as the quantum error correction conditions, for qubit PI codes~\cite{aydin2024family} to qudit PI codes. In \cref{sec:converting_qubit_to_qudit} we then use these conditions to show analytically that any qudit PI code on qubits can be extended to a qudit PI code on qudits without any loss in the code distance. In \cref{sec:num_study_qudit_PI}, we set $d=3$ and numerically investigate the scaling of the minimal block length $n_{\rm min}(\DP)$ with physical dimension $\DP$ for various logical dimensions $\DL$. We observe that $n_{\rm min}$ decreases with $\DP$, approaching the quantum Singleton bound~\cite{Knill_2000}. In comparison, to the best of our knowledge the best existing analytic qudit PI code constructions in terms of the scaling $n(\DP)$ are independent of $\DP$~\cite{ouyang2014permutation}. Outside of asymptotic limits~\cite{aydin2025quantumerrorcorrectionsu2}, our numerical results are the first demonstration for PI codes that increasing the physical dimension allows smaller block lengths. 

Finally, we provide a semi-analytic extension of the AAB qubit PI code family~\cite{aydin2024family} to qudits using discrete simplices, which we call simplicial PI codes (\cref{sec:Simplicial_codes}). For $\DP = 3$ we observe numerically that our construction scales as $n(d) = (d^3 + 23d^2 - 9d - 7)/8$, which is worse than existing analytic constructions due to Ouyang that have $n(d) = d^2(\DL-1)$. However, this result for $\DP=3$ does not imply the scaling cannot improve for simplicial codes of higher values of $\DP$. Even though the final scaling of our construction is suboptimal for $\DP=3$ compared to existing state-of-the-art qudit PI codes, insights from our extension may lead to explicit analytic constructions featuring subquadratic block length scaling with code distance.

Importantly, all our results for both qubit and qudit PI codes immediately apply to Fock state codes and spin codes due to the equivalence established in~\cite{aydin2025quantumerrorcorrectionsu2}.

\begin{table}[H]
    \centering
    \begin{tabular}{|c|c|c|c|c|}
         \hline
         & & & & \\
         \textbf{PI Code} & \textbf{$\DL$} & \textbf{$\DP$} & \textbf{Block length scaling $n(d)$} & \\
         & & & & \\
         \hline
         & & & & \\
         Minimal (\cref{def:minimal_PI_codes}) & $\DL = 2$ & $\DP = 2$ & $n(d) = (3d^2 + 1)/4$ & N \\
         & & & & \\
         \hline
         & & & & \\
         AAB (\cref{sec:Aydin_codes}, \cite{aydin2024family}) & $\DL = 2$ & $\DP = 2$ & $n(d) = d^2 -d + 1$ & A \\
         & & & & \\
         \hline
         & & & & \\
         PR (\cite{pollatsek2004permutationally}) & $\DL = 2$ & $\DP = 2$ & $n(d) = (3d^2 + 1)/4$ & N \\
         & & & & \\
         \hline
         & & & & \\
         Ouyang (Example 6.3 in \cite{ouyang2017permutation}) & $\DL \in \N$ &
         $\DP \in \N$ & $n(d) = d^2 (\DL-1)$ & A \\
         & & & & \\
         \hline
         & & & & \\
         Simplicial (\cref{sec:Simplicial_codes}) & $\DL \in \N$ & $\DP \geq \DL$ &$\DP = 3$: $n(d) = (d^3 + 23d^2 - 9d - 7)/8$ & N \\
         & & & & \\
         \hline
    \end{tabular}
    \caption{Summary of the properties of PI codes reviewed in this manuscript. Here $\DL$ is the logical local dimension, $\DP$ is the physical local dimension, and $d$ is the code distance. Each of these PI codes encodes a single qubit/qudit. All qubit PI codes in this table have mirror and phase-flip symmetries, see \cref{eqn:mirror_phase_sym}. The PR code scaling was determined in this work numerically. In the last column, ``N'' and ``A'' denote whether the scaling was obtained numerically or analytically.}
    \label{tab:codes_summary}
\end{table}

\section{Preliminaries}
\label{sec:notation}

\subsection{Notation and basic definitions}
Throughout the paper, we use the term \textbf{local dimension} to describe the arity of a single quantum register. For instance, a qubit has local dimension $2$, a qutrit has local dimension $3$, and so on.
Qudit error-correcting codes encode \textbf{logical qudits} of local dimension $\bm{\DL}$ in \textbf{physical qudits} of local dimension $\bm{\DP}$. Additionally, if not specified otherwise, whenever we denote a quantum state by $\ket{x}$ we assume it is a basis state of the computational basis.
We use the following notation and definitions throughout:
\begin{itemize}
    \item $\lambda \partition{q} n := (\lambda_0,\lambda_1,\dots,\lambda_{q-1})$ s.t.\ all $\lambda_i \geq 0$ and $\sum_{i=0}^{q-1} \lambda_i = n$ -- a \textit{composition} of $n$ into $q$ parts (composition is an ordered partition), and $\sum_{\lambda \partition{q} n}$ denotes the sum ranging over all such compositions;
    \item $l(\lambda)$ -- length of the composition $\lambda$, e.g.\ for $\lambda \partition{q} n$ we have $l(\lambda)=q$;
    \item $[q] := \{0,1,\dots,q-1\}$ -- the set of all integers from $0$ to $q-1$.
\end{itemize}

\begin{definition}
Let $x \in [q]^n$ be a string of length $n$ over
alphabet $\{0,1,\dots,q-1\}$.
The \textbf{weight} of $x$ is the vector
\begin{equation}
    w(x) := (N_0(x),N_1(x),\dots,N_{q-1}(x))
\end{equation}
where $N_i(x)$ is the number of entries $i \in [q]$ in $x$, e.g.\ $N_2(021230) = 2$.
For a given composition $\lambda \partition{q} n$, we denote the set of all weight-$\lambda$ strings of length $n$ by
\begin{equation}
    X^n_\lambda := \{x \in [q]^n \mid w(x) = \lambda \}.
\end{equation}
Note that
$|X^n_\lambda| = \binom{n}{\lambda} := \frac{n!}{\lambda_0! \lambda_1! \dots \lambda_{l(\lambda)-1}!}$ is the multinomial coefficient.
\end{definition}

\begin{definition}\label{def:Dicke_state}
    A Dicke state $\ket{D^n_{\lambda}}$ is a normalized, permutation-invariant state given by the uniform superposition over all $n$-qudit states of weight $\lambda$:
    \begin{equation}
       \ket{D_\lambda^n} = \frac{1}{\sqrt{\binom{n}{\lambda}}} \sum_{x=(x_1,\dots,x_n) \in X_\lambda^n} \ket{x_1}\otimes \ket{x_2}\otimes \cdots \otimes\ket{x_n}.
    \end{equation}
\end{definition}

 We refer to the $\l$ index of the Dicke state also as the \textbf{weight} of the Dicke state. Dicke states of different weight within the same Hilbert space are orthogonal, $\braket{D_\lambda^n|D^n_{\lambda'}} = \delta_{\lambda, \lambda'}$. The Dicke states obtained from all compositions $\lambda \partition{\DP} n$ form a complete basis for the \textbf{symmetric subspace} of $n$ qudits of local dimension $\DP$.

\begin{definition}\label{def:qudit_PI_codes}
    A Permutation-Invariant (PI) qudit code encoding $k$ logical qudits of local dimension $\DL$ into $n$ physical qudits of local dimension $\DP$ is a subspace of $(\Co^{\DP})^{\otimes n}$ spanned by codewords of the form
    \begin{equation}
        \ket{c_i} = \sum_{\lambda \partition{\DP} n} \alpha_{i,\lambda} \ket{D_\lambda^n},
    \end{equation}
    where $i \in \{0,1,\dots,\DL-1\}^k$ and $\alpha_{i,\lambda} \in \Co$.
\end{definition}

\subsection{QEC conditions and errors}
In quantum error correction the computational space is encoded into an open system in which errors occur due to an unwanted interaction with the environment. As with any quantum evolution, an error acting on a quantum state can be described by a quantum channel, also called error channel. The general noise model does not make any assumptions about the nature of errors: the span of Kraus operators of an error channel is called the error set, and an error can be any element of this set.

Within this general noise model, for a quantum code $\C$ to be error correcting on an error set $\E$ it must satisfy certain necessary and sufficient quantum error-correction conditions, called the Knill--Laflamme (KL) conditions \cite{gottesmansurviving, knill1997theory}:
\begin{theorem}\label{thrm:Knill--Laflamme}
    $(\C, \E)$ is a quantum error-correcting code iff $\forall \ket{\psi},\ket{\phi} \in \C, \forall E_a, E_b \in \E$:
        \begin{equation}
            \bra{\psi} E_a^{\dagger} E_b \ket{\phi} = C_{ab} \braket{\psi|\phi},
        \end{equation}
    where $C_{ab} \in \Co$ is a constant that only depends on $E_a$ and $E_b$ but not on $\ket{\psi}, \ket{\phi}$.
\end{theorem}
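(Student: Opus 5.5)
We prove both directions, writing $P$ for the orthogonal projector onto $\C$. First we reformulate the stated condition: by sesquilinearity, it holds for all $\ket{\psi},\ket{\phi}\in\C$ if and only if $P E_a^\dagger E_b P = C_{ab} P$ for all $a,b$. The matrix $C=(C_{ab})$ is then Hermitian, since taking adjoints gives $P E_b^\dagger E_a P = \overline{C_{ab}}P$. It is also positive semidefinite: for any vector $v$ and any unit $\ket{\psi}\in\C$, set $F=\sum_b v_b E_b$; then $\sum_{a,b}\overline{v_a}v_b C_{ab} = \|F\ket{\psi}\|^2\ge 0$. Both properties follow from the condition on a spanning set, and they extend by linearity to the whole error set $\E$.

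\textbf{Sufficiency.} Diagonalize $C = U D U^\dagger$ with $U$ unitary and $D=\mathrm{diag}(d_k)$, $d_k\ge 0$. Define new errors $F_k=\sum_a U_{ak}E_a$, so that $P F_k^\dagger F_l P = d_k\delta_{kl}P$. For $d_k>0$, polar decomposition gives $F_kP = \sqrt{d_k}\,U_k P$ with $U_k$ a partial isometry mapping $\C$ isometrically onto $\C_k := F_k\C$. The subspaces $\C_k$ are mutually orthogonal because the cross terms vanish. For $d_k=0$ we have $F_kP=0$.

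The recovery channel has Kraus operators $R_k = P U_k^\dagger P_k$, where $P_k$ projects onto $\C_k$, together with the projector onto the complement of $\bigoplus_k \C_k$ (mapped into $\C$ arbitrarily) to ensure trace preservation. A direct check shows $R_l F_k\ket{\psi} = \delta_{kl}\sqrt{d_k}\ket{\psi}$ for $\ket{\psi}\in\C$. Hence $\mathcal{R}\circ\mathcal{N}(\rho) \propto \rho$ for every $\rho$ supported on $\C$ and every channel $\mathcal{N}$ whose Kraus operators lie in $\mathrm{span}\,\E$. Normalization of $\mathcal{N}$ then fixes the proportionality constant to $1$.

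\textbf{Necessity.} Suppose a recovery channel $\mathcal{R}$ with Kraus operators $R_k$ satisfies $\mathcal{R}(E\rho E^\dagger)$-summed correctly. Concretely, for the channel $\mathcal{N}$ built from $\E$, we have $\sum_{k,a}R_kE_a\rho E_a^\dagger R_k^\dagger = \rho$ for all $\rho$ on $\C$. Since a channel whose output on $\C$ is the identity channel has Kraus operators proportional to the identity, we get $R_kE_aP = \lambda_{ka}P$ for scalars $\lambda_{ka}$. Then
\[
PE_a^\dagger E_bP = PE_a^\dagger\Bigl(\sum_k R_k^\dagger R_k\Bigr)E_bP = \Bigl(\sum_k\overline{\lambda_{ka}}\lambda_{kb}\Bigr)P,
\]
which is the condition with $C_{ab}=\sum_k\overline{\lambda_{ka}}\lambda_{kb}$.

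\textbf{Main obstacle.} The key technical step is the uniqueness claim used in the necessity direction: if $\sum_j A_j\rho A_j^\dagger=\rho$ for all $\rho$ on $\C$, with $A_j = R_kE_aP$, then each $A_j$ is proportional to $P$. We prove it by applying the identity to $\rho=\ket{\psi}\bra{\psi}$. The sum of rank-one positive terms $A_j\ket{\psi}\bra{\psi}A_j^\dagger$ equals a pure state, so each $A_j\ket{\psi}$ is a multiple of $\ket{\psi}$. An operator with every vector as an eigenvector is a scalar on $\C$. In the sufficiency direction, the delicate part is bookkeeping: handling the zero eigenvalues $d_k=0$ and completing the recovery operators so that they form a trace-preserving channel.
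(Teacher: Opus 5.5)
Your proposal is correct. It is the standard argument that the paper relies on: the paper gives no proof of its own and simply cites Theorem 2.7 of \cite{gottesmansurviving}, which proves sufficiency by diagonalizing the Hermitian positive semidefinite $C_{ab}$ to get errors mapping $\C$ onto mutually orthogonal copies, and necessity by noting that rank-one positive terms summing to a pure state force $R_kE_aP\propto P$. The one thing to fix is the garbled opening of the necessity part: state the notion of correction you use, e.g.\ $\mathcal{R}(E\rho E^\dagger)\propto\rho$ for each $E\in\E$, or a channel with suitably rescaled Kraus operators $\sqrt{p_a}E_a$, since the $E_a$ themselves need not satisfy $\sum_a E_a^\dagger E_a=\id$; your argument then goes through unchanged.
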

\begin{proof}
    See proof of Theorem 2.7 in \cite{gottesmansurviving}.
\end{proof}

\noindent Informally, the intuition behind this condition is that (i) orthogonal codewords should stay orthogonal even after errors are introduced so as to be detectable and distinguishable, and (ii) the errors themselves, being general linear maps, should act as unitaries when restricted to the codespace in order to be reversible.

For a quantum code $\C$ encoding $k$ logical qudits into $n$ physical qudits, we refer to $n$ as the \textbf{block length} of this code. In this manuscript we set $k = 1$ always. 

To formalize the notion of distance of a quantum error-correcting code, we say that a multi-qudit Pauli operator $E$ on the physical space has \textbf{weight $t$} if it acts non-trivially on at most $t$ physical qudits.
Errors of weight $t$ are then spanned by Pauli operators acting on at most $t$ qudits. Due to linearity of quantum error correction, a code correcting $t$ Pauli errors can thus correct any weight-$t$ error.

\begin{definition}
    The distance $d \in \N$ of a code $\C$ is the smallest weight of an operator $F$ such that    \begin{equation}\label{eqn:code_distance}
        \bra{\psi}F\ket{\phi} = c(F) \braket{\psi|\phi},
    \end{equation}
    is violated by some codewords $\ket{\psi},\ket{\phi} \in \C$, where $c(F)$ is independent of $\ket{\psi},\ket{\phi}$.
\end{definition}

Since a code of distance $d$ satisfies \cref{eqn:code_distance} for all operators of weight at most $d-1$, the condition of \cref{thrm:Knill--Laflamme} holds for all errors of weight $\lfloor (d-1)/2 \rfloor$ or less. 

Besides Pauli noise, a particularly important type of errors in QEC are \textbf{erasure errors}. A single-qudit erasure error maps a single-qudit state $\ket{\psi} \in \pazocal{H} = \Co^{\DP}$ to some state $\ket{s}$ that is orthogonal to the space spanned by the qudit's basis states, i.e.\ $\ket{s} \bot \, \pazocal{H}$. Then, by making a measurement to detect the presence of $\ket{s}$ we can determine whether an erasure error has occurred. A multi-qudit erasure error is a tensor product of single-qudit erasure errors on different qudits, with all local states $\ket{s_i}$ on different qudits mutually orthogonal. Then, by locally measuring on each qudit whether an erasure has occurred we are able to determine which qudits were erased, gaining classical information about the locations of the erasure errors. 

\begin{theorem}\label{thrm:QECC_corr_eras}
    A QECC with distance $d$ can correct $d-1$ erasure errors.
\end{theorem}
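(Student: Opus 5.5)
The plan is to reduce erasure correction to the Knill--Laflamme conditions (\cref{thrm:Knill--Laflamme}) for a suitable error set. Since the erasure locations become known classically, it suffices to handle one fixed set of locations at a time. Fix a set $S$ of at most $d-1$ qudits that have been erased. A flag measurement on each qudit reveals $S$, and we then correct conditioned on $S$.

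First I will model the erasure channel on $S$. Each qudit $j\in S$ undergoes a map into the extended space $\pazocal{H}\oplus\mathrm{span}\{\ket{s_j}\}$. A convenient choice is a replacement channel with Kraus operators $K_{j,m} = \ket{s_j}\bra{m}$ for $m\in[\DP]$. More generally, the channel may act on $S$ by an arbitrary operator and then flag it. In either case every Kraus operator of the $S$-conditioned channel has the form $E = \ket{s_S}\otimes A_S\otimes \id_{\bar S}$ up to the embedding, where $A_S$ is some linear map from the qudits in $S$ into the flag space. The key point is that $E_a^\dagger E_b$ acts only on the qudits in $S$, as $\id_{\bar S}\otimes B_S$ for some operator $B_S$ on $(\Co^{\DP})^{\otimes |S|}$. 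For the replacement channel, $E_a^\dagger E_b = \ket{m}\bra{m'}$ on $S$, tensored with identity elsewhere, since the flag states satisfy $\braket{s|s}=1$.

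Next I will expand $B_S$ in the generalized Pauli (Weyl--Heisenberg) basis on the qudits in $S$. Every such basis element has weight at most $|S|\le d-1$. By the definition of distance, \cref{eqn:code_distance} holds for each of these Paulis $F$ with some constant $c(F)$. Linearity then gives $\bra{\psi}E_a^\dagger E_b\ket{\phi} = C_{ab}\braket{\psi|\phi}$ with $C_{ab}=\sum_F \beta_F\, c(F)$, where the $\beta_F$ are the expansion coefficients of $B_S$. Thus the Knill--Laflamme conditions hold for the error set associated with $S$, and \cref{thrm:Knill--Laflamme} yields a recovery map $\mathcal R_S$. The overall decoder first measures the flags to learn $S$ and then applies $\mathcal R_S$. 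The flag measurement does not disturb the state, since the flag subspaces for different $S$ are orthogonal.

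The main obstacle, modest as it is, is the bookkeeping that justifies treating each $S$ separately. The combined error set over all $S$ need not satisfy the Knill--Laflamme conditions, because products $E_a^\dagger E_b$ across different $S$ could have weight up to $2(d-1)$. I would resolve this using orthogonality: for $S\neq S'$, every flag configuration differs on some qudit, so $E_a^\dagger E_b=0$ whenever $E_a$ and $E_b$ come from different $S$, and the condition then holds trivially with $C_{ab}=0$. With that observation the whole erasure channel on at most $d-1$ qudits satisfies \cref{thrm:Knill--Laflamme} directly, which completes the argument.
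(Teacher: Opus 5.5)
Your argument is correct and is essentially the standard proof that the paper defers to (Theorem 2.10 of Gottesman's notes). For a known erased set $S$ with $|S|\le d-1$, every $E_a^\dagger E_b$ is supported on $S$, so the distance condition yields the Knill--Laflamme conditions, and orthogonality of the flags (equivalently, measuring the erasure locations first) handles different $S$; the Pauli expansion is not even needed, since the paper's definition of distance applies directly to any operator of weight at most $d-1$, such as $\id_{\bar S}\otimes B_S$.
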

\begin{proof}
    See proof of Theorem 2.10 in \cite{gottesmansurviving}.
\end{proof}
Because any distance $d$ code can correct at least $\lfloor \frac{d-1}{2} \rfloor$ errors,
\cref{thrm:QECC_corr_eras} immediately implies an important equivalence:
\begin{corollary}\label{cor:2tErasure-tPauli_equiv}
    A QECC can correct arbitrary $t$-qudit errors if and only if it can correct 2$t$ erasure errors.
\end{corollary}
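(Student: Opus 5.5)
The plan is to route both directions of the equivalence through the Knill--Laflamme conditions of \cref{thrm:Knill--Laflamme}, using the distance of the code as an intermediary.

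The key lemma is this: for a code $\C$ and an integer $s$,
\[
\text{(A)}\quad \bra{\psi}F\ket{\phi}=c(F)\braket{\psi|\phi}\ \text{for all } F \text{ of weight} \le s
\]
is equivalent to each of the following.
\begin{itemize}
\item[(B)] $\C$ corrects arbitrary $\lfloor s/2\rfloor$-qudit errors.
\item[(C)] $\C$ corrects $s$ erasure errors.
\end{itemize}
Once this is established, set $s=2t$. Then $\lfloor s/2\rfloor = t$, so (B) says $\C$ corrects arbitrary $t$-qudit errors and (C) says $\C$ corrects $2t$ erasures. The corollary follows immediately.

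\textbf{(A)$\Leftrightarrow$(B).} Take the error set $\E$ spanned by Paulis of weight at most $t=\lfloor s/2\rfloor$. Every product $E_a^\dagger E_b$ with $E_a,E_b\in\E$ has weight at most $2t\le s$, so (A) gives the KL condition and hence (B). For the converse, any Pauli $F$ of weight at most $2t$ factors as $F = E_a^\dagger E_b$ up to a phase, with $E_a,E_b$ each of weight at most $t$: split the support of $F$ into two halves. The KL condition for $E_a,E_b$ then yields (A) for $F$. Linearity extends (A) from Paulis to all operators of weight at most $2t$.

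\textbf{(A)$\Leftrightarrow$(C).} The direction (A)$\Rightarrow$(C) is exactly \cref{thrm:QECC_corr_eras}, applied with distance $d\ge s+1$. For (C)$\Rightarrow$(A), model an erasure on a known set $S$ of at most $s$ qudits. Its Kraus operators can be taken as $E = \ket{s_S}\!\bra{j}_S\otimes \id$, where $\ket{s_S}$ is the erasure flag and $j$ ranges over basis states of the qudits in $S$. Since the erasure locations are known, we apply KL for each fixed $S$ separately. Then $E_a^\dagger E_b = \ket{j}\!\bra{j'}_S\otimes\id$, and these span all operators supported on $S$. So KL gives (A) for every operator supported on $S$. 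Taking all $S$ with $|S|\le s$ covers every operator of weight at most $s$, which is (A).

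\textbf{Main obstacle.} I expect the hardest step to be the converse (C)$\Rightarrow$(A). It requires being careful that correcting erasures at \emph{known} locations means the KL condition is imposed separately for each location set $S$, with the flag states making different $S$ orthogonal. The constant $C_{ab}$ may therefore depend on $S$, but this is harmless because (A) only requires $c(F)$ to depend on $F$. The (A)$\Leftrightarrow$(B) direction is just the standard weight-splitting argument.
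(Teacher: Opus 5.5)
Your argument is correct and takes the same route as the paper, which passes through the code distance. The paper combines the fact that a distance-$d$ code satisfies the KL conditions for all errors of weight at most $\lfloor (d-1)/2\rfloor$ with \cref{thrm:QECC_corr_eras}; your version goes further by proving the converse implications (the weight-splitting $F=E_a^\dagger E_b$, and the erasure Kraus operators $\ket{s_S}\!\bra{j}_S\otimes\id$ spanning all operators on $S$), which the paper's one-line justification leaves implicit. One small correction: your stated equivalence (A)$\Leftrightarrow$(B) for a general integer $s$ is false when $s$ is odd (e.g.\ $s=1$ makes (B) vacuous), because the splitting argument only recovers (A) up to weight $2\lfloor s/2\rfloor$; this is harmless, since you only apply it with $s=2t$.
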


Although erasure errors provide classical information about the location of the error, using PI codes to correct erasures does not require the use of this information because the code construction is permutation-invariant \cite{Ouyang_2021_deletions}. For PI codes, an erasure error is therefore equivalent to a \textbf{deletion error}. Informally a deletion error, also called qudit loss, is an erasure error that does not reveal classical information about the location of the erasure. For example an erasure on the third qubit maps $1001 \mapsto 10s1$, while a deletion on the second or third qubit maps $1001 \mapsto 101$. Rigorously:
\begin{definition}
    A weight-$t$ deletion map $\Tr_I$ on an index set $I = \{i_1,i_2, \dots, i_t\}$ with $i_1<i_2< \dots < i_t$ is defined as a channel that traces out the systems at positions specified by $I$:
    \begin{equation}
        \Tr_I (\rho) = \Tr_{i_1} \circ \Tr_{i_2} \circ \dots \circ \Tr_{i_t} (\rho).
    \end{equation}
\end{definition}
We can now define the deletion channel:
\begin{definition}
    A $t$-deletion channel $\Del_t^n$ on $n$ systems is defined as a convex linear combination of all weight-$t$ deletion maps $\Tr_I$ with index sets of size $t$, i.e.\ with $|I|=t$:
    \begin{equation}
        \Del_t^n(\rho) = \sum_{I:|I|=t} p(I) \Tr_I(\rho),
    \end{equation}
    where $p(I)$ is a probability distribution.
\end{definition}
\noindent We say that $t$ deletion errors happened when a $t$-deletion channel acts on a state, and if a QECC can recover the state after a $t$-deletion channel, it can correct up to $t$ deletion errors.

\begin{theorem}[\cite{Ouyang_2021_deletions}, \cite{aydin2024family}]\label{thrm:PI_deletion_equivalence}
    A PI QECC can correct arbitrary $t$-qudit errors if and only if it can correct up to 2$t$ deletion errors.
\end{theorem}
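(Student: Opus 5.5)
The plan is to chain \cref{cor:2tErasure-tPauli_equiv} with the fact that, on a PI code, correcting $2t$ erasures is the same as correcting $2t$ deletions. Since \cref{cor:2tErasure-tPauli_equiv} already says arbitrary $t$-qudit errors are correctable iff $2t$ erasure errors are, it suffices to show: \emph{a PI code $\C$ corrects every erasure pattern on at most $2t$ known positions iff it corrects the $2t$-deletion channel $\Del_{2t}^n$ for every distribution $p(I)$} (and likewise for fewer deletions).

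For the direction erasures $\Rightarrow$ deletions, I fix a set $I$ with $|I|=2t$ and let $\pi$ be a permutation of $[n]$ mapping $I$ to $\{1,\dots,2t\}$, with $P_\pi$ the corresponding unitary on $(\Co^{\DP})^{\otimes n}$. Every Dicke state $\ket{D^n_\lambda}$ is invariant under $P_\pi$, since $P_\pi$ maps $X^n_\lambda$ bijectively to itself. Hence $P_\pi\ket{c}=\ket{c}$ for all $\ket{c}\in\C$, and by linearity $P_\pi\rho P_\pi^\dagger=\rho$ for every $\rho$ supported on $\C$. Therefore $\Tr_I(\rho)=\Tr_{\{1,\dots,2t\}}(P_\pi\rho P_\pi^\dagger)=\Tr_{\{1,\dots,2t\}}(\rho)$, up to the canonical reordering of the remaining $n-2t$ systems. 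Since this reordering is the same fixed isomorphism for each $I$, I would state it carefully as: every $\Tr_I$ agrees with $\Tr_{\{1,\dots,2t\}}$ on states supported on $\C$. Consequently $\Del^n_{2t}(\rho)=\sum_I p(I)\Tr_I(\rho)=\Tr_{\{1,\dots,2t\}}(\rho)$, independently of $p$. Correcting erasures on the known set $\{1,\dots,2t\}$ gives a recovery map $\Rl$ with $\Rl\circ\Tr_{\{1,\dots,2t\}}=\mathrm{id}$ on $\C$, since an erasure channel on known positions is equivalent to tracing them out and appending the flag state $\ket{s}$. The same $\Rl$ then recovers from $\Del_{2t}^n$. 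For fewer than $2t$ deletions the argument is identical; alternatively, one can trace out extra qudits, but since the codomain dimension changes it is cleaner to repeat the argument for each $t'\le 2t$.

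For the converse, deletions $\Rightarrow$ erasures, I take an arbitrary erasure on a known set $J$ with $|J|\le 2t$. Its output is $\Tr_J(\rho)$ together with the classical flags, which equals $\Del^n_{|J|}(\rho)$ for the point distribution $p=\delta_J$. Discarding the location information, this equals $\Tr_{\{1,\dots,|J|\}}(\rho)$ by the argument above, so the recovery map for $|J|$ deletions corrects it. Finally I combine the two directions with \cref{cor:2tErasure-tPauli_equiv}.

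The main obstacle is bookkeeping rather than conceptual. One must identify erasure correction with recoverability from the partial trace on known positions; this is standard, since the flag states $\ket{s_i}$ are orthogonal and reveal $J$. One must also handle the relabelling of the surviving subsystems so that "the same channel" makes sense across different $I$. I would state explicitly that the output space of $\Tr_I$ is identified with $(\Co^{\DP})^{\otimes(n-|I|)}$ in increasing index order. Under this identification, permutation invariance makes all $\Tr_I$ with equal $|I|$ coincide on $\C$.
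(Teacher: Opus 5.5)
Your proposal is correct and follows the same route as the paper: permutation invariance makes every $\Tr_I$ with fixed $|I|$ act identically on code states. The deletion channel therefore reduces to erasure at known locations, and \cref{cor:2tErasure-tPauli_equiv} finishes the argument. Your version makes explicit several steps the paper leaves implicit: the permutation unitary $P_\pi$ fixing each Dicke state, the identification of the surviving subsystems, and both directions of the deletion--erasure equivalence.
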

\begin{proof}
Irrespective of the choice of the index set $I$ of length $t$, all weight-$t$ deletion maps $\Tr_I$ act identically on Dicke states due to their permutation invariance. Thus, when restricting the deletion map to the symmetric subspace, all reduced states $\Tr_I(\rho)$ in the convex sum are equal. Therefore, $\Del_t^n(\rho)$ is equivalent to acting with $\Tr_I(\rho)$ on some fixed index set $I$ with $|I|=t$, so we can treat the general $t$-deletion channel as if we are deleting at known locations. This implies that when acting on PI codes deletion errors are equivalent to erasure errors.

As established earlier in \cref{cor:2tErasure-tPauli_equiv}, a QECC can correct arbitrary $t$-qudit errors iff it can correct up to 2$t$ erasures. Because deletion and erasure errors are equivalent for PI codes, this concludes the proof.
\end{proof}


\section{Numerical study of qubit PI codes}\label{ch:Numerical_study}
In this section, we numerically investigate qubit PI codes. Based on evidence obtained by numerically minimizing a cost function defined in terms of the KL conditions from \cite{aydin2024family}, we conjecture that the minimal scaling of the block length with the weight $t$ of correctable errors is $n_{\rm min}(t) = 3t^2 + 3t + 1$.\footnote{Note that we switched from using the code distance $d$ to an equivalent description through the error weight $t$, with $d = 2t+1$.} We also conjecture that every real solution whose block length is minimal obeys a ``mirror'' and ``phase-flip'' symmetry, and that the family of real minimal qubit PI codes is a continuous manifold of dimension $t+1$. 

\subsection{Minimal block length scaling of qubit PI codes}\label{sec:qubit_PI_numeric_properties}

For the case of qubit PI codes, i.e.\ $\DP=\DL=2$, the KL conditions are~\cite{aydin2024family}:
\begin{subequations}\label{eqn:KL_PI_conds_qubits}\begin{align}
    &\text{C1: } \sum_{j=0}^n \bar{\alpha}_j \beta_j = 0, \; \; \sum_{j=0}^n |\alpha_j|^2 = \sum_{j=0}^n |\beta_j|^2 =1,\\
    &\text{C2: } \sum_{j=0}^n \bar{\alpha}_j \beta_{j-a+b} \frac{\binom{n-2t}{j-a}}{\sqrt{\binom{n}{j} \binom{n}{j-a+b}}} = 0, \; \; \forall a,b \in \N, \; 0 \leq a,b \leq 2t,\\
    &\text{C3: } \sum_{j=0}^n (\bar{\alpha}_j \alpha_{j-a+b} - \bar{\beta}_j \beta_{j-a+b}) \frac{\binom{n-2t}{j-a}}{\sqrt{\binom{n}{j} \binom{n}{j-a+b}}} = 0, \; \; \forall a,b \in \N, \; 0 \leq a,b \leq 2t,
\end{align}\end{subequations}
where we define $\frac{\binom{n-2t}{j-a}}{\sqrt{\binom{n}{j} \binom{n}{j-a+b}}}$ to be zero if any of the binomial coefficients contain factorials of negative numbers. For notational simplicity, here we denote the coefficients of the 0th codeword (see \cref{def:qudit_PI_codes}) as $\alpha_{j}:=\alpha_{0,(n-j,j)}$, and the coefficients of the 1st codeword as $\beta_{j}:=\alpha_{1,(n-j,j)}$. We also use $a:=\mu_1$ for $\mu=(\mu_1,\mu_2) = (2t-a,a)$, and similarly $b:=\nu_1$ for $\nu=(\nu_1,\nu_2) = (2t-b,b)$. 

\begin{definition}\label{def:minimal_PI_codes}
    For PI codes encoding a logical qubit into $n$ physical qubits and correcting all Pauli errors of weight $t$, the block length $n(t)$ is \textbf{minimal} if there exist no PI codes of block length smaller than $n$ correcting $t$ errors. We denote the minimal block length for a given number of errors $t$ as $n_{\rm min}(t)$. We call the subfamily of PI codes that for any $t \in \N$ have block length $n_{\rm min}(t)$ \textbf{minimal PI codes}.
\end{definition}

To determine $n_{\rm min}(t)$, we numerically search for solutions to the KL conditions of \cref{eqn:KL_PI_conds_qubits} by minimizing the following cost function:
\begin{align}\begin{split}
    f_{\rm cost}(\alpha,\beta) ={} & (\alpha^\dagger \alpha-1)^2 + (\beta^\dagger \beta-1)^2 + (\alpha^\dagger \beta)^2  \\ 
    &+ \sum_{a,b} \left( \sum_j b_{a,b,j} \bar{\alpha}_j \beta_{j-a+b}  \right)^2 + \sum_{a,b} \left(\sum_j b_{a,b,j} [\bar{\alpha}_j \alpha_{j-a+b} - \bar{\beta}_j \beta_{j-a+b}] \right)^2.
\end{split}\label{eq:f_cost}\end{align}
In words, the cost function is a sum of squares of the $2(2t+1)^2 +3$ identities in \cref{eqn:KL_PI_conds_qubits}. The minimization of $f_{\rm cost}$ is performed numerically in \texttt{Julia}~\cite{Julia-2017} using \texttt{Optim.jl}~\cite{mogensen2018optim}, see Ref.~\cite{GitHub_Visnevskyi} for scripts and data files. In general, the optimization landscape is non-convex, and therefore gradient descent methods can become trapped in local minima. To mitigate this issue we repeat the minimization up to $1000$ times, starting from different random initial guesses for the codewords. We consider a solution found when the cost function has (i) converged, which we impose by demanding that the gradient is $< 10^{-20}$, and (ii) sufficiently small, which we define as $f_{\rm cost} < 10^{-18}$. The optimization results for codewords with unconstrained complex coefficients for $t = 1$, $t=2$ and $t=3$ are shown in \cref{fig:NumericsOptimization}(a). For each of these error weights, solutions are first found for $n(t) = 7$, $n(t) = 19$ and $n(t) = 37$, respectively. Note that our minimal block length results agree with Ref.~\cite{aydin2024family}, who numerically find no PR codes correcting $t=2$ Pauli errors for $n(t=2)<19$. We therefore conclude that these are the minimal block lengths. 

\begin{figure}
    \centering
    \includegraphics{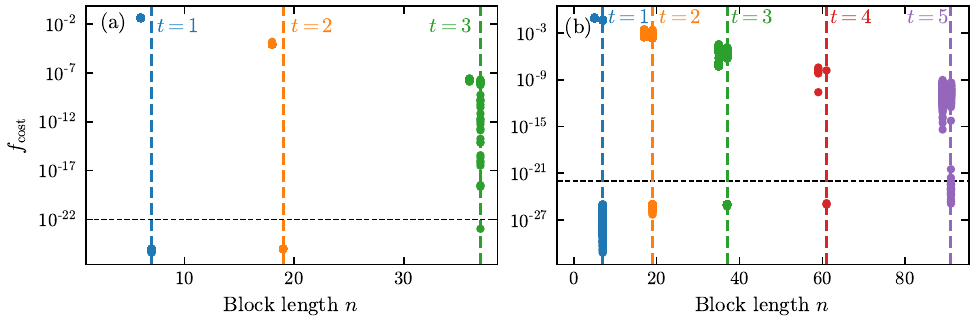}
    \caption{
    Numerical optimization of the cost function of \cref{eq:f_cost} with (a) codewords with complex coefficients and (b) codewords restricted to the PR family~\cite{pollatsek2004permutationally}. For each $n$ and $t$ the optimization is run up to $1000$ times (solid points), with each repetition using a maximum of $1.2\times10^6$ optimization steps. Vertical dashed lines show, for each $t$, the smallest $n$ for which a solution exists, i.e.\ when $f_{\rm cost} < 10^{-18}$ (black dashed line). This is the minimal block length, $n_{\rm min}(t)$. We find $n_{\rm min}(t=1)=7$, $n_{\rm min}(t=2)=19$ and $n_{\rm min}(t=3)=37$ for both codeword families. Due to the reduced computational resources required, we are able to obtain $n_{\rm min}(t=4)=61$ and $n_{\rm min}(t=5)=91$ for the PR code family.
    }
    \label{fig:NumericsOptimization}
\end{figure}

To obtain solutions for $t > 3$, we reduce the number of optimization variables from $2(n+1)$ to $(n+1)/2$ by restricting to the case of codewords in the Pollatsek--Ruskai (PR) family, which are defined by the conditions $\ket{c_1} = (\bigotimes_j X_j) \ket{c_0}$, $(\bigotimes_j Z_j) \ket{c_0} = \ket{c_0}$ and $(\bigotimes_j Z_j) \ket{c_1} = -\ket{c_1}$. The numerical optimization results are shown in \cref{fig:NumericsOptimization}(b). For $t=1,2,3$ we find solutions at $n_{\rm min}(t) = 7,19,37$, which is identical to the minimal block lengths obtained for the case of unrestricted complex coefficients. This implies that PR codes are minimal qubit PI codes, and that complex codeword coefficients offer no benefits in terms of block length scaling compared to real codeword coefficients. The reduced number of optimization parameters also allows us to find solutions for $t = 4$ and $t = 5$, at $n_{\rm min}(t) = 61$ and $n_{\rm min}(t) = 91$, respectively. 

\begin{figure}
    \centering
    \includegraphics{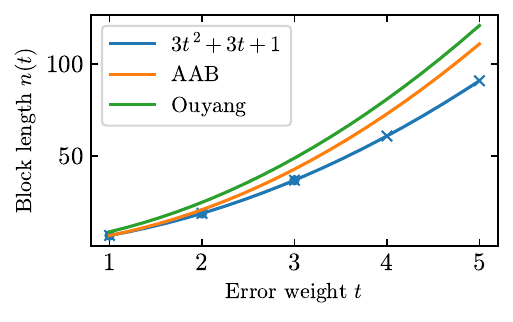}
    \caption{
    Block length $n(t)$ versus error weight $t$ for minimal PI codes with complex coefficients (blue dots) and for minimal PR PI codes~\cite{pollatsek2004permutationally} (blue crosses) determined numerically at $n_{\rm min} = 7,19,37,61,91$ for $t=1,2,3,4,5$, c.f.\ \cref{fig:NumericsOptimization}. The quadratic fit (solid blue line) to $n_{\rm min}(t) = 3t^2 + 3t+1$ demonstrates perfect agreement, which forms the basis of~\cref{con:min_PI_qubit_scaling}. We also plot the block length scaling of the AAB~\cite{aydin2024family} (orange line) and Ouyang~\cite{ouyang2014permutation} (green line) code families.
    }
    \label{fig:QubitQubitPICodes}
\end{figure}

In \cref{fig:QubitQubitPICodes} we summarize our numerical results by showing the minimal block length scaling $n_{\rm min}(t)$ with the error weight $t$. Fitting a quadratic function yields perfect agreement with $n_{\rm min}(t) = 3t^2 + 3t+1$. Existing qubit PI codes with analytically defined coefficients have block length scalings of $n(t) = 4t^2 + 4t + 1$~\cite{ouyang2014permutation} and $n(t) = 4t^2 + 2t + 1$~\cite{aydin2024family}. These analytic constructions therefore upper bound the minimal block length scaling. Conversely, the quantum Singleton bound~\cite{Knill_2000} implies a lower bound on the block length:
\begin{equation}
    \pazocal{H}_L \leq \DP^{n-2d+2}, \; t:=\lfloor\frac{d-1}{2}\rfloor
    \quad\Rightarrow\quad
    \pazocal{H}_L \leq \DP^{n-4t}
    \quad\Rightarrow\quad
    n \geq 4t + 1, \label{eq:Singleton}
\end{equation}
where the last equality follows from the fact that encoding at least one logical qudit requires the Hilbert space dimension $\pazocal{H}_L$ to be greater than one. Based on our numerical results of \cref{fig:QubitQubitPICodes} and further informed by these upper and lower bounds, we make the following conjecture:
\begin{conjecture}\label{con:min_PI_qubit_scaling}
    Minimal qubit PI codes have block length scaling $n_{\rm min}(t) = 3t^2+3t+1$. That is, all qubit PI quantum error-correcting codes have an upper bound on their code distance, $d \leq \sqrt{12n-3}/3$, that is saturated in minimal PI codes.
\end{conjecture}

Note that \cref{con:min_PI_qubit_scaling} implies that there are no good qLDPC~\cite{Breuckmann_2021, eczoo_good_qldpc} qubit PI codes, because even for qubit PI codes that encode a single logical qubit the best distance scaling is $d = O(\sqrt{n})$, and therefore linear asymptotic distance in qubit PI codes cannot be achieved if \cref{con:min_PI_qubit_scaling} holds.

\subsection{Codeword symmetries}\label{sec:codewordsymmetries}

For all real minimal qubit PI codes found numerically, we observe the following symmetry in the codeword coefficients: 
\begin{align}\label{eqn:mirror_phase_sym}
    \beta_i = (-&1)^{i} \alpha_{n_{\rm min}-i},
\end{align}
where $n_{\rm min}$ is the minimal block length for an error weight $t$, and $i \in [n+1]$. One example solution is shown in \cref{fig:PhaseFlipSymmetry}. We call the $|\beta_i| = |\alpha_{n-i}|$ symmetry \textbf{mirror symmetry}, and the $\sgn(\beta_i) = (-1)^{i} \sgn(\alpha_{n-i})$ \textbf{phase-flip symmetry}. We also identify that the $\beta$ (or $\alpha$) coefficients can differ by a global minus sign, i.e.\ given a solution $\beta$ (or $\alpha$), it follows immediately from the KL conditions of \cref{eqn:KL_PI_conds_qubits} that $-\beta$ (or $-\alpha$) is also a solution.  We note that the AAB~\cite{aydin2024family}, the Pollatsek--Ruskai~\cite{pollatsek2004permutationally} and the Ouyang~\cite{ouyang2014permutation} code families all obey the same symmetries as a consequence of the structure that they enforce by construction. Based on this numerical evidence, we make the following conjecture:
\begin{conjecture}\label{con:mirror_phase_sym}
    All minimal qubit PI codes with real coefficients exhibit both the mirror and the phase-flip symmetries \eqref{eqn:mirror_phase_sym} in their codewords.
\end{conjecture}
Note that this implies that minimal PI codes with real coefficients are defined by the $n+1$ coefficients of the first codeword only. 

\begin{figure}
    \centering
    \includegraphics[width=0.5\textwidth]{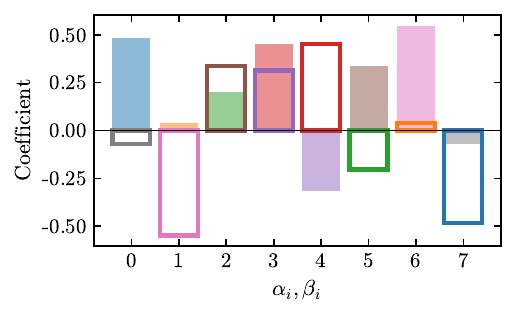}
    \caption{
    Codeword coefficients $\alpha_i$ for $\ket{c_0}$ (solid bars) and $\beta_i$ for $\ket{c_1}$ (hollow bars) for one example real minimal qubit PI code for block length $n = 7$ and error weight $t = 1$. The absolute value of the codeword coefficients have mirror symmetry with respect to reversing the composition labels, as emphasized by the bar coloring. We also observe that every second coefficient has the opposite sign compared to its mirror pair, starting at $\alpha_0$ (for $\ket{c_0}$) and $\beta_1$ (for $\ket{c_1}$). These mirror and phase-flip symmetries are defined in \cref{eqn:mirror_phase_sym}.
    }
    \label{fig:PhaseFlipSymmetry}
\end{figure}

\subsection{Codeword continuity}\label{sec:CodewordContinuity}

\begin{figure}
    \centering
    \includegraphics{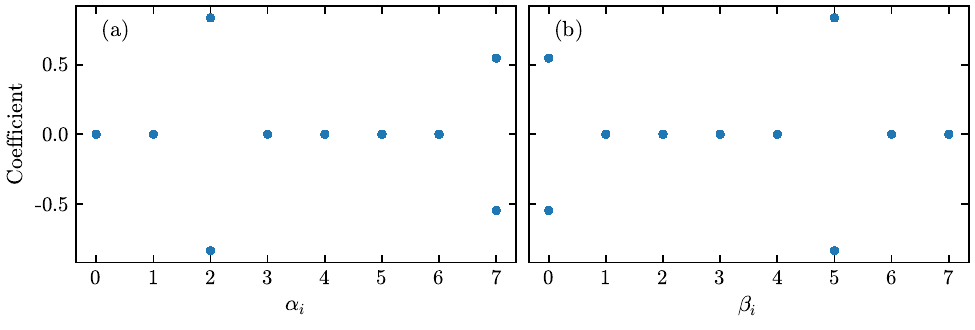}
    \caption{
    Codeword coefficients $\alpha_i$ for $\ket{c_0}$ (a) and $\beta_i$ $\ket{c_1}$ (b) for real minimal qubit PI codes with block length $n = 7$ and error weight $t = 1$. We fix $M = 2$ coefficients in the first codeword, i.e.\ $\alpha_0 = \alpha_1 = 0$. We numerically minimize the cost function of \cref{eq:f_cost} using the remaining $14$ codeword coefficients. Repeating the minimization $1000$ times from random initial choices for the non-fixed codeword coefficients, we converge to the same eight solutions shown here. A valid solution exists for both positive and negative signs of each of the $\alpha_2$, $\alpha_7$, $\beta_0$ and $\beta_5$, giving a total of eight solutions. 
    }
    \label{fig:ParamFixing}
\end{figure}

Next, we investigate the number of degrees of freedom of real minimal qubit PI codes. To do so, we fix $M$ coefficients in the $\ket{c_0}$ codeword, and numerically optimize the remaining $2(n+1)-M$ coefficients. For the case of $(n,t) = (7,1)$ with $M = 2$ coefficients fixed, after numerically minimizing the cost function in \cref{eq:f_cost} starting from $1000$ random initial guesses for the remaining (non-fixed) codeword coefficients, we observe that the optimization converges to finitely many solutions. In \cref{fig:ParamFixing} we plot the codewords found when fixing the first two parameters to zero, i.e.\ setting $\alpha_0 = \alpha_1 = 0$. Notice that the $\alpha_2$ and $\alpha_7$ coefficients are either positive or negative, and that due to the symmetry of~\cref{con:mirror_phase_sym} the $\beta_0$ and $\beta_5$ coefficients are also either positive or negative. We therefore find only eight solutions for this fixing of $\alpha_0 = \alpha_1 = 0$.

\begin{figure}
    \centering
    \includegraphics[width=0.95\linewidth]{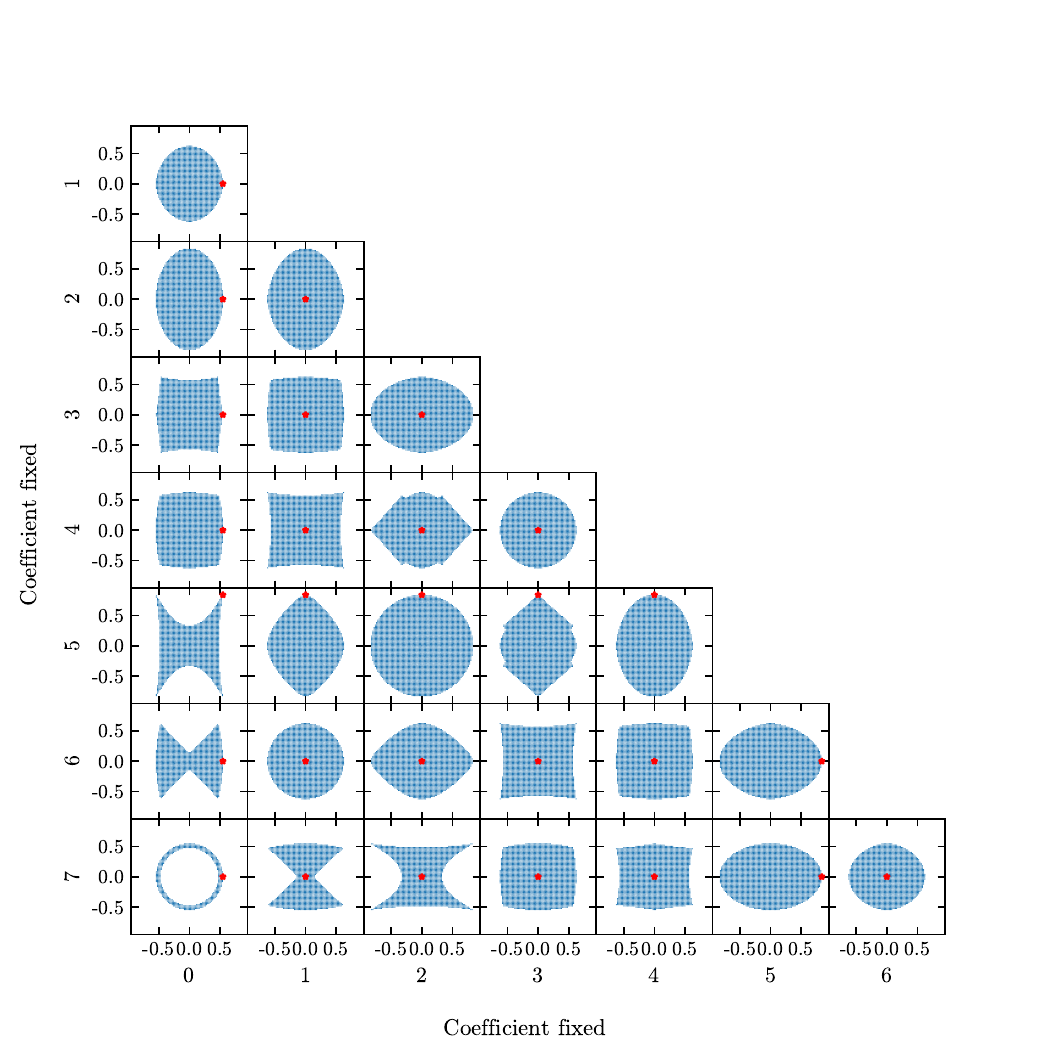}
    \caption{
    Table of all solution space projections on coordinate planes defined by the $26$ possible pairs of $\ket{c_0}$ codeword coefficient fixings, $\alpha_i,\alpha_j$ for $0 \leq i < j \leq 7$, found numerically for real minimal qubit PI codes with $(n,t)=(7,1)$. We do not enforce the phase-flip nor mirror symmetry. The grid search is performed only in the top left quadrant with $0 \leq \alpha_i,\alpha_j \leq 0.85$ and with the constraint $\alpha_i^2 + \alpha_j^2 \leq 1$. The remaining three quadrants are constructed using the symmetries observed in \cref{sec:codewordsymmetries}. The solution corresponding to the AAB code family of Ref.~\cite{aydin2024family} is indicated by the red star. 
    }
    \label{fig:Manifold_plots_table}
\end{figure}

We further investigate real minimal PI codes for $(n,t)=(7,1)$ with $M=2$ fixed coefficients by iterating through the $28$ possible choices for which two coefficients in $\ket{c_0}$ to fix. Given the $i$th and $j$th coefficients are fixed, we numerically minimize the cost function of \cref{eq:f_cost} while iterating $\alpha_i$ and $\alpha_j$ over a grid subject to the constraints $\alpha_i,\alpha_j \geq 0$ and $\alpha_i^2 + \alpha_j^2 \leq 1$ in a step size of $0.02$. The results are shown in \cref{fig:Manifold_plots_table}, with values of $\alpha_i,\alpha_j$ for which a solution exist represented by a point. We only search in the upper right quadrant of the plane spanned by $\alpha_i$ and $\alpha_j$ because we can immediately obtain a solution in the lower left quadrant using the fact that the sign of the fixed coefficients can be reversed due to the global minus sign symmetry of the KL conditions of \cref{eqn:KL_PI_conds_qubits}, which means that $\alpha \rightarrow -\alpha$ is also a solution. We obtain the remaining two quadrants due to the phase-flip and mirror symmetries of \cref{eqn:mirror_phase_sym}. Finally we note that the symmetry of flipping along the diagonal of the triangle plot in \cref{fig:Manifold_plots_table} is due to the qubit Dicke state symmetry under a global $\pi/2$ rotation about $X$, i.e.\ $\alpha_i \rightarrow \alpha_{n-i}$ and $\beta_i \rightarrow \alpha_{n-i}$. 

In \cref{fig:Manifold_plots_table}, we observe that there are densely packed (in the sense of the resolution of the numerical scan) regions in which we find solutions. Due to the finite resolution of our scan, this is an approximate reconstruction of the full space of real minimal qubit PI codes for $(n,t)=(7,1)$. The high density of the scan, and the fact that solutions are found at every grid point within certain boundary regions, suggests that the solution set may be continuous, and therefore we refer to it as a \textbf{solution manifold}. Note that this would imply that the family of minimal qubit PI codes is infinite. 

We also investigate the number of degrees of freedom for error weights $t=2$ and $t=3$ with $M=3$ and $M=4$ coefficients fixed to $\alpha_0,\alpha_1,\alpha_2 = 0$ and $\alpha_0,\alpha_1,\alpha_2,\alpha_3 = 0$, respectively. To reduce the computational cost, for these error weights we impose the mirror and phase-flip symmetries of \cref{eqn:mirror_phase_sym}. After repeating the numerical minimization $20\times10^4$ and $5.6 \times 10^4$ times for $(M,t) = (2,3)$ and $(M,t) = (3,4)$, we find eight and $16$ solutions, respectively. We plot the solutions in Appendix.~\ref{paramfixing_t2t3}. From these numerical examples with small $t$, we observe that $M=t+1$ coefficients should be fixed to undergo a transition from infinitely- to finitely-many solutions. That is, minimal qubit PI codes have $t+1$ continuous degrees of freedom. We therefore make the following conjecture:

\begin{conjecture}\label{con:algebraic_variety}
    For any fixed error weight $t$ and block length $n=n_{\rm min}(t)$, the family of \textbf{real} minimal qubit PI quantum error-correcting codes with parameters $(n,t)$ is a manifold of dimension $t+1$.
\end{conjecture}

Finally, we note that in some cases it is possible to find analytical solutions from the numerical ones. Focusing on cases where a certain number of codeword coefficients are close to zero, it is possible to reconstruct the non-zero ones exactly. For example, for $(n,t)=(7,1)$ we choose $(\alpha_0,\alpha_7) = (\frac{\sqrt{15}}{10},\frac{\sqrt{15}}{10})$ to be the top right point of the hollow disc in \cref{fig:Manifold_plots_table}. For this point we find that $\al_1=\al_3=\al_4=\al_6=0$. We finally reconstruct the non-zero values from the numerics as $\al_2=-\al_5=\frac{\sqrt{35}}{10}$.
While this approach allows us to obtain analytical solutions for the codewords from numerical results, it becomes challenging with an increasing number of (non-zero) variables and therefore we leave it as a matter for future study.

\section{Knill--Laflamme conditions for qudit PI codes}\label{ch:KL_qudit_gen}
In this section, we use the equivalence between erasure and deletion errors that is present in PI codes, extending the deletion channel construction for PI quantum codes given in \cite{aydin2024family} from qubits to qudits. Using this, we then derive the necessary and sufficient conditions for arbitrary qudit PI quantum codes to be error-correcting.

\subsection{Kraus set for a qudit quantum deletion channel}

Aydin, Alekseyev, and Barg~\cite{aydin2024family} use a Kraus decomposition for a $t$-deletion channel to derive the KL conditions for qubit PI codes. We extend this construction to qudit PI codes.

To derive the Kraus decomposition, in spirit of \cite{shibayama2021equivalence}, we want to define an operator with the following properties: for an index set $I=\{i_1,i_2,\dots,i_t\} \subseteq \{1,\dots,n\}$ of size $t$  with $i_1<i_2< \dots < i_t$, and a $t$-qudit computational basis state $\bra{x}$ with $x = x_1x_2\dots x_t \in \{0,1,\dots,q-1\}^t$, we have the operator $A_{I,x}^n$ acting on $n$ qudits that has the following tensor product structure
\begin{align}\label{eqn:A_prop_1}
    A_{I,x}^n = A_{I,1} \otimes A_{I,2}\otimes \dots \otimes A_{I,n}, 
    \end{align}
where each $A_{I,j}$ is defined as,
    \begin{equation}\label{eqn:A_prop_2}
        A_{I,j} = 
        \begin{cases}
            \bra{x_k}, & j=i_k \in I; \\
            \id_q, &j \notin I.
        \end{cases}
    \end{equation}
For example, $A_{I,i_3} = \bra{x_3}$, and $A_{I,k} = \id_q$ if $k \notin I$. Then, to give a somewhat informal definition for $A_{I,x}^n$ satisfying \cref{eqn:A_prop_1,eqn:A_prop_2}, we have:
\begin{definition}
    Let $I \subseteq \{1,\dots,n\}$ and $x \in \{0,1,\dots,q-1\}^{|I|}$. We define $A_{I,x}^n : \Co^{q^n} \to \Co^{q^{n-|I|}}$ as the operator that projects qudits located at positions $I$ to the standard basis state $\bra{x}$:
     \begin{equation}
        A_{I,x}^n := \bra{x}_I \otimes \id_{\{1,\dotsc,n\} \setminus I},
    \end{equation}
    where terms in the tensor product need to be rearranged appropriately. In this particular instance, by $\id_{\{1,\dotsc,n\} \setminus I}$ we mean an identity of size $q*(n-|I|)$ that is acting on appropriately rearranged positions.
\end{definition}

Similarly to the qubit case  \cite{shibayama2021equivalence}, for qudits $A_{I,x}^n$ also has a decomposition:
\begin{lemma}\label{lem:Kraus_op_decomp}
    The operator $A_{I,x}^n$ can be decomposed as:
    \begin{equation}
        A_{I,x}^n = A_{i_1,x_1}^{n-t+1} \circ A_{i_2,x_2}^{n-t+2} \circ \dots \circ A_{i_t,x_t}^{n}.
    \end{equation}
\end{lemma}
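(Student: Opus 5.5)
The plan is to prove the identity by checking both sides on product basis vectors. Both sides are linear maps from $(\Co^q)^{\otimes n}$ to $(\Co^q)^{\otimes (n-t)}$, so it suffices to compare them on computational basis states $\ket{y} = \ket{y_1}\otimes\cdots\otimes\ket{y_n}$. By definition, $A^n_{I,x}\ket{y} = \prod_{k=1}^t \braket{x_k|y_{i_k}}\, \ket{y_{\{1,\dots,n\}\setminus I}}$. Here $y_{\{1,\dots,n\}\setminus I}$ is the string $y$ with the positions in $I$ removed, in their original order. The goal is to show that the composition on the right-hand side produces the same scalar and the same remaining string.

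The main step is an induction on $t$, applying the factors from right to left. The rightmost factor $A^{n}_{i_t,x_t}$ acts on all $n$ qudits. It contracts position $i_t$ with $\bra{x_t}$, giving the scalar $\braket{x_t|y_{i_t}}$ and an $(n-1)$-qudit string with entry $i_t$ removed. The key observation is that $i_1<\dots<i_{t-1}<i_t$. Deleting position $i_t$ therefore shifts only positions $j>i_t$ down by one, and the remaining indices $i_1,\dots,i_{t-1}$ still label the same physical qudits in the shortened register. So the next factor $A^{n-1}_{i_{t-1},x_{t-1}}$ acts on the correct qudit. The same reasoning applies at step $k$: the register has $n-t+k$ qudits, which matches the superscript in the statement, and the indices $i_1,\dots,i_k$ are unaffected by the earlier deletions because all earlier deletions were at larger positions.

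Formally, I would define $I' = \{i_1,\dots,i_{t-1}\}$ and $x' = x_1\cdots x_{t-1}$. I would then show $A^n_{I,x} = A^{n-1}_{I',x'}\circ A^n_{i_t,x_t}$ by evaluating both sides on $\ket{y}$, and finish by applying the induction hypothesis to $A^{n-1}_{I',x'}$. The base case $t=1$ is trivial.

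The only delicate point is the bookkeeping of positions after deletion, which the paper itself describes as ``rearranged appropriately.'' I would make it precise by defining the order-preserving bijection $\{1,\dots,n\}\setminus\{i_t\}\to\{1,\dots,n-1\}$ and noting that it fixes every $j<i_t$, which includes all of $I'$. With that in place, the remaining verification is just equality of products of scalars together with equality of the surviving tensor factors.
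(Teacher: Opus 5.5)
Your proposal is correct and follows essentially the same route as the paper, which writes each factor as $\bra{x_k}_{i_k}\otimes\id$ on the remaining positions and composes them to get $\bra{x}_I\otimes\id_{\{1,\dots,n\}\setminus I}$. Your induction on $t$ with the order-preserving relabelling makes explicit the bookkeeping the paper leaves implicit: because $i_1<\dots<i_t$, deleting the larger positions first never shifts the labels $i_1,\dots,i_k$ of the factors still to be applied.
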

\begin{proof}
    Straightforward computation, see proof in Appendix~\ref{apx:proofs}.
\end{proof}
\noindent Now we observe a Kraus decomposition:
\begin{lemma}
    We have the following Kraus decomposition for the $t$-deletion channel:
    \begin{equation}
        \Del_t^n(\rho) = \sum_{x \in \{0,\dots,q-1\}^t; \; I:|I|=t} p(I)  A_{I,x}^n \rho A_{I,x}^{n \dagger},
    \end{equation}
    with $\sqrt{p(I)} A_{I,x}^n$ -- the Kraus operators.
\end{lemma}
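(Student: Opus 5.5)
The plan is to reduce the statement to a single deletion map and then sum over index sets. By definition $\Del_t^n(\rho)=\sum_{I:|I|=t}p(I)\Tr_I(\rho)$, and the right-hand side of the claim is linear in the weights $p(I)$. It therefore suffices to show, for each fixed index set $I$ with $|I|=t$, that
\begin{equation*}
    \Tr_I(\rho)=\sum_{x\in\{0,\dots,q-1\}^t} A_{I,x}^n\,\rho\,A_{I,x}^{n\dagger}.
\end{equation*}
Multiplying this identity by $p(I)$ and summing over $I$ gives the stated decomposition. The factor $\sqrt{p(I)}$ is then absorbed into the Kraus operators.

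The identity for fixed $I$ is the standard fact that a partial trace is obtained by summing over a basis of the traced subsystem. After reordering tensor factors we write $(\Co^{q})^{\otimes n}=\pazocal{H}_I\otimes\pazocal{H}_{I^c}$. By definition $A_{I,x}^n=\bra{x}_I\otimes\id_{I^c}$, and $\{\ket{x}\}_{x\in[q]^t}$ is the computational basis of $\pazocal{H}_I$. For a product operator $\rho=\sigma_I\otimes\tau_{I^c}$ we get
\begin{equation*}
    \sum_x A_{I,x}^n(\sigma\otimes\tau)A_{I,x}^{n\dagger}=\sum_x\bra{x}\sigma\ket{x}\,\tau=\Tr(\sigma)\,\tau=\Tr_I(\sigma\otimes\tau).
\end{equation*}
Both sides are linear in $\rho$ and product operators span the full operator space, so the identity holds for all $\rho$. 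An alternative route uses \cref{lem:Kraus_op_decomp} and induction on $t$. The case $t=1$ is the single-qudit computation above, and composing one-qudit traces $\Tr_{i_1}\circ\dots\circ\Tr_{i_t}$ gives nested sums that combine into a single sum over $x=x_1\dots x_t$. That route matches the paper's definition of $\Tr_I$ as an iterated composition.

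Finally we check completeness of the Kraus set:
\begin{equation*}
    \sum_{I,x}p(I)A_{I,x}^{n\dagger}A_{I,x}^n=\sum_I p(I)\,\Bigl(\sum_x\ket{x}\bra{x}_I\Bigr)\otimes\id_{I^c}=\sum_I p(I)\,\id=\id.
\end{equation*}
This confirms that the operators form a valid trace-preserving Kraus representation.

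The only real subtlety is bookkeeping. When the iterated-trace definition is used, the positions $i_k$ must be interpreted consistently after earlier qudits have been removed. This is exactly what the superscripts $n-t+k$ in \cref{lem:Kraus_op_decomp} encode. I would avoid this issue by working directly with the bipartition $I$ versus $I^c$ rather than with the composition of maps.
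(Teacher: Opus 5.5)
Your proof is correct, and its outer structure is exactly the paper's: establish $\Tr_I(\rho)=\sum_x A_{I,x}^n\rho A_{I,x}^{n\dagger}$ for each fixed $I$, then take the convex combination over $I$.

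The difference is in the fixed-$I$ step. The paper expands $\Tr_I=\Tr_{i_1}\circ\dots\circ\Tr_{i_t}$ as nested single-qudit basis sums, then merges $A_{i_1,x_1}^{n-t+1}\circ\dots\circ A_{i_t,x_t}^{n}$ into $A_{I,x}^n$ using \cref{lem:Kraus_op_decomp}; this is your \emph{alternative} route. Your main argument instead verifies the identity on product operators $\sigma_I\otimes\tau_{I^c}$ and extends by linearity, so that lemma is never needed.

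The cost is that you tacitly identify the paper's iterated $\Tr_I$ with the partial trace over the subsystem $I$. This holds because $\Tr_{i_t}$ acts first and $i_1<\dots<i_t$, so removing a higher-index qudit never relabels a lower one, and partial traces over disjoint factors compose. It deserves one explicit sentence rather than being deferred to the remark about bookkeeping.

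Conversely, the paper's route keeps the single-qudit factorization of $A_{I,x}^n$ explicit. That factorization is what it uses immediately afterwards to write the Kraus operators on Dicke states as products of $k$-type deletions.

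Your completeness check $\sum_{I,x}p(I)A_{I,x}^{n\dagger}A_{I,x}^n=\id$ is a correct addition that the paper omits.
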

\begin{proof}
    A simple calculation gives:
    \begin{equation}
        \Tr_I (\rho) = \sum_{x \in \{0,\dots,q-1\}^t} A_{i_1,x_1}^{n-t+1} \circ \dots \circ A_{i_t,x_t}^{n} \rho A_{i_t,x_t}^{n \dagger} \circ \dots \circ A_{i_1,\ket{x_1}}^{n-t+1 \dagger} = \sum_{x \in \{0,\dots,q-1\}^t} A_{I,x}^n \rho A_{I,x}^{n \dagger},
    \end{equation}
    which, when introducing a convex combination over $\Tr_I$, implies the statement of the lemma, concluding the proof. 
\end{proof}
Although $\sqrt{p(I)} A_{I,x}^n$ are the actual Kraus operators, we will often refer to the normalized $A_{I,x}^n$ as the Kraus operators.

\begin{definition}
    A $k$-type deletion operator applied to the $i$-th qudit in an $n$-qudit system is the operator:
    \begin{equation}
        \Chi_{i,k} = A_{i,k}^n,
    \end{equation}
\end{definition}
\noindent so that the action of $\Chi_{i,k}$ is:
\begin{equation}
    \Chi_{i,k} \ket{x} = \braket{k|x_i}\ket{\cancel{x}_i},
\end{equation}
where $\cancel{x}_i$ is the string $x$ with the $i$-th entry deleted, i.e.\ for a size $n$ string $x = (x_1,\dots x_{i-1}, x_i, x_{i+1},\dots, x_n)$ we have $\cancel{x}_i:= (x_1,\dots, x_{i-1}, x_{i+1},\dots, x_n)$. Note that the index $n$ is unnecessary in $\Chi_{i,k}$ since the index set is always of size $1$.

\begin{lemma}\label{lem:k-del_on_Dicke_state}
    The action of a $k$-deletion $\Chi_{i,k}$ on a Dicke state $\ket{D^n_{\l}}$ is:
    \begin{equation}
        \Chi_{i,k} \ket{D_{\l}^n} =  \sqrt{\frac{\binom{n-1}{\l-e_k}}{\binom{n}{\l}}} \ket{D_{\l-e_k}^{n-1}},
    \end{equation}
    where $\l - e_k =(\l_0, \dots, \l_k -1, \dots, \l_{q-1})$ for $e_k$ the $k$-th standard basis vector. 
\end{lemma}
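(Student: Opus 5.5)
We prove the lemma by expanding the Dicke state in the computational basis and applying $\Chi_{i,k}$ term by term, using only its definition $\Chi_{i,k}\ket{x} = \braket{k|x_i}\ket{\cancel{x}_i}$ and \cref{def:Dicke_state}. Writing
\begin{equation*}
    \Chi_{i,k}\ket{D_\l^n} = \binom{n}{\l}^{-1/2} \sum_{x \in X_\l^n} \braket{k|x_i}\ket{\cancel{x}_i},
\end{equation*}
the factor $\braket{k|x_i}$ kills every string with $x_i \neq k$. So the sum runs only over $x \in X^n_\l$ with $x_i = k$. (If $\l_k = 0$ this set is empty, and $\binom{n-1}{\l-e_k}$ is zero by the convention for negative factorials, so the identity holds trivially. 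We assume $\l_k\geq 1$ from now on.)

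The key step is a bijection. Consider the map $x \mapsto \cancel{x}_i$ from $\{x \in X_\l^n : x_i = k\}$ to $X^{n-1}_{\l-e_k}$.
\begin{itemize}
    \item It is well defined: deleting one entry equal to $k$ lowers $N_k$ by one and leaves the other counts unchanged, so $w(\cancel{x}_i) = \l - e_k$.
    \item Its inverse inserts the symbol $k$ back at position $i$. This inverse is unique because the deleted position $i$ is fixed, so the map is a bijection.
\end{itemize}
Hence
\begin{equation*}
    \sum_{x \in X_\l^n,\, x_i=k}\ket{\cancel{x}_i} = \sum_{y \in X^{n-1}_{\l-e_k}} \ket{y} = \sqrt{\tbinom{n-1}{\l-e_k}}\,\ket{D^{n-1}_{\l-e_k}},
\end{equation*}
where the last equality is again \cref{def:Dicke_state}, now for $n-1$ qudits. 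Multiplying by the prefactor $\binom{n}{\l}^{-1/2}$ gives exactly the claimed coefficient $\sqrt{\binom{n-1}{\l-e_k}/\binom{n}{\l}}$.

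As a consistency check, the coefficient simplifies to $\sqrt{\l_k/n}$. Summing its square over $k$ gives $1$, as completeness of the Kraus operators requires. The result is independent of $i$, matching the permutation-invariance argument in \cref{thrm:PI_deletion_equivalence}.

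The only point needing care is the bijection, in particular that it needs no multiplicity factor. This holds precisely because the deletion position $i$ is fixed rather than summed over.
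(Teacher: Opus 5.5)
Your proof is correct and follows the same route as the paper: expand $\ket{D^n_\l}$ in the computational basis, keep only the strings with $x_i = k$, and identify the resulting sum with $\sqrt{\binom{n-1}{\l-e_k}}\,\ket{D^{n-1}_{\l-e_k}}$. You state explicitly the bijection $x \mapsto \cancel{x}_i$ and the $\l_k = 0$ case, which the paper leaves implicit; for the latter case the paper uses the convention that a Dicke state with a negative composition entry is zero.
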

\begin{proof}
     \begin{align}
        \Chi_{i,k} \ket{D_{\l}^n} &= \frac{1}{\sqrt{\binom{n}{\lambda}}} \sum_{x=(x_1,\dots,x_n) \in X_\lambda^n} \braket{k|x_i}  \ket{x_1}\otimes \cdots \otimes \ket{x_{i-1}}\otimes \ket{x_{i+1}} \otimes \cdots \otimes\ket{x_n} \\
        &= \frac{1}{\sqrt{\binom{n}{\lambda}}} \sum_{\substack{x=(x_1,\dots,x_{n-1}) \in X_{\l-e_k}^{n-1}}} \ket{x_1}\otimes \ket{x_2} \otimes \cdots \otimes\ket{x_{n-1}} \\ 
        &= {\frac{\sqrt{\binom{n-1}{\l-e_k}}}{\sqrt{\binom{n}{\l}}}} \ket{D_{\l-e_k}^{n-1}}
    \end{align}
    where, by convention, we say that the Dicke state is zero if the composition denoting this Dicke state contains any negative entries.
\end{proof}
\noindent
Note that when acting with any $k$-type deletion on a Dicke state, the index $i$ in $\Chi_{i,k}$ is irrelevant. Since $i$ denotes a particular qudit which is being deleted and Dicke states are permutation-invariant, the outcome of a $k$-deletion is the same for all indices $i$. \\
Since we will be mostly discussing errors happening on PI and Dicke states, it is convenient to remove the indexation in $i$ when discussing the $k$-type deletions, and denote them as $\Chi_k$ instead.
\begin{lemma}
    For a Dicke state $\ket{D_{\l}^n}$, $\forall a \in \{1,\dots,n\}$:
    \begin{equation}
        \left(\Chi_k\right)^a \ket{D_{\l}^n} := \underbrace{\Chi_k \circ \cdots \circ \Chi_k}_a \ket{D_{\l}^n} = \sqrt{\frac{\binom{n-a}{(\l_0, \dots, \l_k - a, \dots, \l_{q-1})}}{\binom{n}{\l}}} \ket{D_{(\l_0, \dots, \l_k - a, \dots, \l_{q-1})}^{n-a}}
    \end{equation}
\end{lemma}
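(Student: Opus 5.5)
The natural approach is induction on $a$, using \cref{lem:k-del_on_Dicke_state} as the one-step rule. Write $\l^{(a)} := (\l_0,\dots,\l_k-a,\dots,\l_{q-1})$, so that $\l^{(a)} \partition{q} n-a$ and $\l^{(a+1)} = \l^{(a)} - e_k$. The base case $a=1$ is exactly \cref{lem:k-del_on_Dicke_state}. We can also start the induction at $a=0$, where the identity holds trivially because the prefactor is $1$.

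For the inductive step, assume
\[
\left(\Chi_k\right)^a \ket{D_{\l}^n} = \sqrt{\binom{n-a}{\l^{(a)}}\Big/\binom{n}{\l}}\;\ket{D_{\l^{(a)}}^{n-a}} .
\]
Apply one more $\Chi_k$. Since $\Chi_k$ is linear, the scalar factors out. Applying \cref{lem:k-del_on_Dicke_state} to $\ket{D_{\l^{(a)}}^{n-a}}$, with $n$ replaced by $n-a$ and $\l$ replaced by $\l^{(a)}$, gives the factor
\[
\sqrt{\binom{n-a-1}{\l^{(a+1)}}\Big/\binom{n-a}{\l^{(a)}}}
\]
times $\ket{D_{\l^{(a+1)}}^{n-a-1}}$. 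Multiplying the two square roots, the intermediate multinomial $\binom{n-a}{\l^{(a)}}$ cancels and leaves $\sqrt{\binom{n-a-1}{\l^{(a+1)}}/\binom{n}{\l}}$, which is the claim for $a+1$. This telescoping is the heart of the argument. Alternatively, one can check it directly from the factorial form: $\binom{n-a}{\l^{(a)}}/\binom{n}{\l} = \frac{(n-a)!\,\l_k!}{n!\,(\l_k-a)!}$.

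The only step needing care is when $a > \l_k$, so that $\l^{(a)}$ has a negative entry. Here I would invoke the convention from the proof of \cref{lem:k-del_on_Dicke_state}: a Dicke state labelled by a composition with a negative entry is zero, and the multinomial coefficient with a negative entry is likewise read as zero, as in the paper's convention for the qubit KL conditions. Once some step produces the zero vector, linearity of $\Chi_k$ keeps every later step at zero. The right-hand side is also zero for every $a > \l_k$, so both sides agree. I would also note that applying $\Chi_k$ at each step to the $(n-a)$-qudit Dicke state is well defined for $a\le n$, because by the remark following \cref{lem:k-del_on_Dicke_state} the deleted position does not matter.
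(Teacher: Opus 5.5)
Your proposal is correct and follows the paper's approach: the paper's proof is simply ``follows from \cref{lem:k-del_on_Dicke_state} by induction,'' which is exactly your induction with the telescoping cancellation of the intermediate multinomial $\binom{n-a}{\l^{(a)}}$. Your handling of the case $a>\l_k$ (the zero-state convention plus linearity) makes explicit a detail the paper leaves implicit.
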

\begin{proof}
    Follows from \cref{lem:k-del_on_Dicke_state} by induction.
\end{proof}

\begin{lemma}
    Deletion operators commute when acting on Dicke states.
\end{lemma}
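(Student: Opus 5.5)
We prove the statement by computing both orders of composition explicitly with \cref{lem:k-del_on_Dicke_state} and checking that they agree. Fix $k,k' \in [q]$ and a Dicke state $\ket{D^n_\l}$. Applying $\Chi_k$ first and then $\Chi_{k'}$ (on the resulting $(n-1)$-qudit Dicke state) gives, by two applications of \cref{lem:k-del_on_Dicke_state},
\begin{equation}
    \Chi_{k'}\Chi_k \ket{D^n_\l} = \sqrt{\frac{\binom{n-1}{\l-e_k}}{\binom{n}{\l}}}\sqrt{\frac{\binom{n-2}{\l-e_k-e_{k'}}}{\binom{n-1}{\l-e_k}}}\,\ket{D^{n-2}_{\l-e_k-e_{k'}}} = \sqrt{\frac{\binom{n-2}{\l-e_k-e_{k'}}}{\binom{n}{\l}}}\,\ket{D^{n-2}_{\l-e_k-e_{k'}}}.
\end{equation}
The intermediate multinomial cancels (telescoping), and the final expression is symmetric in $k$ and $k'$ because $e_k+e_{k'}=e_{k'}+e_k$. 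Hence $\Chi_{k'}\Chi_k\ket{D^n_\l}=\Chi_k\Chi_{k'}\ket{D^n_\l}$.

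The only delicate step is the cancellation when an intermediate Dicke state vanishes, e.g.\ when $\l_k=0$. Here we use the convention that a Dicke state with a negative composition entry is zero and the associated multinomial is zero. We check that both sides vanish in exactly the same cases: the final composition $\l-e_k-e_{k'}$ has a negative entry iff $\l_k=0$ or $\l_{k'}=0$ (or $\l_k<2$ when $k=k'$, where commutation is trivial anyway). Whenever the intermediate $\l-e_k$ is invalid, so is the final composition, so both orders give zero and no division by zero is actually needed. We handle this case separately rather than through the telescoped formula.

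Finally, the statement extends from pairs of deletions to arbitrary products of deletion operators, and to the operators $A^n_{I,x}$ of \cref{lem:Kraus_op_decomp}, by induction on the number of deletions. Adjacent transpositions generate all reorderings, and each transposition is justified by the two-operator case above, applied to the Dicke state produced by the preceding deletions (which is again a Dicke state, up to a scalar, by \cref{lem:k-del_on_Dicke_state}). By linearity, commutation then holds on the whole symmetric subspace, and hence on PI codewords. Closed form, the result is $\sqrt{\binom{n-t}{\l-\mu}/\binom{n}{\l}}\,\ket{D^{n-t}_{\l-\mu}}$, which depends only on the weight $\mu=w(x)$ of the deleted string.
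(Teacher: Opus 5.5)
Your proposal is correct and follows essentially the same route as the paper. Both arguments apply \cref{lem:k-del_on_Dicke_state} twice, let the intermediate multinomial cancel, and observe that the result $\sqrt{\binom{n-2}{\lambda-e_k-e_{k'}}/\binom{n}{\lambda}}\,\ket{D^{n-2}_{\lambda-e_k-e_{k'}}}$ is symmetric in $k$ and $k'$. Your separate treatment of the vanishing (negative-entry) cases and your extension to longer products by adjacent transpositions are extra care that the paper leaves implicit.
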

\begin{proof}
    Consider a $k$-type deletion $\Chi_k$ and a $l$-type deletion $\Chi_l$, where without loss of generality we have $k<l$ then:
    \begin{equation*}
        \Chi_k \circ \Chi_l \ket{D_{\l}^n} = \sqrt{\frac{\binom{n-2}{\l-e_l-e_k}}{\binom{n}{\l}}} D_{\l-e_l-e_k}^{n-2} =  \sqrt{\frac{\binom{n-2}{\l-e_k-e_l}}{\binom{n}{\l}}} D_{\l-e_k-e_l}^{n-2} = \Chi_l \circ \Chi_k \ket{D_{\l}^n}.\qedhere
    \end{equation*}
\end{proof}
From the lemmas above, it is clear that a general Kraus operator $A_{I,\bra{x}}^n$ decomposes into a product of $k$-type deletions (for different $k$) when acting on Dicke states. For a qudit $t$-deletion channel acting on states with local dimension $q$, this decomposition has the following form:
\begin{equation}
    (\Chi_{q-1})^{\mu_{q-1}} \circ (\Chi_{q-2})^{\mu_{q-2}} \circ \cdots \circ (\Chi_{1})^{\mu_{1}} \circ (\Chi_{0})^{\mu_{0}} \; \mathlarger{\mathlarger{\vert}} \; \mu \partition{q} t,
\end{equation}
where from now on we will label:
\begin{equation}
    E_{\mu}^q = (\Chi_{q-1})^{\mu_{q-1}} \circ \cdots \circ (\Chi_{0})^{\mu_{0}},
\end{equation}
and further on we call $E_{\mu}^q$ \textit{deletion errors}. Therefore, when acting on Dicke states we can write the Kraus set for a qudit $t$-deletion channel acting on states with local dimension $q$ as:
\begin{equation}\label{eqn:del_err_Kraus_set_Dicke}
    \mathcal{E}_t^q = \{E_{\mu}^q\ \vert \; \mu \partition{q} t \}.
\end{equation}
\begin{lemma}\label{lem:del_err_Dicke}
     For a Dicke state $\ket{D_{\l}^n}$ of local dimension $q$, and for $\l \partition{q} n, \mu \partition{q} t, \l - \mu \geq 0$:
    \begin{equation}
        E_{\mu}^q \ket{D^n_{\l}} = \sqrt{\frac{\binom{n-t}{\l - \mu}}{\binom{n}{\l}}} \ket{D_{\l - \mu}^{n-t}},
    \end{equation}
    where $\l - \mu$ denotes the entry-wise difference of $\l$ and $\mu$, and $\l - \mu \geq 0$ means that each entry in the resulting vector is non-negative. The action of deletion errors on Dicke states gives zero if $\l - \mu \geq 0$ doesn't hold.
\end{lemma}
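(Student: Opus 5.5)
The plan is to apply the single-symbol deletion formula of \cref{lem:k-del_on_Dicke_state}, in the iterated form of the subsequent lemma for $(\Chi_k)^a$, once for each symbol type in the product $E_\mu^q = (\Chi_{q-1})^{\mu_{q-1}} \circ \cdots \circ (\Chi_0)^{\mu_0}$. I would induct on the number of types $k$ with $\mu_k>0$ (equivalently, peel off one block $(\Chi_k)^{\mu_k}$ at a time, starting with $k=0$). After applying the blocks for types $0,\dots,k-1$, the induction hypothesis gives
\[
\sqrt{\tbinom{n-s}{\l-\mu^{(k)}}\big/\tbinom{n}{\l}}\;\ket{D^{n-s}_{\l-\mu^{(k)}}},
\]
where $\mu^{(k)} = (\mu_0,\dots,\mu_{k-1},0,\dots,0)$ and $s=\mu_0+\dots+\mu_{k-1}$.

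Applying $(\Chi_k)^{\mu_k}$ to this state multiplies it by
\[
\sqrt{\tbinom{n-s-\mu_k}{\l-\mu^{(k+1)}}\big/\tbinom{n-s}{\l-\mu^{(k)}}}.
\]
The intermediate multinomial $\binom{n-s}{\l-\mu^{(k)}}$ then cancels, so the product of square roots telescopes. After the final block we obtain exactly $\sqrt{\binom{n-t}{\l-\mu}/\binom{n}{\l}}\,\ket{D^{n-t}_{\l-\mu}}$, since $\sum_k \mu_k = t$. The order of the blocks is immaterial because deletion operators commute on Dicke states, which justifies fixing the order used in the definition of $E_\mu^q$.

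For the vanishing claim: if $\l_k<\mu_k$ for some $k$, then at some point within the block $(\Chi_k)^{\mu_k}$ we apply $\Chi_k$ to a Dicke state whose $k$-th entry is $0$. By the convention in \cref{lem:k-del_on_Dicke_state}, this yields a Dicke state indexed by a composition with a negative entry, which is zero. Linearity then keeps everything zero afterwards.

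The only delicate point is bookkeeping rather than substance. One must check that the $(\Chi_k)^a$ lemma is applied with the correct current block length $n-s$, and that the zero convention is consistent: the formula's prefactor should be read as $0$ whenever $\l-\mu\not\ge 0$, rather than as an ill-defined ratio.
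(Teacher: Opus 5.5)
Your proposal is correct and follows the same route as the paper: both apply the iterated single-type formula for $(\Chi_k)^{a}$ block by block to $E_\mu^q = (\Chi_{q-1})^{\mu_{q-1}} \circ \cdots \circ (\Chi_0)^{\mu_0}$, and the paper collapses the product of square roots in one step. Your explicit telescoping of the intermediate multinomials, and your separate treatment of the vanishing case (some $\Chi_k$ acting on a Dicke state whose $k$-th entry is already zero), spell out details the paper leaves implicit.
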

\begin{proof}
    Straightforward, see Appendix~\ref{apx:proofs}.
\end{proof}

\subsection{Necessary and sufficient error-correction conditions}
Using the previous lemmas we can now derive KL conditions for PI codes to correct arbitrary weight-$t$ errors. As a result of \cref{thrm:PI_deletion_equivalence}, we do this using the Kraus operators for the $2t$ deletion channel, i.e.\ we use deletion errors from the Kraus set in  \cref{eqn:del_err_Kraus_set_Dicke}. We use the codewords defined as in \cref{def:qudit_PI_codes}.
\begin{theorem}\label{thrm:PI_KL_conds_qudits}
    The KL conditions for a qudit PI code with parameters $((n,\DL^1,d))_{\DP}$ correcting $t = \lfloor\frac{d-1}{2}\rfloor$ errors are:
    \begin{subequations}
    \label{eqn:KL_qudits}
    \begin{align}
        &\text{C1: } \sum_{\l \partition{\DP} n} \bar{\alpha}_{i, \l} \alpha_{j, \l} = \delta_{i,j}, \\
        &\text{C2: } \sum_{\l \partition{\DP} n} \bar{\alpha}_{i, \l} \alpha_{j, \l-\mu+\nu} \frac{\binom{n-2t}{\l-\mu}}{\sqrt{\binom{n}{\l} \binom{n}{\l-\mu+\nu}}} = 0, \; \;  i\neq j, \\
        &\text{C3: } \sum_{\l \partition{\DP} n} (\bar{\alpha}_{i, \l} \alpha_{i, \l-\mu+\nu} - \bar{\alpha}_{j, \l} \alpha_{j, \l-\mu+\nu}) \frac{\binom{n-2t}{\l-\mu}}{\sqrt{\binom{n}{\l} \binom{n}{\l-\mu+\nu}}} = 0,
    \end{align}
    \end{subequations}
    where there is a condition for each $i,j \in [\DL]$ and each composition $\mu,\nu \partition{\DP} 2t$. The block length must satisfy $n \geq 2t$. We define $\frac{\binom{n-2t}{\l-\mu}}{\sqrt{\binom{n}{\l} \binom{n}{\l-\mu+\nu}}}$ to be zero if any of the multinomial coefficients contain factorials of negative numbers (so that it is zero when $\l - \mu$ contains negative entries).
\end{theorem}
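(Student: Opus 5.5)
The plan is to apply the Knill--Laflamme conditions of \cref{thrm:Knill--Laflamme} with the error set taken to be the Kraus set $\mathcal{E}_{2t}^{\DP}=\{E^{\DP}_\mu \mid \mu \partition{\DP} 2t\}$ of the $2t$-deletion channel, as in \cref{eqn:del_err_Kraus_set_Dicke}. By \cref{thrm:PI_deletion_equivalence}, correcting arbitrary weight-$t$ errors on a PI code is equivalent to correcting $2t$ deletions. Since the deletion channel acts identically on every index set $I$ when restricted to the symmetric subspace, it suffices to use the index-free operators $E_\mu^{\DP}$. The requirement $n\ge 2t$ is needed simply for the $2t$-deletion channel to be defined.

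By linearity, the KL conditions on all $\ket{\psi},\ket{\phi}\in\C$ are equivalent to the conditions on the codeword basis: $\bra{c_i}E_\mu^\dagger E_\nu\ket{c_j}=C_{\mu\nu}\delta_{ij}$, provided the codewords are orthonormal. The orthonormality requirement, once Dicke states of different weight are known to be orthonormal, is exactly C1. I would then compute the matrix element using \cref{lem:del_err_Dicke}:
\[
E_\nu\ket{c_j}=\sum_{\l'}\alpha_{j,\l'}\sqrt{\tbinom{n-2t}{\l'-\nu}/\tbinom{n}{\l'}}\,\ket{D^{n-2t}_{\l'-\nu}},
\]
and similarly for $E_\mu\ket{c_i}$. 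Orthogonality of the Dicke states on $n-2t$ qudits forces $\l-\mu=\l'-\nu$, that is, $\l'=\l-\mu+\nu$. The two multinomial factors $\binom{n-2t}{\l-\mu}$ and $\binom{n-2t}{\l'-\nu}$ therefore coincide, which gives
\[
\bra{c_i}E_\mu^\dagger E_\nu\ket{c_j}=\sum_{\l}\bar\alpha_{i,\l}\alpha_{j,\l-\mu+\nu}\frac{\binom{n-2t}{\l-\mu}}{\sqrt{\binom{n}{\l}\binom{n}{\l-\mu+\nu}}}.
\]
The zero-convention for negative entries matches the statement of \cref{lem:del_err_Dicke} that the action vanishes unless $\l-\mu\ge 0$.

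With this expression, the off-diagonal KL requirement ($i\ne j$) is C2, and independence of the diagonal value from $i$ is C3. For the converse, the conditions C1–C3 give $\bra{c_i}E_\mu^\dagger E_\nu\ket{c_j}=C_{\mu\nu}\delta_{ij}$, and linearity then extends this to arbitrary $\ket{\psi},\ket{\phi}$ in the span.

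The main obstacle is the bookkeeping in the middle step. One has to check carefully that $E_\mu^\dagger E_\nu$ only ever pairs Dicke weights related by $\l'=\l-\mu+\nu$, and that the normalization factors reduce to the single stated ratio with the correct boundary convention. A second point to justify is why restricting to the index-free $E_\mu$ (rather than all $A^n_{I,x}$ with different $I$) loses nothing. I would handle it by noting that on symmetric states $A^n_{I,x}$ depends only on $w(x)$, so the cross terms between different $I$ reduce to the same $E_\mu^\dagger E_\nu$ matrix elements.
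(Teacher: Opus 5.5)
Your proposal is correct and follows essentially the same route as the paper. C1 is orthonormality of the codewords. You then expand $\bra{c_i}E_\mu^{\DP\dagger}E_\nu^{\DP}\ket{c_j}$ in the Dicke basis via \cref{lem:del_err_Dicke}, and use orthogonality of the $(n-2t)$-qudit Dicke states to force $\l'=\l-\mu+\nu$. C2 then comes from the off-diagonal ($i\neq j$) KL requirement and C3 from the diagonal one. Your remarks on the converse via linearity, and on why cross terms $A^{n\dagger}_{I,x}A^n_{I',y}$ between different index sets reduce to the same $E_\mu^{\dagger}E_\nu$ matrix elements on symmetric states, make explicit points that the paper leaves implicit.
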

\begin{proof}
    \hfill \break
     C1. This is just the orthonormality condition:
     \begin{equation}
         \braket{c_i | c_j} = \sum_{\l,\l' \partition{\DP} n} \bar{\alpha}_{i, \l} \alpha_{j, \l'} \underbrace{\braket{D_{\l}^n|D_{\l'}^n}}_{\delta_{\l\l'}} = \sum_{\l \partition{\DP} n} \bar{\alpha}_{i, \l} \alpha_{j, \l} = \delta_{i,j}, \; \; \forall i,j \in [\DL].
     \end{equation}
    \noindent C2. Derived from the Knill--Laflamme condition $\bra{c_i} E_{\mu}^{\DP \! \dagger}E_{\nu}^{\DP} \ket{c_j} = 0$ for $i \neq j$, where we have $\mu,\nu \partition{\DP} 2t$ since we are working with the $2t$-deletion channel:
    \begin{align}
        0 &= \bra{c_i} E_{\mu}^{\DP \! \dagger}E_{\nu}^{\DP} \ket{c_j} \\ 
        \implies 0 &= \sum_{\l,\l' \partition{\DP} n} \bar{\alpha}_{i, \l} \alpha_{j, \l'} \braket{D_{\l}^n|E_{\mu}^{\DP \! \dagger}E_{\nu}^{\DP}|D_{\l'}^n} \\ 
        &= \sum_{\l,\l' \partition{\DP} n} \bar{\alpha}_{i, \l} \alpha_{j, \l'}\sqrt{\frac{\binom{n-2t}{\l - \mu}\binom{n-2t}{\l' - \nu}}{\binom{n}{\l}\binom{n}{\l'}}} \underbrace{\braket{D_{\l - \mu}^{n-2t}|D_{\l' - \nu}^{n-2t}}}_{\delta_{\l-\mu+\nu,\l'}}  \\
        &=\sum_{\l \partition{\DP} n} \bar{\alpha}_{i, \l} \alpha_{j, \l-\mu+\nu} \frac{\binom{n-2t}{\l-\mu}}{\sqrt{\binom{n}{\l} \binom{n}{\l-\mu+\nu}}} , \forall i\neq j | i,j \in [\DL], \; \forall \mu,\nu \partition{\DP} 2t, n \geq 2t.
    \end{align}
    Note that only entries with $\l - \mu \geq 0$ remain non-zero because Dicke states which do not satisfy this requirement get deleted (i.e.\ turn to zero) by the deletion errors.
    
    \noindent C3. Derived from the Knill--Laflamme condition $\bra{c_i} E_{\mu}^{\DP \! \dagger}E_{\nu}^{\DP} \ket{c_i} = \bra{c_j} E_{\mu}^{\DP \! \dagger}E_{\nu}^{\DP} \ket{c_j}$, where again $\mu,\nu \partition{\DP} 2t$:
    \begin{align}
    \bra{c_i} E_{\mu}^{\DP \! \dagger}E_{\nu}^{\DP} \ket{c_i} - \bra{c_j} E_{\mu}^{\DP \! \dagger}E_{\nu}^{\DP} \ket{c_j} &= 0 \\ 
    \implies  \sum_{\l \partition{\DP} n} (\bar{\alpha}_{i, \l} \alpha_{i, \l-\mu+\nu} - \bar{\alpha}_{j, \l} \alpha_{j, \l-\mu+\nu}) \frac{\binom{n-2t}{\l-\mu}}{\sqrt{\binom{n}{\l} \binom{n}{\l-\mu+\nu}}} &= 0, 
        &\forall i,j \in [\DL], \; \forall \mu,\nu \partition{\DP} 2t, n \geq 2t. 
    \end{align}
\end{proof}

\section{Study of qudit PI Codes}

\subsection{Extending qudit PI codes on qubits to qudit PI codes on qudits}\label{sec:converting_qubit_to_qudit}
Intuitively, given a PI code encoding a logical qudit with physical qubits, we can treat all errors whose local dimension is higher than two (for example a shift error $X^2 \ket{0} =\ket{2}$) as deletion errors, since taking the state outside of the physical Hilbert space can be interpreted as deleting the qubit. As such, a qudit PI error-correcting code can also correct errors of higher local dimension. Equivalently, a qudit PI code on physical qubits can be extended to a qudit PI code on physical qudits without any loss in the code distance. We make this statement rigorous as follows. 

\begin{theorem}
    Given a $((n,\DL,d))_2$ qudit PI quantum error-correcting code on qubits that can correct $d-1$ deletion errors of local dimension $2$, it can be extended (padded) to a $((n,\DL,d))_{\DP}$ qudit PI quantum error-correcting code of the same code distance $d$ for arbitrary $\DP \in \N$ correcting a larger error set of $d-1$ deletion errors of local dimension $\DP$.
\end{theorem}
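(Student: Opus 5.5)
The plan is to use the obvious padding map and check the qudit KL conditions of \cref{thrm:PI_KL_conds_qudits} directly. Given the qubit code with codewords $\ket{c_i}=\sum_{\l \partition{2} n}\alpha_{i,\l}\ket{D^n_\l}$, define the $\DP$-dimensional code by embedding $\Co^2\hookrightarrow\Co^{\DP}$ via $\ket{0}\mapsto\ket{0}$, $\ket{1}\mapsto\ket{1}$. Each qubit Dicke state $\ket{D^n_{(\l_0,\l_1)}}$ is mapped to the qudit Dicke state $\ket{D^n_{(\l_0,\l_1,0,\dots,0)}}$. So the new coefficients are $\alpha'_{i,\l'}=\alpha_{i,(\l'_0,\l'_1)}$ if $\l'_2=\dots=\l'_{\DP-1}=0$, and $\alpha'_{i,\l'}=0$ otherwise. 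The padded states are still permutation invariant.

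Condition C1 is immediate, since the nonzero coefficients are unchanged and Dicke states of distinct weights remain orthonormal. For C2 and C3, I fix $\mu,\nu \partition{\DP} 2t$ (or, for the deletion formulation, $\mu,\nu \partition{\DP} d-1$) and split into two cases.

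\textbf{Case 1:} $\mu$ or $\nu$ has a nonzero entry in some position $k\ge 2$. Say $\mu_k>0$. Every $\l'$ in the support of the padded codewords has $\l'_k=0$, so $\l'-\mu$ has a negative entry and \cref{lem:del_err_Dicke} gives $E^{\DP}_\mu\ket{c'_i}=0$. The same holds for $\nu$. Hence both sides of C2 and C3 vanish identically; equivalently, the KL constant is $C_{\mu\nu}=0$. Note that the sums in C2 and C3 also require $\l'-\mu+\nu$ to lie in the support of the padded codewords, so the case $\nu_k>0$ is symmetric.

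\textbf{Case 2:} $\mu=(\mu_0,\mu_1,0,\dots,0)$ and $\nu=(\nu_0,\nu_1,0,\dots,0)$. Then both $\l'$ and $\l'-\mu+\nu$ have zero entries beyond index $1$. The multinomial coefficients reduce to binomials: $\binom{n-2t}{\l'-\mu}=\binom{n-2t}{\l_1-\mu_1}$ and $\binom{n}{\l'}=\binom{n}{\l_1}$, where $0!=1$ for the padded zeros. So the qudit C2 and C3 sums coincide term by term with the qubit C2 and C3 conditions \eqref{eqn:KL_PI_conds_qubits}, with $a=\mu_1$ and $b=\nu_1$, which hold by hypothesis.

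Since every allowed pair $(\mu,\nu)$ falls into one of these cases, all KL conditions hold for the padded code. By \cref{thrm:PI_deletion_equivalence}, the padded code corrects $d-1$ deletions of local dimension $\DP$. Its distance is therefore at least $d$. It is also at most $d$: the qubit-type errors of Case 2 reproduce exactly the qubit code's matrix elements, so any qubit-type operator violating the distance condition for the original code still violates it for the padded code.

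The main bookkeeping obstacle is Case 2: checking carefully that the multinomial normalizations collapse to the qubit binomial ones, including the index conventions $a=\mu_1$, $b=\nu_1$ and the zero-convention for negative entries. A secondary point is handling odd $d$ versus the $2t$ formulation. Here I would work with $d-1$ deletions throughout, replacing $2t$ by $d-1$ in \cref{thrm:PI_KL_conds_qudits}; its derivation goes through verbatim.
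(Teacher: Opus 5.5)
Your proposal is correct and follows the paper's proof. Both pad the coefficients with zeros, check C1 directly, and split on whether $\mu$ or $\nu$ has support beyond the first two entries; the remaining case collapses to the qubit C2/C3 conditions. The paper splits your Case 1 into two cases ($\mu$ versus $\nu$ having the extra support) and works with $2t$ deletions. Your version goes slightly beyond it in two ways: using $d-1$ deletions is the right count for even $d$, and you also check that the distance does not exceed $d$.
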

Note that our proof below is inspired by a construction of qudit PI codes due to \cite{ouyang2017permutation}, which provided an explicit instance of how qubit codes can be extended to qudits.
\begin{proof}
    For a given block length $n$, logical local dimension $\DL$, physical local dimension $\DP = 2$, and distance $d$, assume that there exists an $((n,\DL,d))_2$ PI code with coefficients $\gamma_{i,\lambda} \in \Co$ for $\lambda \partition{2} n$ and $i \in [\DL]$ that satisfies the Knill--Laflamme equations in \cref{thrm:PI_KL_conds_qudits}. For any $\DP \in \N$ we construct an $((n,\DL,d))_{\DP}$ PI code with codeword coefficients $\alpha_{i,\lambda}$ from the $((n,\DL,d))_2$ PI code using the following assignment:
    \begin{equation}
        \alpha_{i,\lambda} = 
        \begin{cases}
            \gamma_{i,(\lambda_0,\lambda_1)}, \; \lambda = (\lambda_0,\lambda_1,\underbrace{0,0,\dots,0}_{\DP-2}) \\
            0, \; \text{otherwise}.
        \end{cases}
        \label{eq:beta_padding}
    \end{equation}
    We show that under this assignment, the KL conditions for the 
    $((n,\DL,d))_{\DP}$ PI code reduce to the KL conditions for the $((n,\DL,d))_2$ PI code. For the first KL condition, C1, we have: 
    \begin{align}
     \text{C1: } \sum_{\l \partition{\DP} n} \bar{\alpha}_{i, \l} \alpha_{j, \l} = \sum_{\l \partition{2} n} \bar{\gamma}_{i, \l} \gamma_{j, \l} + 0 = \delta_{i,j}, \; \; \forall i,j \in [\DL],
    \end{align} 
where the summation over $\lambda$ reduces to a summation over compositions of $n$ into two parts only due to \cref{eq:beta_padding}. 

For the second condition C2, which must hold for all values of $i \neq j \in [\DL]$, all compositions $\mu, \nu \partition{\DP} 2t$ and for all block lengths $n \geq 2t$ we consider three cases. Firstly, for all compositions $\mu$ whose $\mu_k \neq 0$ for at least one value of $k \in \{3,\dots,\DP-1\}$, and for all compositions $\nu$, 
\begin{align}
     \bar{\alpha}_{i, \l} \alpha_{j, \l-\mu+\nu} \frac{\binom{n-2t}{\l-\mu}}{\sqrt{\binom{n}{\l} \binom{n}{\l-\mu+\nu}}} = 0, \; \forall \lambda \partition{\DP} n  \label{eq:C2Case1},
\end{align}
This is because in the summation over $\lambda$ in C2, the only non-zero terms are the $\lambda$ compositions of $n$ into two parts due to \cref{eq:beta_padding}. For these surviving $\lambda$, for the $\mu$ considered in this case we have that, 
\begin{align}
    \binom{n-2t}{\lambda-\mu} = 0,
\end{align}
because $\lambda$ does not contain any non-zero entries beyond the first two, while $\mu$ contains at least one non-zero entry beyond the first two, and therefore in the tuple $\lambda-\mu$ there is at least one negative entry. 

The second case is for all remaining $\mu$ compositions, i.e.\ those with $\mu_k = 0 \; \forall k \in\{3,\dots\DP-1\}$, and for $\nu$ compositions whose $\nu_k \neq 0$ for at least one value of $k \in \{3,\dots,\DP-1\}$. In this case, $\lambda-\mu+\nu$  will have non-zero entries beyond the first two, and therefore $\alpha_{j, \l-\mu+\nu} = 0$. Thus for the $\mu$ and $\nu$ considered in this case, we have the same equalities as \cref{eq:C2Case1}. 

The third case is the only remaining combination of $\mu$ and $\nu$ compositions, i.e.\ those with $\mu_k, \nu_k = 0 \; \forall k \in\{3,\dots\DP-1\}$. In this case, 
\begin{align}
    \sum_{\l \partition{\DP} n} \bar{\alpha}_{i, \l} \alpha_{j, \l-\mu+\nu} \frac{\binom{n-2t}{\l-\mu}}{\sqrt{\binom{n}{\l} \binom{n}{\l-\mu+\nu}}} = \sum_{\l \partition{2} n} \bar{\gamma}_{i, \l} \gamma_{j, \l-\mu'+\nu'} \frac{\binom{n-2t}{\l-\mu'}}{\sqrt{\binom{n}{\l} \binom{n}{\l-\mu'+\nu'}}} = 0, 
    \end{align}
where the summation over $\lambda$ again reduces due to \cref{eq:beta_padding}. Because these $\lambda, \mu$ and $\nu$ compositions are only those with all elements zero except for the first two entries, so we introduced notation for compositions of length two $\mu',\nu' \partition{2} 2t$, i.e.\ $\mu' = (\mu_0', \mu_1')$ and $\nu' = (\nu_0',\nu_1')$. We also substituted $\alpha$ for $\gamma$ following \cref{eq:beta_padding}. The C2 condition for the $\alpha_{i,\lambda}$ coefficients of the qudit PI code is therefore equivalent to the C2 condition for the $\gamma_{i,\lambda}$ coefficients of the qubit PI code. 

Finally, the third KL condition (C3) for the $((n,\DL,d))_{\DP}$ PI code analogously simplifies when considering the same three cases as in the proof for the second condition (C2) of the $((n,\DL,d))_2$ PI code.

This concludes the proof.
\end{proof}

Our proof shows that qudit PI codes on qubits can always be padded to qudit PI codes on qudits without any loss in the code distance. However, this construction does not benefit from the increased number of degrees of freedom that are available in qudit systems. In the next subsection, we numerically find qudit PI codes whose minimal block length scaling $n_{\rm min}$ reduces with increasing local physical dimension, $\DP$. 

\subsection{Numerical study of qudit PI codes}\label{sec:num_study_qudit_PI}

We investigate the scaling of the minimal block length $n_{\rm min}$ of qudit PI codes with the physical dimension $\DP$. Restricting to the case of errors with weight $t = 1$, we follow the same numerical procedure as described in \cref{ch:Numerical_study}. Specifically, we use the cost function of \cref{eq:f_cost}, but with the KL conditions for qudit PI codes of \cref{eqn:KL_qudits}. The results are shown in \cref{fig:QuditScaling}. 

\begin{figure}
    \centering
    \includegraphics{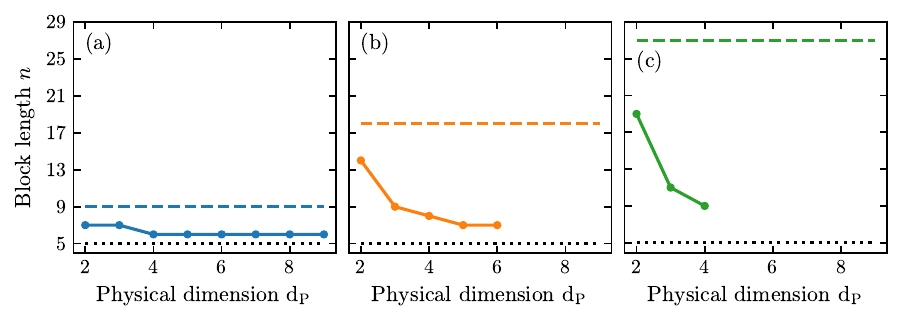}
    \caption{
    Minimal block length $n_{\rm min}$ (colored dots) versus physical qudit dimension $\DP$ for logical dimension (a) $\DP = 2$, (b) $\DP = 3$ and (c) $\DP = 4$, all with error weight $t = 1$. The block length decreases with $\DP$. For the largest $\DP$ accessible with our numerics, we find $n_{\rm min}(\DL=2,\DP=9)=6$, $n_{\rm min}(\DL=3,\DP=6)=7$ and $n_{\rm min}(\DL=4,\DP=4)=9$. We also show the lower bound obtained from the quantum Singleton bound in \cref{eq:Singleton} of $n = 5$ (black dotted line) and the block length for Ouyang qudit PI codes~\cite{ouyang2014permutation} of $n = 9(\DL-1)$ (dashed colored lines), both of which are independent of $\DP$. For $\DL = 4$, Ouyang qudit PI codes requires $n=27$, which is three time larger than the $n = 9$ qudits required for our $\DP = 4$ qudit PI code. 
    }
    \label{fig:QuditScaling}
\end{figure}

In \cref{fig:QuditScaling}(a) we fix the logical dimension $\DL = 2$ and plot the minimal block length $n_{\rm min}$ as a function of the physical dimension $\DP$. We observe that $n_{\rm min}$ decreases with $\DP$, reaching $n_{\rm min}(\DL=2) = 6$ for $4\leq\DP\leq9$, with $\DP=9$ the largest system that we were able to solve with our numerics. We find a minimal block length that is always smaller than the block length of analytic Ouyang qudit PI codes, $n(\DL=2) = 9$ (blue dashed line). Also note that the $n_{\rm min}$ that we find decreases with $\DP$, approaching the quantum Singleton bound c.f. \cref{eq:Singleton} (black dotted line). In contrast, Ouyang PI codes have a block length that is independent $\DP$. In \cref{fig:QuditScaling}(b) we set the logical dimension $\DL = 3$, and in \cref{fig:QuditScaling}(c) we set $\DL = 4$. We again observe decreasing $n_{\rm min}$ with $\DP$. The reduction in $n$ for our qudit PI codes compared to Ouyang qudit PI codes is largest for $\DL = 4$, where we obtain a threefold reduction in block length scaling, from $n = 27$ to $n = 9$. 

From our numerical results, we conclude that the block length can be significantly reduced by increasing the physical qudit dimension $\DP$. We are not aware of any existing analytic PI code families whose block length decreases with $\DP$. 

\subsection{Extending the qudit codes beyond \cref{eq:beta_padding}:}
\label{sec:Simplicial_codes}

To go beyond the constraint of \cref{eq:beta_padding}, we extend the AAB construction, which we review in Appendix~\ref{sec:Aydin_codes}, from qubits to qudits. We define the codewords:
\begin{equation}\label{eqn:Simplicial_codes}
    \ket{c_i} = \sum_{j=0}^{\DP-1} \omega_{\DP}^{ij} \sum_{\substack{\vec{l} \equiv i+j \text{ mod } \DP \\ \vec{l} \in \Rl}} f(\vec{l}) \ket{D_{\lambda^{g\vec{l}j}}^n},
\end{equation}
where $i \in [\DL]$, 
\begin{equation} \label{eqn:vec_l}
    \vec{l} = (l_0,l_1,\dots,l_{\DP-2}) \equiv i+j \text{ mod } \DP \Leftrightarrow l_0 \equiv l_1 \equiv \dots \equiv l_{\DP-2} \equiv i+j \text{ mod } \DP,
\end{equation}
$\Rl$ is the set of all allowed $\vec{l}$, $\lambda^{g\vec{l}j} = (\lambda_0, \lambda_1, \dots,\lambda_{\DP-1})$ are the compositions given by:
\begin{equation}\label{eqn:part_from_lvector}
    \lambda^{g\vec{l}j} = (gl_0,gl_1,\dots,gl_{j-1},n-(\sum_{k=0}^{\DP-2} gl_k ), gl_{j}, \dots, gl_{\DP-2} )
\end{equation}
and $\omega_{\DP} = e^{2\pi \im / \DP}, \; 2<\DP\in \mathbb{N}$ is the $\DP$-th root of unity. 

For this construction to be error-correcting we must have: $n,g,b,t \in \Z ; i,j \in [\DL]$ such that $0 \leq 2t \leq b <n/g$, where:
\begin{equation} \label{eqn:n_constraints}
    n = gb + \delta + 1, \; \delta \geq 2t, g \geq 2t,
\end{equation}
and $b$ is a constant, akin to $m$ from \cref{eqn:Aydin_PI_codes}, determined by the set $\Rl$. Specifically $\Rl$ must satisfy:
\begin{equation}\label{eqn:lmodel_constraint}
    \forall \vec{l},\vec{l'} \in \Rl, \forall k \in [q-1] : l_0 + l_1 + \dots +l_{k-1} + l'_k + l_{k} + \dots + l_{q-2} \leq b.
\end{equation}
Additionally, again by applying condition C3 from the KL equations of \cref{eqn:KL_qudits}, for $\mu \partition{\DP} 2t$, $f(\vec{l})$ must satisfy:
\begin{equation}\label{eqn:C4_cond_Simplicial_codes}
    \sum_{r=0}^{q-1} \sum_{\substack{\vec{l} \equiv r \text{ mod } \DP \\ \vec{l} \in \Rl}} \frac{|f(\vec{l})|^2}{\binom{n}{\lambda^{g\vec{l}r}}} \left[ \binom{n-2t}{\lambda^{g\vec{l}(r-i)} - \mu} - \binom{n-2t}{\lambda^{g\vec{l}(r-j)} - \mu} \right] = 0,
\end{equation}
where $\mu$ is a composition akin to the one seen in the qudit KL equations for PI codes in \cref{thrm:PI_KL_conds_qudits}, i.e.\ it is representing an action of a deletion error, and $i,j \in [\DL]$ with $i>j$ to avoid repeating equivalent equations in the system.
We also require that $f(\vec{l})$ is normalized, from condition C1 of KL equations of \cref{eqn:KL_qudits}:
\begin{equation}\label{eqn:f_vars_norm}
   \sum_{\vec{l} \in \Rl} |f(\vec{l})|^2 =1,
\end{equation}
Because we assume all the $f(\vec{l})$ that fall outside of \cref{eqn:C4_cond_Simplicial_codes} to be zero, the set $\Rl$ can only contain vectors that are entrywise congruent modulo $\DP$. That is, if $\vec{l} =(l_0,l_1,\dots,l_{\DP-2}) \in \Rl$ it must satisfy:
\begin{equation} \label{eqn:l_mod_q}
    l_0\equiv l_1 \equiv \dots \equiv l_{\DP-2} \text{ mod } \DP.
\end{equation}

This code construction requires $\DP \geq \DL$. See \cref{fig:simplex} for a graphical depiction of an example of our construction. We refer to the code construction described in this Section as \textbf{simplicial codes}. Unfortunately, in this construction the scaling $n(t)$ that we observe numerically for $\DL=\DP=3$ is $n(t)=t^3 + 13t^2 + 10t + 1$, which is worse than existing constructions with $n(t) \propto t^2$ \cite{ouyang2017permutation} and suggests that the constraints in \cref{eqn:vec_l,eqn:part_from_lvector,eqn:n_constraints,eqn:lmodel_constraint} are too restrictive. We provide further details on these codes in the Appendix~\ref{ch:Simplicial_codes}, and reasons for why the construction fails to achieve quadratic or subquadratic scaling in \cref{ch:Comparing_simplicial_to_Ruskai}.

\begin{figure}
    \centering
    \includegraphics[width=0.8\textwidth]{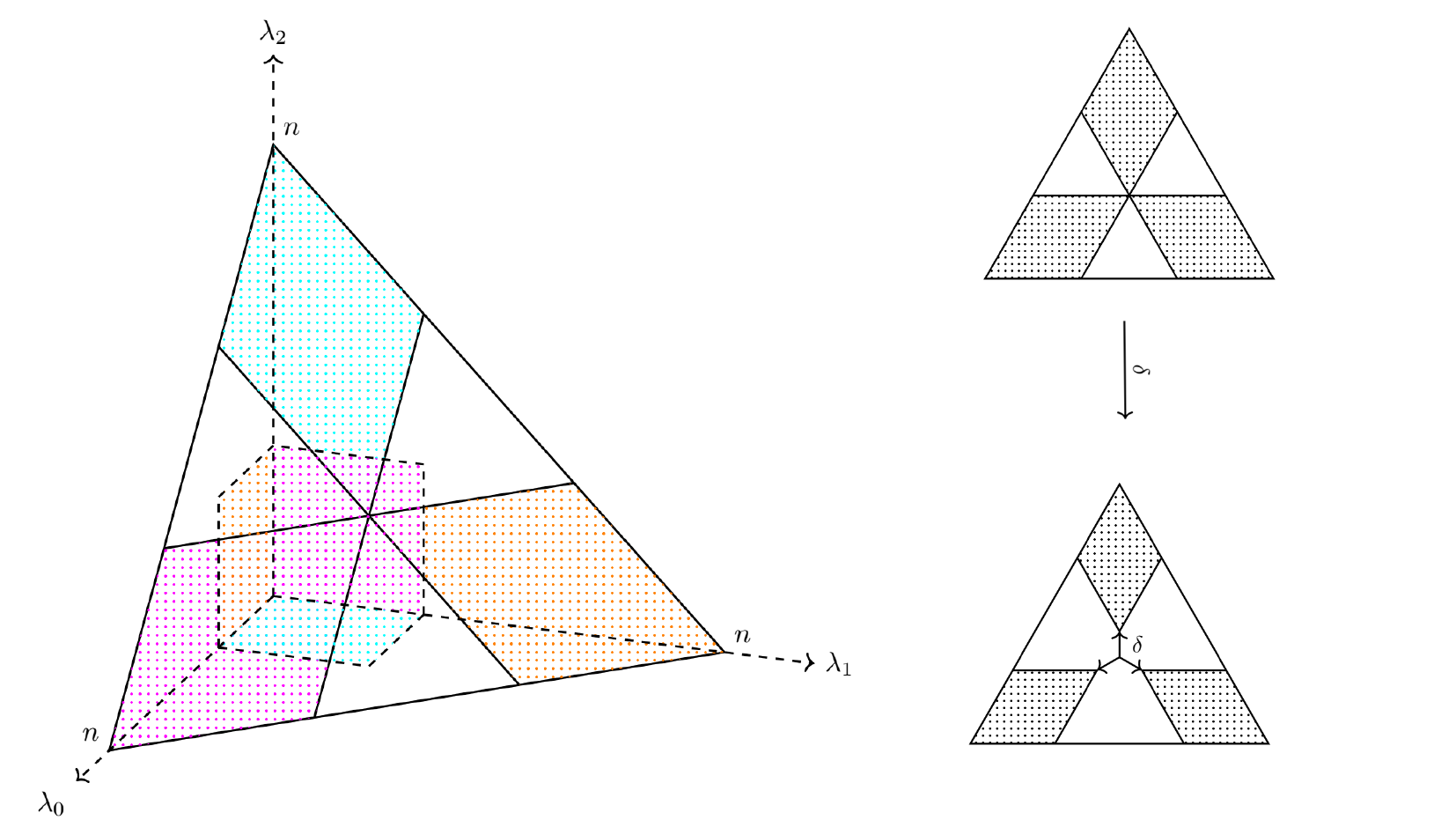}
    \caption{
    \textbf{Left:} graphical depiction of the codespace defined in \cref{sec:Simplicial_codes} 
    constructed from an example set $\Rl := \{ l_i| l_i \leq b/3 \; \forall i \in \{0,1\}\}$ with $\delta = 0$ (for the sake of demonstration as $\delta \geq 2t$). Compositions $\lambda = (\lambda_0,\lambda_1,\lambda_2)$ making up the simplex take on values of one of three types: $(gl_0,gl_1,n-gl_0-gl_1)$, $(gl_0,n-gl_0-gl_1,gl_1)$, and $(n-gl_0-gl_1,gl_0,gl_1)$, and the three colored ``diamonds'' being plots of these three types of compositions, which correspond to the projections of the respective shaded squares defined via $\Rl$. Note that the set $\Rl$ chosen here is suboptimal and is provided purely as an example. See Appendix~\ref{ch:optimal_lmodel} for details on choosing $\Rl$ optimally.
    \textbf{Right:} graphical depiction of the effect of setting $\delta \neq 0$ represents in the code construction, i.e.\ enabling the possibility to correct $t$ errors as $\delta \geq 2t$. This results in an isotropic shift of the "diamonds" starting from the center of the simplex, see Appendix~\ref{sec:Intuition_Simplicial} for more details.
    }
    \label{fig:simplex}
\end{figure}

\section{Discussion}\label{ch:Comparing_simplicial_to_Ruskai}

In our numerical study of qubit PI quantum error-correcting codes, we conjectured a lower bound $n_{\min}(t) \geq 3t^2+3t+1$ (\cref{con:min_PI_qubit_scaling}) on the minimal block length.
This conjecture is based on the assumption that the minimal block length scales with $t$ as a polynomial of integer degree and the values of $n_{\min}(t)$ that we obtained numerically for $t=1$, $t=2$ and $t=3$.
Because we already know that $4t+1 \leq n_{\min}(t) \leq 4t^2 +2t+1$ due to the Singleton bound from below and the AAB scaling from above \cite{aydin2024family}, a proof that $n_{\min}(t)$ is a polynomial of integer degree would significantly strengthen the conjecture. 

Because the block length scaling $n_{\min}(t)$ of minimal PR qubit codes~\cite{pollatsek2004permutationally} is the same as minimal qubit PI codes, this suggests that there may exist a unitary transformation between the two. Such a unitary must leave the KL equations invariant, and therefore must also commute with deletion errors. In the computational basis, one construction would be a tensor product of a unitary acting on the qubits being deleted and a unitary acting on the remaining qubits. Because indexing does not matter for PI codes, without loss of generality we consider the first $2t$ qubits being deleted:
  \begin{equation}
      U = U_{\DP^{2t}} \otimes U_{\DP^{n-2t}},
  \end{equation}
  where $U_{\alpha}$ denotes a unitary acting on a Hilbert space of dimension $\alpha$, and $\DP = 2$.
  At the same time, we know that this unitary should preserve the symmetric subspace, therefore in the Schur basis (see \cite{Bacon_2006}) it should decompose as:
  \begin{equation}
      T_{S} U T^{\dagger}_{S} = U_{d_{\rm sym}} \oplus U_{\DP^n - d_{\rm sym}}.
  \end{equation}
  where $d_{\rm sym}$ is the size of the symmetric subspace, and $T_S$ is the Schur transform. This is equivalent to saying that in the Schur basis: 
  \begin{equation}
      U_{d_{\rm sym}} = U_{2t} \oplus U_{d_{\rm sym} - 2t}.
  \end{equation}
  Using the above insight, one could try to either find such a unitary $U$ or show that it doesn't exist. If it exists, then the two families are equivalent, which may help to simplify analytical proofs of lower bounds for PI codes, as it would be sufficient to provide proofs for the PR code family instead of the general case.

For qudit PI codes, in \cref{sec:num_study_qudit_PI} we showed numerically that for $t=1$, $n_{\min}(t)$ decreases towards the Singleton bound if we fix the logical local dimension $\DL$ and increase the physical local dimension $\DP$. Future work could further investigate this behavior for $t>1$, which we were unable to access in this work due to large computational costs. 

Lastly, there are several possible reasons why our simplicial qudit PI code construction in \cref{eqn:Simplicial_codes}
only achieves block length scaling $O(t^3)$ for $\DP=\DL=3$, which is worse than the $O(t^2)$ scaling that is achieved by padding qubit PI codes, see \cref{sec:converting_qubit_to_qudit}, and by Ouyang qudit PI codes~\cite{ouyang2017permutation}.
Similar to the AAB construction, we require that the codewords have disjoint support, which translates to the requirement that $g\geq2t$ where $g$ is the spacing between points in the simplex. Relaxing this requirement may improve the code performance, i.e.\ one could have $g<2t$ (cf.\ the proof of condition C2 in Appendix~\ref{sec:Simplicial_code_proofs}, the $r=s$ case).
One could also try to see what is the smallest value $g$ could take in terms of $t$, so that in C1 codewords still have disjoint support, while in C2 they satisfy orthogonality without imposing disjoint support of erroneous codewords under the action of deletion errors. The idea of using the same scaling parameter $g$ for all elements of the composition might itself be too restrictive. Introducing more scaling parameters could lead to a PI code family that can achieve better scaling compared to Ouyang \cite{ouyang2017permutation} or AAB \cite{aydin2025quantumerrorcorrectionsu2}. Finally, note that it is unclear whether increasing the physical local dimension $\DP$ in the construction of simplicial codes would help. It is possible that perhaps for higher $\DP$ simplicial codes can achieve subquadratic scaling. To answer this, one potentially useful direction to investigate would be to check whether it is possible to make simplicial codes satisfy the conditions given in the construction from Section VII of \cite{aydin2025quantumerrorcorrectionsu2}, as the two constructions appear to share some underlying ideas. 
Even though our construction gives suboptimal scaling for $\DP=\DL=3$, we believe it can offer insight on how to construct explicit qudit PI codes that achieve subquadratic scaling of code distance with the block length.

\section{Acknowledgments}
We thank Victor Albert, Marek Miller, Nicolas Resch, and Gavin Brennen for insightful discussions. This work was supported by the NWO grant NGF.1623.23.025 (``Qudits in theory and experiment''). A.S.N. is supported by the Dutch Research Council (NWO/OCW) as a part of the Quantum Software Consortium (project number 024.003.037), Quantum Delta NL (project number NGF.1582.22.030) and ENWXL grant (project number OCENW.XL21.XL21.122). We thank SURF for the support in using the Dutch National Supercomputer Snellius. 

\clearpage

\printbibliography

\clearpage

\markboth{Appendix}{Appendix}

\appendix

\section{Minor proofs}\label{apx:proofs}

\noindent \textbf{Proof of \cref{lem:Kraus_op_decomp}:}
\begin{proof}
We have:
\begin{align}
        &A_{i_1,x_1}^{n-t+1} \circ A_{i_2,x_2}^{n-t+2} \circ \dots \circ A_{i_t,x_t}^{n} \\ 
        &= A_{i_1,x_1}^{n-t+1} \circ A_{i_2,x_2}^{n-t+2} \circ \dots \circ \left(\bra{x_t}_{i_t} \otimes \id_{\{1,\dotsc,n\} \setminus \{i_t\}}\right) \\
        &= \left(\bra{x_1}_{i_1} \otimes \id_{\{1,\dotsc,n\} \setminus \{i_1,i_2,\dots,i_t\}}\right) \circ \left(\bra{x_2}_{i_2} \otimes \id_{\{1,\dotsc,n\} \setminus \{i_2,\dots,i_t\}}\right) \circ \dots \circ \left(\bra{x_t}_{i_t} \otimes \id_{\{1,\dotsc,n\} \setminus \{i_t\}}\right) \\
        &= \bra{x}_I \otimes \id_{\{1,\dots,n\} \setminus I} = A_{I,x}^n. 
    \end{align}
    \qedhere
\end{proof}

\noindent \textbf{Proof of \cref{lem:del_err_Dicke}:}
\begin{proof}
    We have:
    \begin{align}
        E_{\mu}^q \ket{D_{\l}^n} &=  (\Chi_{q-1})^{\mu_{q-1}} \circ \cdots \circ (\Chi_{0})^{\mu_{0}} \ket{D_{\l}^n} \\
        &= \sqrt{\frac{\binom{n-\mu_0-\mu_1-\dots-\mu_{q-1}}{(\l_0-\mu_0, \l_1-\mu_1, \dots, \l_{q-1}-\mu_{q-1})}}{\binom{n}{\l}}} \ket{D_{(\l_0 - \mu_0, \dots, \l_{q-1} - \mu_{q-1})}^{n-a}} \\
        &= \sqrt{\frac{\binom{n-t}{\l - \mu}}{\binom{n}{\l}}} \ket{D_{\l - \mu}^{n-t}}. 
    \end{align}
        \qedhere
\end{proof}

\section{Numerical results for coefficient fixing in qubit codes with $t=2$ and $t=3$}\label{paramfixing_t2t3}
In this appendix we show additional numerical results when $M$ codeword coefficients are fixed for the case of real minimal PI codes with the mirror and phase-flip symmetry of \cref{sec:codewordsymmetries} enforced. We obtain solutions by numerically by minimizing the cost function of \cref{eq:f_cost} using gradient descent methods as described in \cref{ch:Numerical_study}. \cref{fig:ParamFixingSymmetryFix}(a) shows $(M,t) = (3,2)$ with $\alpha_0,\alpha_1,\alpha_2 = 0 $, and \cref{fig:ParamFixingSymmetryFix}(b) shows $(M,t) = (4,3)$ with $\alpha_0,\alpha_1,\alpha_2,\alpha_3 = 0$. The numerical minimization starts from $20\times10^4$ initial random guesses for the unfixed parameters for $(M,t) = (3,2)$ and $5.6\times10^4$ initial random guesses for $(M,t) = (4,3)$. We obtain only eight solutions for $(M,t) = (3,2)$ and and $16$ solutions for $(M,t) = (4,3)$. This numerical evidence forms the basis of \cref{con:algebraic_variety}. 

\begin{figure}
    \centering
    \includegraphics{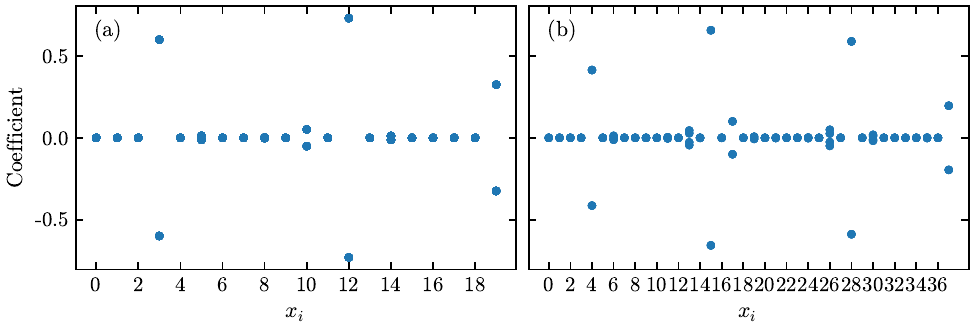}
    \caption{Codeword coefficients $x_i = (\alpha_0,\dots,\alpha_n,\beta_0,\dots,\beta_n)$ of real minimal qubit PI codes with mirror and phase-flip symmetries enforced, for the cases $(M,t) = (3,2)$ in panel (a) and $(M,t) = (4,3)$ in panel (b). The $M = 3$ fixed coefficients in (a) are $\alpha_0,\alpha_1,\alpha_2 = 0$ while the $M = 4$ fixed coefficients in (b) are $\alpha_0,\alpha_1,\alpha_2,\alpha_3 = 0$. We find only eight solutions in (a) and $16$ solutions in (b).  
    }
    \label{fig:ParamFixingSymmetryFix}
\end{figure}

\section{Qudit PI codes on discrete simplices} 
\label{ch:Simplicial_codes}

\subsection{AAB code construction}\label{sec:Aydin_codes}

Let us recall the code construction in \cite{aydin2024family}:

\begin{equation}\label{eqn:Aydin_PI_codes}
\begin{split}
    &\ket{c_0} = \sum_{\substack{l \text{ even} \\ 0 \leq l \leq m}} f(l) \ket{D_{gl}^n} + \sum_{\substack{l \text{ odd} \\ 0 \leq l \leq m}} f(l) \ket{D_{n-gl}^n}, \\
    &\ket{c_1} = \sum_{\substack{l \text{ odd} \\ 0 \leq l \leq m}} f(l) \ket{D_{gl}^n} + \epsilon\sum_{\substack{l \text{ even} \\ 0 \leq l \leq m}} f(l) \ket{D_{n-gl}^n}.
\end{split}
\end{equation}
Here $\epsilon$ is an orthogonality coefficient $\epsilon = \pm 1$ and $n,g,m,t \in \N: 0\leq 2t \leq 2m <n/g$.
The parameters $m$ and $g$ specify the codes - $m$ defines how many Dicke states constitute a codeword, and for the purpose of $g$ see Appendix~\ref{sec:Intuition_Aydin} below. 
The KL conditions impose constraints on the block length $n$. To be error-correcting we must have:
\begin{equation}
    n = 2gm + \delta + 1, \; \delta \geq 2t, (g \geq 2t, \epsilon = -1) \text{ or } (g \geq 2t + 1, \epsilon = +1),
\end{equation}
where $\delta$ is a shift parameter, cf.\ \cref{sec:Intuition_Simplicial}. For full derivation see \cite{aydin2024family}. Here $f(l)$ are the codeword coefficients, and they must also satisfy the following conditions derived when acting with KL equations directly on this code construction:
\begin{equation}
    \sum_{l=0}^m (-1)^l \frac{f(l)^2}{\binom{n}{gl}} \left[ \binom{n-2t}{gl-a} - \binom{n-2t}{gl-2t+a} \right] =0,
\end{equation}
which follows from C3 of the KL equations of \cref{eqn:KL_PI_conds_qubits}. The coefficients should also be normalized (cf. C1 in \cref{eqn:KL_PI_conds_qubits}):
\begin{equation}
    \sum_{l=0}^m f(l)^2 = 1.
\end{equation}

Aydin et. al. make the following (non-unique) choice for the codeword coefficients:
\begin{equation}\label{eqn:Aydin_f(l)}
    f(l) = \sqrt{\binom{n/(2g)}{m} \frac{n-2gm}{g(m+1)} \frac{\binom{m}{l}}{\binom{n/g-l}{m+1}}}.
\end{equation}

\subsection{A perspective on the AAB code construction}\label{sec:Intuition_Aydin}
A useful graphical interpretation of the AAB code family is representing the basis vectors in the support of the codespace, i.e.\ the Dicke states, as points of a 1-dimensional discrete simplex. For $\DP=\DL=2$ each Dicke state is in one-to-one correspondence to a composition of $n$ into $\DP=2$ parts. We can therefore consider a simplex $\pazocal{S}=\{\lambda=(\lambda_0,\lambda_1) \in \N^2|\lambda_0 +\lambda_1 = n\}$, with the discrete set of points $(gl, n-gl) \in \pazocal{S}$ and $(n-gl, gl) \in \pazocal{S}$ corresponding to the Dicke states on which the codewords in \cref{eqn:Aydin_PI_codes} are supported, for $l \in \{0,1,\dots,m\}$. This simplex is shown in \cref{fig:simplex_q=2}. Everywhere, we refer to $g$ as the scaling parameter. Intuitively, the purpose of $g$ is to increase the spacing between the discrete set of points in the simplex to account for the shifts caused by deletion errors, preventing overlap between the Dicke states after their weight is shifted by a deletion error.
\\
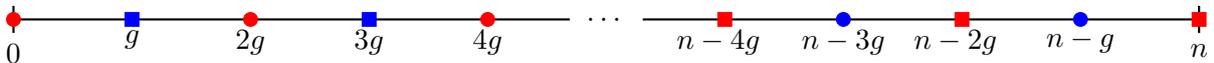
\begin{figure}[h]
    \centering
    \begin{tikzpicture}[scale=1.2, thick, every node/.style={scale=1.1}]
    
        \def\n{13}
        \def\s{1.3}
        
        \draw[-] (0,0) -- (\n/2-0.4,0);
        \draw[-] (\n/2+0.4,0) -- (\n,0);
    
        \draw[thick] (0,0.15) -- (0,-0.15);
        \node[below] at (0,-0.15) {$0$};
    
        \draw[thick] (\n,0.15) -- (\n,-0.15);
        \node[below] at (\n,-0.15) {$n$};

        \filldraw[red] (0,0) circle (2pt);

        \def\sq{0.07}

        \filldraw[blue] (\s-\sq,\sq) rectangle ++(2*\sq,-2*\sq);
        \node[below] at (\s,0) {$g$};
        \filldraw[blue] (3*\s-\sq,\sq) rectangle ++(2*\sq,-2*\sq);
        \node[below] at (3*\s,0) {$3g$};
    
        \foreach \i in {2,4} {
          \filldraw[red] (\i*\s,0) circle (2pt);
          \node[below] at (\i*\s,0) {$\i g$};
        }

        \filldraw[red] (\n-\sq,\sq) rectangle ++(2*\sq,-2*\sq);

        \filldraw[blue] (\n-\s,0) circle (2pt);
        \node[below] at (\n-\s,0) {$n-g$};
        \filldraw[blue] (\n-3*\s,0) circle (2pt);
        \node[below] at (\n-3*\s,0) {$n-3g$};
        
        \foreach \j in {4,2} {
          \pgfmathsetmacro\x{\n - \j*\s - \sq}
          \filldraw[red] (\x,\sq) rectangle ++(2*\sq,-2*\sq);
          \node[below] at (\x,0) {$n-\j g$};
        }
    
        \node at (\n/2,0) {\dots};
    
    \end{tikzpicture}
    \caption{Simplex $\pazocal{S}$ as a graphical interpretation of the codespace of the AAB PI code family. Here the points denote the entry $\lambda_1$ of the Dicke state compositions, which is sufficient since $\lambda_0$ can always be recovered as $\lambda_0 = n - \lambda_1$. Circles $\bullet$ correspond to terms in the support of the $c_0$ codeword, and squares \scalebox{0.6}{$\blacksquare$} to terms that are in the support of the $c_1$ codeword. $\textcolor{red}{Red}$ shapes correspond to terms with $\textcolor{red}{even}$ $l$, and $\textcolor{blue}{blue}$ to terms with $\textcolor{blue}{odd}$ $l$.
    }
    \label{fig:simplex_q=2}
\end{figure}

\subsection{Intuition behind the construction of simplicial codes}\label{sec:Intuition_Simplicial}
Recall the construction of qudit PI codes in \cref{sec:Simplicial_codes}. Similarly to the qubit construction, $n$ is a composition into $\DP$ parts with $\DP > 2$ and the codespace can be represented as a simplex of dimension $\DP-1$.
To explain the construction, let us first consider a concrete example $\DP=\DL=3$, where we choose the region $\Rl$ as: $\Rl := \{ l_i| l_i \leq b/3 \; \forall i \in \{0,1\}\}$ for some $b$. We also set $\delta = 0$ for simplicity. Note that this is not a valid error-correcting code since $\delta \geq 2t$; we return to the case of non-trivial $\delta$ below. The region $\Rl$ in the $(l_0,l_1)$ plane is depicted in \cref{fig:Mitsubishi_l_model}.

\begin{figure}
  \begin{minipage}{0.49\textwidth}
  \centering
  \begin{tikzpicture}[scale=1]
    \def\a{6}
    \draw[thick, -{Latex}] (0,0) -- ({\a+0.5},0)   node[below] {$l_0$};
    \draw[thick,-{Latex}] (0,0) -- (0,{\a+0.5}) node[left]    {$l_1$};
  
    \coordinate (A) at (0, 0);
    \coordinate (B) at (\a, 0);
    \coordinate (C) at (0, \a); 
    \coordinate (E1) at (0,\a/3);
    \coordinate (E2) at (0,2*\a/3);
    
    \coordinate (F1) at (\a/3,2*\a/3 );
    \coordinate (F2) at (2*\a/3,\a/3 );
    
    \coordinate (G1) at (\a/3, 0);
    \coordinate (G2) at (2*\a/3, 0);

    \coordinate (O) at ({\a/3}, {\a/3});
      
    \draw[thick] (C) -- (B);
    \draw[thick] (G1) -- (O);
    
    \draw[thick] (E1) -- (O);
    
    \fill[pattern={Lines[angle=45, distance=4pt, line    width=0.5pt]}] (A) -- (E1)-- (O) --(G1)-- cycle;
  
    \draw[thick] (\a/3,0.1) -- (\a/3,-0.1);
    \draw[thick] (-0.1,\a/3) -- (0.1,\a/3);
    \node at (E1) [left]  {$b/3$};
    \node at (G1) [below]  {$b/3$};
    \node at (C) [left]  {$b$};
    \node at (B) [below]  {$b$};
  \end{tikzpicture}
  \caption{Graphical depiction of the set $\Rl := \{ l_i| l_i \leq b/3 \; \forall i \in \{0,1\}\}$ as the shaded region of the figure.}
  \label{fig:Mitsubishi_l_model}
  \end{minipage}
  \hfill
  \begin{minipage}{0.49\textwidth}
    \centering
    \begin{tikzpicture}[scale=1.2]
     \def\a{6}
     \coordinate (A) at (0, 0);
     \coordinate (B) at (\a, 0);
     \coordinate (C) at (\a/2, {\a/2 * sqrt(3)}); 
     \coordinate (E1) at (\a/6,{\a/2 * sqrt(3)/3});
     \coordinate (E2) at (\a/3,{\a/2 * 2* sqrt(3)/3});
  
     \coordinate (F1) at ({2*(\a)/3},{\a/2 *2 * sqrt(3) / 3} );
     \coordinate (F2) at ({5/6*\a},{\a/2 * sqrt(3) / 3} );
  
     \coordinate (G1) at (\a/3, 0);
     \coordinate (G2) at (2*\a/3, 0);

     \coordinate (O) at (\a/2, {1/(2*sqrt(3))*\a});
  
     \draw[thick] (A) -- (B) -- (C) -- cycle;
  
     \draw[thick] (E1) -- (F2);
     \draw[thick] (E2) -- (G2);
     \draw[thick] (F1) -- (G1);

     \fill[pattern=dots] (A) -- (E1) -- (O) -- (G1)-- cycle;
     \fill[pattern=dots] (C) -- (E2) -- (O) -- (F1)-- cycle;
     \fill[pattern=dots] (B) -- (F2) -- (O) -- (G2)-- cycle;
    \end{tikzpicture}
    \caption{Codespace formed from the set $\Rl$ in \cref{fig:Mitsubishi_l_model} for $\DP=3$, shown as the dotted subset of the simplex.}
    \label{fig:Mitsubishi_2D}
  \end{minipage}
\end{figure}

To construct the full codespace as a set of points of the form $(gl_0,gl_1,n-gl_0-gl_1)$, $(gl_0,n-gl_0-gl_1,gl_1)$, and $(n-gl_0-gl_1,gl_0,gl_1)$ on the simplex $\pazocal{S}=\{\lambda=(\lambda_0,\lambda_1,\lambda_2) \in \R^3|\lambda_0 +\lambda_1+\lambda_2 = n\}$ we notice that those three  compositions of points correspond to the shaded part of the region $\Rl$ on three different coordinate projections: to $(\lambda_0,\lambda_1)$, $(\lambda_0,\lambda_2)$, and $(\lambda_1,\lambda_2)$ planes, respectively. From these projections we can reconstruct the full codespace, as shown in \cref{fig:Mitsubishi_3D}.

We then have a simplex of dimension two, with the discrete set of points on it as described above, and with three vertices corresponding to the cases $\lambda = (n,0,0)$, $\lambda = (0,n,0)$, and $\lambda = (0,0,n)$ respectively, denoted by $j=0, j=1, j=2,$ as described in the code construction in \cref{sec:Simplicial_codes}. Similarly to the AAB code construction, see \cref{sec:Intuition_Aydin}, the main idea is that different codewords have disjoint support on the Dicke states. This is shown in \cref{fig:Mitsubishi_patterns}.

Although in this example we have until now considered $\delta=0$, our construction requires $\delta \geq 2t$. The effect of non-zero $\delta$ on the shape of the allowed compositions $\vec{l}$ as defined by the relations \cref{eqn:lmodel_constraint,eqn:n_constraints} is demonstrated in \cref{fig:delta_graphical}. This can be understood as shifting the subsets of the simplex, creating more space to ensure no overlap occurs between the different "diamonds" constituting the region $\Rl$ when introducing deletion errors.

\begin{figure}
  \centering
  \includegraphics{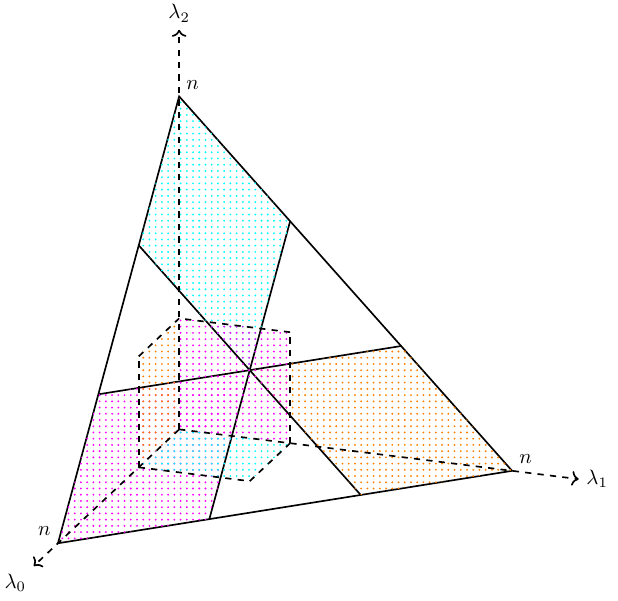}
  \caption{Graphical depiction of the codespace defined in \cref{sec:Simplicial_codes}, constructed from the set $\Rl := \{ l_i| l_i \leq b/3 \; \forall i \in \{0,1\}\}$ depicted in \cref{fig:Mitsubishi_l_model} with $\delta = 0$. Colors of the subsets $(gl_0,gl_1,n-gl_0-gl_1)$, $(gl_0,n-gl_0-gl_1,gl_1)$, and $(n-gl_0-gl_1,gl_0,gl_1)$ correspond to their respective projections (shaded squares) defined via $\Rl$.}
  \label{fig:Mitsubishi_3D}
\end{figure}

\begin{figure}
  \centering
  \begin{minipage}{0.32\textwidth}
    \centering
    \includegraphics{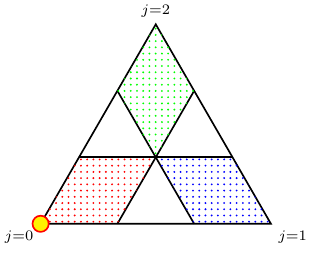}
    \caption*{$c_0$ codeword}
  \end{minipage}
  \hfill
  \begin{minipage}{0.32\textwidth}
    \centering
    \includegraphics{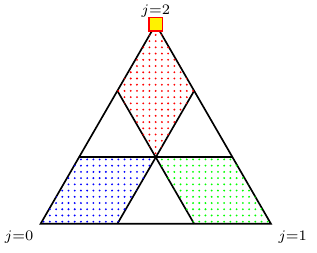}
    \caption*{$c_1$ codeword}
  \end{minipage}
  \hfill
  \begin{minipage}{0.32\textwidth}
    \centering
    \includegraphics{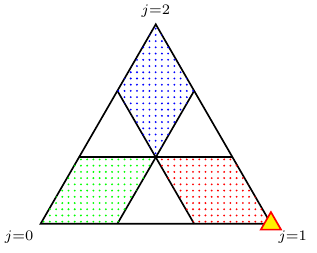}
    \caption*{$c_2$ codeword}
  \end{minipage}

  \caption{Graphical interpretation of our simplicial code construction for qutrit codes, $\DP=\DL=3$. The region $\Rl$ for this example is described in \cref{sec:Intuition_Simplicial} and depicted in \cref{fig:Mitsubishi_l_model}. Dots corresponds to some unique composition $\lambda=(\lambda_0,\lambda_1,\lambda_2)$ such that no dot overlaps with another one. In order to not clutter the image, we divided it into 3 parts, each corresponding to their respective codeword support. $\textcolor{red}{Red}$ shapes correspond to terms with $\textcolor{red}{\vec{l} \equiv 0 \mod \DP}$, $\textcolor{blue}{blue}$ to terms with $\textcolor{blue}{{\vec{l} \equiv 1 \mod \DP}}$, and $\textcolor{green}{green}$ to terms with $\textcolor{green}{{\vec{l} \equiv 2 \mod \DP}}$, cf. the relation of \cref{eqn:l_mod_q}. Besides the other points of the simplex, only one of the three vertices belongs to each codeword: the yellow circle, square, and triangle denote which of the vertices of the simplex belong to the $c_0$, $c_1$, and $c_2$ codewords respectively.}
  \label{fig:Mitsubishi_patterns}
\end{figure}

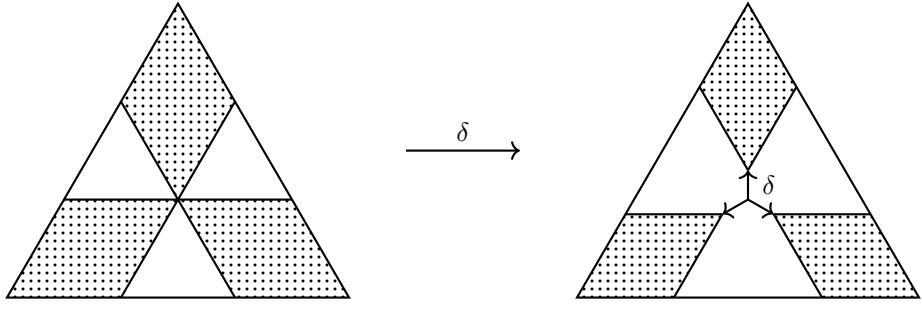
\begin{figure}
\centering
\begin{tikzpicture}[scale=1.5, thick]
  \pgfmathsetmacro{\a}{3}
  \pgfmathsetmacro{\h}{\a/2 * sqrt(3)}
  \pgfmathsetmacro{\s}{0.85}
\begin{scope}
  \coordinate (A) at (0, 0);
  \coordinate (B) at (\a, 0);
  \coordinate (C) at (\a/2, {\a/2 * sqrt(3)}); 
  \coordinate (E1) at (\a/6,{\a/2 * sqrt(3)/3});
  \coordinate (E2) at (\a/3,{\a/2 * 2* sqrt(3)/3});
  
  \coordinate (F1) at ({2*(\a)/3},{\a/2 *2 * sqrt(3) / 3} );
  \coordinate (F2) at ({5/6*\a},{\a/2 * sqrt(3) / 3} );
  
  \coordinate (G1) at (\a/3, 0);
  \coordinate (G2) at (2*\a/3, 0);

  \coordinate (O) at (\a/2, {1/(2*sqrt(3))*\a});
  
  \draw[thick] (A) -- (B) -- (C) -- cycle;
  
  \draw[thick] (E1) -- (F2);
  \draw[thick] (E2) -- (G2);
  \draw[thick] (F1) -- (G1);

  \fill[pattern=dots] (A) -- (E1) -- (O) -- (G1)-- cycle;
  \fill[pattern=dots] (C) -- (E2) -- (O) -- (F1)-- cycle;
  \fill[pattern=dots] (B) -- (F2) -- (O) -- (G2)-- cycle;
\end{scope}

  \draw[->, thick] (\a+0.5,\h/2) -- node[above] {$\delta$} (\a+1.5,\h/2);

\begin{scope}[shift={(\a+2,0)}]
  \coordinate (A) at (0, 0);
  \coordinate (B) at (\a, 0);
  \coordinate (C) at (\a/2, {\a/2 * sqrt(3)}); 
  \coordinate (E1) at (\a/6*\s,{\a/2 * sqrt(3)/3 * \s});
  \coordinate (E2) at (\a/2-\a/6*\s,\h - \h/3 * \s);
  
  \coordinate (F1) at ({\a/2+\a/6*\s},{\h - \h/3 * \s} );
  \coordinate (F2) at ({\a-\a/6*\s},{\a/2 * sqrt(3) / 3 * \s} );
  
  \coordinate (G1) at (\a/3*\s, 0);
  \coordinate (G2) at (\a-\a/3*\s, 0);

  \coordinate (O) at (\a/2, {1/(2*sqrt(3))*\a});
  \coordinate (OA) at (\a/2*\s, {1/(2*sqrt(3))*\a*\s});
  \coordinate (OB) at (\a-\a/2*\s, {1/(2*sqrt(3))*\a*\s});
  \coordinate (OC) at (\a/2, {\h-1/(sqrt(3))*\a*\s)});

  \draw[thick] (A) -- (B) -- (C) -- cycle;

  \draw[<-] (OC) -- (O) node[midway, right]{$\,\delta$};
  \draw[<-] (OA) -- (O);
  \draw[<-] (OB) -- (O);
  
  \draw[thick] (E1) -- (OA);
  \draw[thick] (OA) -- (G1);
  \draw[thick] (E2) -- (OC);
  \draw[thick] (OC) -- (F1);
  \draw[thick] (G2) -- (OB);
  \draw[thick] (OB) -- (F2);

  \fill[pattern=dots] (A) -- (E1) -- (OA) -- (G1)-- cycle;
  \fill[pattern=dots] (C) -- (E2) -- (OC) -- (F1)-- cycle;
  \fill[pattern=dots] (B) -- (F2) -- (OB) -- (G2)-- cycle;
\end{scope}
\end{tikzpicture}
\caption{Action of adding the shift $\delta$, illustrated graphically (shift happens from the center).
}
\label{fig:delta_graphical}
\end{figure}

\subsection{Proof that simplicial codes \cref{eqn:Simplicial_codes} are error-correcting}\label{sec:Simplicial_code_proofs}

In this section we will prove that the simplicial PI codes are error correcting, which is formulated in the following Theorem:
\begin{theorem}
    The qudit PI codes described in \cref{sec:Simplicial_codes}, referred to as simplicial codes, satisfy the KL conditions of \cref{eqn:KL_qudits}.
\end{theorem}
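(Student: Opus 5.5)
The plan is to verify the three conditions C1--C3 of \cref{thrm:PI_KL_conds_qudits} directly for the codewords \cref{eqn:Simplicial_codes}, following the structure of the AAB proof. First I record the support of each codeword. For fixed $i$, the Dicke states appearing in $\ket{c_i}$ are $\ket{D^n_{\lambda^{g\vec l j}}}$ with $j\in[\DP]$ and $\vec l\equiv i+j \bmod \DP$. The index $j$ marks the ``large'' entry $n-\sum_k gl_k$. By \cref{eqn:lmodel_constraint} and $n=gb+\delta+1$, this entry is at least $n-gb=\delta+1>2t\ge$ every other entry up to the spacing, so $j$ can be read off from $\lambda$. All other entries are multiples of $g$, and their residues $l_k \bmod \DP$ determine $i+j$, hence $i$. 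The supports of distinct codewords are therefore disjoint, and within one codeword distinct $(j,\vec l)$ give distinct compositions. C1 then follows: $\braket{c_i|c_j}=0$ for $i\ne j$ by disjointness, and $\braket{c_i|c_i}=\sum_j\sum_{\vec l\equiv i+j}|f(\vec l)|^2=\sum_{\vec l\in\Rl}|f(\vec l)|^2=1$ by \cref{eqn:f_vars_norm}. For the last equality I use \cref{eqn:l_mod_q}: every $\vec l\in\Rl$ has a common residue $r$, so it appears exactly once, with $j=r-i$.

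For C2, and for the off-diagonal part of C3, I take $\mu,\nu\partition{\DP}2t$ and ask when a support point $\lambda$ of $\ket{c_i}$ and a support point $\lambda'=\lambda-\mu+\nu$ of $\ket{c_{i'}}$ can both contribute. Since $\|\nu-\mu\|_\infty\le 2t\le g$, the small entries of $\lambda$ and $\lambda'$ differ by at most $2t$. The large entries also differ by at most $2t<\delta+1$, so $j=j'$ is forced. For a small entry $gl_k$ and $gl'_k$ with $|gl_k-gl'_k|\le 2t$, the residue condition forces $l_k=l'_k$ unless $g=2t$ and the difference is exactly $g$. In that boundary case $l_k-l'_k=\pm1$ gives different residues for different coordinates, whereas all coordinates of $\vec l$ must share one residue. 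The one sub-case I must check by hand is when every coordinate moves by $\pm g$ simultaneously, the ``$r=s$'' case flagged in the Discussion. There I use $\sum\mu=\sum\nu$ to show that $\mu-\nu$ cannot shift all $\DP-1$ small coordinates by $\pm g$ while also keeping the large coordinate within range.

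With $\vec l=\vec l'$ and $j=j'$ we get $\lambda'=\lambda$, so $\mu=\nu$. Then $i\equiv i'$ modulo $\DP$, and since $\DL\le\DP$ this gives $i=i'$. This settles C2 and reduces C3 to the diagonal terms $\mu=\nu$.

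For C3 with $\mu=\nu$, I expand $\bra{c_i}E_\mu^\dagger E_\mu\ket{c_i}$. The phases $\omega^{ij}$ cancel in the modulus squared, and writing $r=i+j$ gives
\[\sum_{r}\sum_{\vec l\equiv r}|f(\vec l)|^2\binom{n-2t}{\lambda^{g\vec l(r-i)}-\mu}\Big/\binom{n}{\lambda^{g\vec l(r-i)}}.\]
The denominator is independent of $j$ because the multinomial is symmetric under permuting entries, which matches the form in \cref{eqn:C4_cond_Simplicial_codes}. Equality for $i$ and $j$ is then exactly the imposed condition \cref{eqn:C4_cond_Simplicial_codes}.

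The main obstacle is the separation argument in the C2 step. It requires showing rigorously, using \cref{eqn:lmodel_constraint}, \cref{eqn:n_constraints} and \cref{eqn:l_mod_q}, that no error pair $\mu\ne\nu$ can map a support point of one codeword onto a support point of another, or onto a different point of the same codeword. The equality case $g=2t$ needs the most care. I would first attempt it through a residue-counting argument on $\lambda-\lambda'=\nu-\mu$ coordinate by coordinate.
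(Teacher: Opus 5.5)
Your proposal is correct and follows essentially the paper's route. First, the supports are disjoint because the large entry separates them. The margin you actually need comes straight from \cref{eqn:lmodel_constraint} applied to two possibly different $\vec l,\vec l'\in\Rl$: it is $n-g\sum_k l_k-g\,l'_m\ge n-gb=\delta+1>2t$, and this is what your phrase ``up to the spacing'' should say. Second, $g\ge 2t$ together with the common residue of $\vec l$ kills every cross term in C2 and every $\mu\ne\nu$ term in C3. Third, the diagonal C3 terms reduce to \cref{eqn:C4_cond_Simplicial_codes}.

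The only variation is in the boundary case $g=2t$. You use residues to force all $\DP-1$ small coordinates to shift in the same direction, then get a contradiction from the large coordinate, which would have to move by $(\DP-1)g>2t$. The paper instead uses $\nu\partition{\DP}2t$ to show that at most one coordinate can shift by $+2t$, and then derives a residue contradiction. Both arguments rely on $\DP>2$.
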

\begin{proof}\
\subsubsection*{$\boxed{\mathrm{\textbf{C1}}}$}
By definition of simplicial codes given in \cref{eqn:Simplicial_codes}, we have:
\begin{equation}\label{eqn:Simplicial_code_coefs}
    \alpha_{i,\lambda} = \sum_{j=0}^{\DP-1} \omega_{\DP}^{ij} \sum_{\substack{\vec{l} \equiv i+j \text{ mod } \DP \\ \vec{l} \in \Rl}} f(\vec{l}) \delta_{\lambda, \lambda^{g\vec{l}j}}.
\end{equation}
First, it is easy to see that within our construction the codewords are normalized, since $\forall i \in [\DL]$:
\begin{align}
     \sum_{\l \partition{q} n} \bar{\alpha}_{i, \l} \alpha_{i, \l} &= \sum_{\l \partition{q} n}\sum_{j,j'=0}^{{\DP}-1} \omega_{\DP}^{-ij}\omega_{\DP}^{ij'} \sum_{\substack{\vec{l} \equiv i+j \text{ mod } {\DP} \\ \vec{l} \in \Rl}} \sum_{\substack{\vec{l'} \equiv i+j' \text{ mod } {\DP} \\ \vec{l'} \in \Rl}}\bar{f}(\vec{l})f(\vec{l'})\delta_{\lambda,\lambda^{g\vec{l}j}}\delta_{\lambda, \lambda^{g\vec{l'}j'}} \\
     &= \sum_{\l \partition{q} n}\sum_{j,j'=0}^{{\DP}-1} \omega_{\DP}^{-ij}\omega_{\DP}^{ij'} \sum_{\substack{\vec{l} \equiv i+j \text{ mod } {\DP} \\ \vec{l} \in \Rl}} \sum_{\substack{\vec{l'} \equiv i+j' \text{ mod } {\DP} \\ \vec{l'} \in \Rl}}\bar{f}(\vec{l})f(\vec{l'})\delta_{\lambda,\lambda^{g\vec{l}j}}\delta_{\lambda^{g\vec{l}j}, \lambda^{g\vec{l'}j'}} \delta_{j,j'} \delta_{\vec{l},\vec{l'}} \\
    &=\sum_{j=0}^{{\DP}-1}\sum_{\substack{\vec{l} \equiv i+j \text{ mod } {\DP} \\ \vec{l} \in \Rl}} \bar{f}(\vec{l})f(\vec{l}) = \sum_{\vec{l} \in \Rl} |f(\vec{l})|^2 = 1,
\end{align}
where the last equality holds by assumption from \cref{eqn:f_vars_norm}.

Next, just like in AAB family of PI codes, to simplify solving the equations in \cref{thrm:PI_KL_conds_qudits}, our intuition when constructing a family of qudit PI codes was to impose a stronger constraint on codewords than orthogonality, namely to make the codewords have disjoint support in the symmetric subspace. That is, our construction has the following property:
\begin{equation}\label{eqn:Disjoint_support}
     \bar{\alpha}_{i,\l} \alpha_{j,\l} = 0, \; \forall i\neq j \in [\DL],
\end{equation}
from which $ \sum_{\l \partition{\DP} n} \bar{\alpha}_{i, \l} \alpha_{j, \l} = 0, \; \forall i\neq j \in [\DL]$ would trivially follow. To show \cref{eqn:Disjoint_support}, we evaluate:
\begin{eqnarray}
    \bar{\alpha}_{i, \l} \alpha_{j, \l}
     &=& \sum_{r,s=0}^{\DP-1} \omega_{\DP}^{-ir}\omega_{\DP}^{js} \sum_{\substack{\vec{l} \equiv i+r \text{ mod } \DP \\ \vec{l} \in \Rl}} \sum_{\substack{\vec{l'} \equiv j+s \text{ mod } \DP \\ \vec{l'} \in \Rl}}\bar{f}(\vec{l})f(\vec{l'})\delta_{\lambda,\lambda^{g\vec{l}r}}\delta_{\lambda^{g\vec{l}r}, \lambda^{g\vec{l'}s}}.
\end{eqnarray}
$\bullet$ For terms with $r=s$:
\begin{equation}
    \delta_{\lambda^{g\vec{l}r}, \lambda^{g\vec{l'}s}} =\delta_{\vec{l}, \vec{l'}},
\end{equation}
and since for these terms
\begin{equation}
\begin{rcases}
\begin{split}
    &l_0\equiv l_1 \equiv \dots \equiv l_{\DP-2} \equiv i + r \text{ mod } \DP \\
    &l_0'\equiv l_1' \equiv \dots \equiv l_{\DP-2}' \equiv j + r \text{ mod } \DP
\end{split} \;
\end{rcases}, i\neq j \Rightarrow \vec{l} \neq \vec{l'},
\end{equation}
thus $\delta_{\lambda^{g\vec{l}r}, \lambda^{g\vec{l'}r}} =\delta_{\vec{l}, \vec{l'}} = 0$.
\\
\\
$\bullet$ For terms with $r \neq s$:
\begin{equation}
     \delta_{\lambda^{g\vec{l}r}, \lambda^{g\vec{l'}s}} = \delta_{l_0,l_0'}\delta_{l_1,l_1'} \dots \delta_{l_{r-1},l_{r-1}'}\delta_{(n-S_{g\vec{l}}),gl_r'}\delta_{l_{r},l_{r+1}'} \dots \delta_{l_{s-2},l_{s-1}'}\delta_{gl_{s-1},(n-S_{g\vec{l'}})}\delta_{l_s,l_s'} \dots \delta_{l_{\DP-2},l_{\DP-2}'},
\end{equation}
but due to \cref{eqn:lmodel_constraint}, we have
\begin{equation}
   gl_r'+S_{g\vec{l}} \leq gb =n-\delta -1 < n \Rightarrow \delta_{(n-S_{g\vec{l}}),gl_r'} = 0,
\end{equation}
thus
$
    \delta_{\lambda^{g\vec{l}r}, \lambda^{g\vec{l'}s}} = 0
$, 
which concludes the proof of C1. \hfill $\square$

\subsubsection*{$\boxed{\mathrm{\textbf{C2}}}$}
Consider some $\mu, \nu \partition{\DP} 2t$. In the following, we show that for all $\lambda,\mu,\nu$ and all $i \neq j$ we have $\bar{\alpha}_{i,\lambda} \alpha_{j,\lambda - \mu + \nu} = 0$, which is a stronger condition than required for C2 to hold. Without loss of generality, here we can consider only those $\lambda$ that satisfy $\lambda - \mu \geq 0$, since due to proof of C2 of \cref{thrm:PI_KL_conds_qudits} the terms with $\lambda$ that don't satisfy this are zero. Then we have for all $i \neq j$:
\begin{equation}
    \bar{\alpha}_{i,\lambda} \alpha_{j,\lambda - \mu + \nu} = \sum_{r,s=0}^{\DP-1} \omega_{\DP}^{-ir}\omega_{\DP}^{js} \sum_{\substack{\vec{l} \equiv i+r \text{ mod } \DP \\ \vec{l} \in \Rl}} \sum_{\substack{\vec{l'} \equiv j+s \text{ mod } \DP \\ \vec{l'} \in \Rl}}\bar{f}(\vec{l})f(\vec{l'})\delta_{\lambda,\lambda^{g\vec{l}r}}\delta_{\lambda^{g\vec{l}r} - \mu + \nu, \lambda^{g\vec{l'}s}}.
\end{equation} 
$\bullet$ For terms with $r \neq s$:
\begin{align}\begin{split}
     \delta_{\lambda^{g\vec{l}r} - \mu + \nu, \lambda^{g\vec{l'}s}} = & \delta_{gl_0-\mu_0+\nu_0,gl_0'}\delta_{gl_1-\mu_1+\nu_1,gl_1'} \dots \delta_{gl_{r-1}-\mu_{r-1}+\nu_{r-1},gl_{r-1}'} \\ 
     &\times \delta_{(n-\mu_r+\nu_r-S_{g\vec{l}}),gl_r'}\delta_{gl_{r}-\mu_{r+1}+\nu_{r+1},gl_{r+1}'} \dots \delta_{gl_{s-2}-\mu_{s-1}+\nu_{s-1},gl_{s-1}'} \\ 
     &\times \delta_{gl_{s-1}-\mu_s+\nu_s,(n-S_{g\vec{l'}})} \delta_{gl_s-\mu_{s+1}+\nu_{s+1},gl_s'} \dots \delta_{gl_{\DP-2}-\mu_{\DP-1}+\nu_{\DP-1},gl_{\DP-2}'},
\end{split}\end{align}
but due to \cref{eqn:lmodel_constraint} we have:
\begin{align}
   gl_r'+\mu_r-\nu_r+S_{g\vec{l}} \leq gb + \mu_r - \nu_r = n + \mu_r - \nu_r - \delta -1 \leq n + 2t - \delta - 1 \leq n-1 < n,
\end{align}
which implies that $\delta_{(n-\mu_r+\nu_r-S_{g\vec{l}}),gl_r'} = 0$ and therefore $\delta_{\lambda^{g\vec{l}r}-\mu+\nu, \lambda^{g\vec{l'}s}} = 0$.
\\
\\
$\bullet$ For terms with $r = s$:
\begin{equation}
     \delta_{\lambda^{g\vec{l}r} - \mu + \nu, \lambda^{g\vec{l'}r}} = \left(\prod_{k=0}^{r-1} \delta_{gl_k-\mu_k+\nu_k,gl_k'}\right) \delta_{(n-\mu_r+\nu_r-S_{g\vec{l}}),(n-S_{g\vec{l'}})}  \left(\prod_{k=r}^{\DP-2} \delta_{gl_k-\mu_{k+1}+\nu_{k+1},gl_k'}\right),
\end{equation}
where $l_k \equiv i + r \text{ mod }\DP$ and $l_k' \equiv j + r \text{ mod }\DP$ with $i \neq j$. Since $g \geq 2t$ and $|\nu_k - \mu_k| \leq 2t$, to satisfy $gl_k + \nu_k - \mu_k = gl_k'$ we must have $g=2t$, since otherwise:
\begin{equation}
    |l_k'-l_k| \geq 1, g >2t \Rightarrow |\nu_k - \mu_k| > 2t,
\end{equation}
which is a contradiction. Consequently, for $g=2t$:
\begin{equation}
    \forall k \in [\DP]: \nu_k - \mu_k = \pm 2t \Rightarrow \forall k \in [\DP-1]:l_k' = l_k \pm 1.
\end{equation}
Without loss of generality assume that $\nu_k - \mu_k = 2t$, which implies $l_k'=l_k + 1$ and therefore $l_k' - l_k = j - i = 1$. But then for $\DP > 2$, we have that for any other $u \in [\DP]$ such that $u \neq k, \; u \neq r$ the following holds: $\nu_u - \mu_u \neq 2t$. This then implies that $j-i \neq 1$, since there is a corresponding inequality $l_u' - l_u \neq 1$ or $l_{u-1}' - l_{u-1} \neq 1$, which is a contradiction because $j-1=1$. Therefore, the condition $gl_k + \nu_k - \mu_k = gl_k'$ or $gl_k + \nu_{k+1} - \mu_{k+1} = gl_k'$ cannot be simultaneously satisfied for all $k \in [\DP-1]$, therefore we must have that $ \delta_{\lambda^{g\vec{l}r} - \mu + \nu, \lambda^{g\vec{l'}r}} = 0$, which concludes the proof of C2 for any $\DP>2$. For $\DP=2$, see \cite{aydin2024family}. \hfill $\square$

\subsubsection*{$\boxed{\mathrm{\textbf{C3}}}$}
Again, consider $\mu, \nu \in 2t$ and, without loss of generality, consider only those $\lambda$ that satisfy $\lambda - \mu \geq 0$. We have:

\begin{equation}
    \bar{\alpha}_{i,\lambda} \alpha_{i,\lambda-\mu+\nu}
    = \sum_{r,s=0}^{\DP-1} \omega_{\DP}^{i(s-r)} \sum_{\substack{\vec{l} \equiv i+r \text{ mod } \DP \\ \vec{l} \in \Rl}} \sum_{\substack{\vec{l'} \equiv i+s \text{ mod } \DP \\ \vec{l'} \in \Rl}}\bar{f}(\vec{l})f(\vec{l'})\delta_{\lambda,\lambda^{g\vec{l}r}}\delta_{\lambda^{g\vec{l}r} - \mu + \nu, \lambda^{g\vec{l'}s}}.
\end{equation}
$\bullet$ For terms with $r \neq s$:
\begin{align}
     \delta_{\lambda^{g\vec{l}r} - \mu + \nu, \lambda^{g\vec{l'}s}} = &\delta_{gl_0-\mu_0+\nu_0,gl_0'}\delta_{gl_1-\mu_1+\nu_1,gl_1'} \dots \delta_{gl_{r-1}-\mu_{r-1}+\nu_{r-1},gl_{r-1}'} \nonumber \\
     &\times \delta_{(n-\mu_r+\nu_r-S_{g\vec{l}}),gl_r'} \delta_{gl_{r}-\mu_{r+1}+\nu_{r+1},gl_{r+1}'}
     \dots \delta_{gl_{s-2}-\mu_{s-1}+\nu_{s-1},gl_{s-1}'} \\
     &\times \delta_{gl_{s-1}-\mu_s+\nu_s,(n-S_{g\vec{l'}})}\delta_{gl_s-\mu_{s+1}+\nu_{s+1},gl_s'} \dots \delta_{gl_{\DP-2}-\mu_{\DP-1}+\nu_{\DP-1},gl_{\DP-2}'}, \nonumber
\end{align}
but again due to \cref{eqn:lmodel_constraint}, we have:
\begin{eqnarray}
   gl_r'+\mu_r-\nu_r+S_{g\vec{l}} \leq gb + \mu_r - \nu_r &=& n + \mu_r - \nu_r - \delta -1 \leq n + 2t - \delta - 1 \leq n-1 < n \nonumber \\
   \Rightarrow
   \delta_{(n-\mu_r+\nu_r-S_{g\vec{l}}),gl_r'} &=& 0,
\end{eqnarray}
thus
$
    \delta_{\lambda^{g\vec{l}r}-\mu+\nu, \lambda^{g\vec{l'}s}} = 0.
$
\\
\\
$\bullet$ For terms with $r = s$:
\begin{equation}
     \delta_{\lambda^{g\vec{l}r} - \mu + \nu, \lambda^{g\vec{l'}r}} = \left(\prod_{k=0}^{r-1} \delta_{gl_k-\mu_k+\nu_k,gl_k'}\right) \delta_{(n-\mu_r+\nu_r-S_{g\vec{l}}),(n-S_{g\vec{l'}})}  \left(\prod_{k=r}^{\DP-2} \delta_{gl_k-\mu_{k+1}+\nu_{k+1},gl_k'}\right),
\end{equation}
where $l_k \equiv i + r \text{ mod }\DP$ and $l_k' \equiv i + r \text{ mod }\DP$. So the only terms that survive must satisfy $gl_k + \nu_k - \mu_k = gl_k'$ for all $k \in [\DP-1]$. Since $g \geq 2t$, $\DP \geq 2$, $l_k \equiv l_k' \mod \DP$, and $|\nu_k - \mu_k| \leq 2t$, to satisfy $gl_k + \nu_k - \mu_k = gl_k'$ we must have $\nu_k = \mu_k$ and $l_k = l_k'$ for all $k \in [\DP-1]$. Therefore only terms with $\mu = \nu$ and $\vec{l} = \vec{l'}$ survive, thus:
\begin{equation}\label{eqn:C3_simplicial_coefs}
    \bar{\alpha}_{i,\lambda} \alpha_{i,\lambda-\mu+\nu}
    = |\alpha_{i,\lambda}|^2 = \sum_{r=0}^{\DP-1} \sum_{\substack{\vec{l} \equiv i+r \text{ mod } \DP \\ \vec{l} \in \Rl}}|f(\vec{l})|^2\delta_{\lambda,\lambda^{g\vec{l}r}}.
\end{equation}
Now consider the full expression for C3 from \cref{thrm:PI_KL_conds_qudits} and substitute the expression from \cref{eqn:C3_simplicial_coefs}. We have:

\begin{multline}\label{eqn:C3_collecting_terms}
     \sum_{\l \partition{\DP} n} (\bar{\alpha}_{i, \l} \alpha_{i, \l-\mu+\nu} - \bar{\alpha}_{j, \l} \alpha_{j, \l-\mu+\nu}) \frac{\binom{n-2t}{\l-\mu}}{\sqrt{\binom{n}{\l} \binom{n}{\l-\mu+\nu}}} = \sum
     _{\l \partition{\DP} n} \frac{\binom{n-2t}{\lambda-\mu}}{\binom{n}{\lambda}} (|\alpha_{i,\lambda}|^2 - |\alpha_{j,\lambda}|^2) \\
     = \sum
     _{\l \partition{\DP} n} \frac{\binom{n-2t}{\lambda-\mu}}{\binom{n}{\lambda}} \left(\sum_{r=0}^{\DP-1} \sum_{\substack{\vec{l} \equiv i+r \text{ mod } \DP \\ \vec{l} \in \Rl}}|f(\vec{l})|^2\delta_{\lambda,\lambda^{g\vec{l}r}} - \sum_{r'=0}^{\DP-1} \sum_{\substack{\vec{l'} \equiv j+r' \text{ mod } \DP \\ \vec{l'} \in \Rl}}|f(\vec{l'})|^2\delta_{\lambda,\lambda^{g\vec{l'}r'}}\right) := C
\end{multline}
We can collect the terms in \cref{eqn:C3_collecting_terms} in such a way that $j+r' \equiv i+r \mod \DP$, i.e.\ set $r' \equiv i + r - j \mod \DP$. Then we have:
\begin{align}
    \label{eqn:C3_simplified_further}
    C &= \sum_{\l \partition{\DP} n} \frac{\binom{n-2t}{\lambda-\mu}}{\binom{n}{\lambda}} \sum_{r=0}^{\DP-1} \sum_{\substack{\vec{l} \equiv i+r \mod \DP \\ \vec{l} \in \Rl}} |f(\vec{l})|^2 (\delta_{\lambda,\lambda^{g\vec{l}r}} - \delta_{\lambda,\lambda^{g\vec{l}(i+r-j)}}) \\
    &= \sum_{r=0}^{\DP-1} \sum_{\substack{\vec{l} \equiv i+r \mod \DP \\ \vec{l} \in \Rl}} |f(\vec{l})|^2 \left( \frac{\binom{n-2t}{\lambda^{g\vec{l}r}-\mu}}{\binom{n}{\lambda^{g\vec{l}r}}} - \frac{\binom{n-2t}{\lambda^{g\vec{l}(i+r-j)}-\mu}}{\binom{n}{\lambda^{g\vec{l}(i+r-j)}}} \right).
\end{align}
By relabeling the summation index $i+r \rightarrow r$ and rearranging the terms we get:
\begin{equation}\label{eqn:C}
    C = \sum_{r=0}^{\DP-1} \sum_{\substack{\vec{l} \equiv r \text{ mod } \DP \\ \vec{l} \in \Rl}} |f(\vec{l})|^2\left( \frac{\binom{n-2t}{\lambda^{g\vec{l}(r-i)}-\mu}}{\binom{n}{\lambda^{g\vec{l}(r-i)}}} - \frac{\binom{n-2t}{\lambda^{g\vec{l}(r-j)}-\mu}}{\binom{n}{\lambda^{g\vec{l}(r-j)}}} \right).
\end{equation}
Finally, we notice that multinomial coefficients are invariant under permutations of indices in their defining composition
\begin{equation}
    \binom{n}{\lambda^{g\vec{l}(r-i)}} = \binom{n}{\lambda^{g\vec{l}(r-j)}} = \binom{n}{\lambda^{g\vec{l}r}}.
\end{equation}
Subsequently, using the relation of \cref{eqn:C4_cond_Simplicial_codes}, \cref{eqn:C} simplifies to:
\begin{equation}\label{eqn:C4_simplicial_final_form}
    C = \sum_{r=0}^{\DP-1} \sum_{\substack{\vec{l} \equiv r \text{ mod } \DP \\ \vec{l} \in \Rl}} \frac{|f(\vec{l})|^2}{\binom{n}{\lambda^{g\vec{l}r}}} \left[ \binom{n-2t}{\lambda^{g\vec{l}(r-i)} - \mu} - \binom{n-2t}{\lambda^{g\vec{l}(r-j)} - \mu} \right] = 0.
\end{equation}
\hfill $\square$
\\

\noindent Since C1, C2, and C3 are satisfied, this concludes the proof. \\
\end{proof}

Note that for a transpostion $\sigma_{r-i,r-j} \in \mathrm{S}_{\DP}$, where $\mathrm{S}_{\DP}$ is the symmetric group of size $\DP$, acting on a composition as $\sigma_{r-i,r-j} \cdot  \lambda^{g\vec{l}(r-i)} = \lambda^{g\vec{l}(r-j)}$, since permutations acting on the full multinomial coefficients leave them invariant, we also have the following equality:
\begin{equation}\label{eqn:C4_cond_permutations}
\binom{n-2t}{\lambda^{g\vec{l}(r-j)}-\mu} = \binom{n-2t}{\lambda^{g\vec{l}(r-i)}-\sigma_{r-i,r-j}\cdot\mu}.
\end{equation}

\subsection{Approximation to the optimal set $\Rl$ for simplicial PI codes}\label{ch:optimal_lmodel}
Notice that for a given $t$, \cref{eqn:C4_cond_Simplicial_codes} with $i,j \in [\DL]$ and $\mu\partition{\DP}2t$ become a linear system of equations in (non-negative) variables $|f(\vec{l})|^2$ once we specify the region $\Rl$ and $n$. Without fixing $n(t)$ and the region $\Rl$, while much simpler than the system from \cref{thrm:PI_KL_conds_qudits}, \cref{eqn:C4_cond_Simplicial_codes} is still a system of non-linear equations. If we were able to identify the optimal region $\Rl$ (in terms of giving the largest scaling of the code distance $d(n)$, i.e.\ smallest $n(t)$), then finding $f(\vec{l})$ reduces to solving a linear program \eqref{eqn:C4_cond_Simplicial_codes}. In this section we construct a good approximation to the optimal region $\Rl$ that approaches the optimal region $\Rl$ as $t$ grows.

\subsubsection{Boundary regions}
Notice how \cref{eqn:lmodel_constraint} already puts a significant constraint on what regions $\Rl$ we can choose:
\begin{equation*}
    \forall \vec{l},\vec{l'} \in \Rl, \forall k \in [\DP-1] : l_0 + l_1 + \dots +l_{k-1} + l'_k + l_{k} + \dots + l_{\DP-2} \leq b.
\end{equation*}
Next, consider $l_{\rm max}$ such that:
\begin{equation}\label{eqn:lmax_def}
    \forall \vec{l} \in \Rl, \; \forall k \in [\DP-1] : l_k \leq l_{\max}.
\end{equation}
Since $\vec{l},\vec{l'}$ in \cref{eqn:lmodel_constraint} are two independent vectors belonging to the same region, we have $l'_k \leq l_{\max}$. Because \cref{eqn:lmodel_constraint} should hold for any $\vec{l}' \in \Rl$, we can choose $l'_k = l_{\max}$. The condition in \cref{eqn:lmodel_constraint} then becomes:
\begin{equation}\label{eqn:lmodel_contraint_simplified}
    \forall \vec{l} \in \Rl :
    \begin{cases}
    l_0 + l_1 + \dots + l_{\DP-2} + l_{\max} \leq b \\
    \forall k \in [\DP-1] : l_k \leq l_{\max} 
\end{cases}
\end{equation}

From this simplified form of the constraints on the set $\Rl$ in \cref{eqn:lmodel_contraint_simplified} we can already see that the region is defined by $l_{\max}$ and $b$ only. Moreover, we can easily rescale the equations by $b$, so then $b$ would only be responsible for the volume of the region $\Rl$, while $l_{\max}$ would be effectively defining its shape. From these equations we can formulate bounds on $l_{\rm max}$ in terms of the parameter $b$ as follows.

\subsubsection*{\underline{\textit{Upper bound on $l_{\rm max}$}:}}
It is easy to see that $l_{\max}$ cannot be larger than $b/2$. To prove this, assume $l_{\max}>b/2$, then $\exists \vec{l} \in \Rl$ that has $l_{\max}$ as one of its entries. Therefore, for this $\vec{l}$ we have:
\begin{equation*}
    l_0 + l_1 + \dots + l_{\DP-2} + l_{\max} > 2l_{\max} > b,
\end{equation*}
so that \cref{eqn:lmodel_contraint_simplified} does not hold, which is a contradiction. Thus the upper bound is: $l_{\max} \leq b/2$.

For $l_{\max} = b/2$, the constraint is simply:
\begin{equation}
    l_0 + l_1 + \dots + l_{\DP-2} \leq b/2,
\end{equation}
thus the region $\Rl$ is just a simplex of dimension $\DP-1$. We show an example of $\Rl$ for $\DP=3$ with $l_{\max}=b/2$ in \cref{fig:Sierpinski_lmodel}. The codespace formed from this region (constructed via the same method as described in~\cref{sec:Intuition_Simplicial}), somewhat resembling the Sierpinski triangle, is shown in~\cref{fig:Sierpinski_2D}. We denote this set as $\Rl_S$.

\begin{figure}
  \centering
  \begin{minipage}{0.49\textwidth}
    \centering
    \begin{tikzpicture}[scale=1]
      \def\a{6}
      \draw[thick, -{Latex}] (0,0) -- ({\a+0.5},0) node[below] {$l_0$};
      \draw[thick,-{Latex}] (0,0) -- (0,{\a+0.5}) node[left]  {$l_1$};

     \coordinate (A) at (0, 0);
     \coordinate (B) at (\a, 0);
     \coordinate (C) at (0, \a); 
    \coordinate (E) at (0,\a/2);
    \coordinate (F) at (\a/2,\a/2 );
    \coordinate (G) at (\a/2, 0);

    \coordinate (O) at ({\a/2}, {\a/2});
    
    \draw[thick] (C) -- (B);
    \draw[thick] (G) -- (E);
  
     \fill[pattern={Lines[angle=45, distance=4pt, line width=0.5pt]}] (A) -- (E)-- (G) -- cycle;

    \draw[thick] (\a/2,0.1) -- (\a/2,-0.1);
    \draw[thick] (-0.1,\a/2) -- (0.1,\a/2);
    \node at (E) [left]  {$b/2$};
    \node at (G) [below]  {$b/2=l_{\rm{max}}$};
    \node at (C) [left]  {$b$};
    \node at (B) [below]  {$b$};
   \end{tikzpicture}
  \caption{Region $\Rl$ with shape defined by \mbox{$l_{\max}=b/2$} and $\DP=3$, shown as the shaded section.}
  \label{fig:Sierpinski_lmodel}
  \end{minipage}
  \hfill
  \begin{minipage}{0.49\textwidth}
    \centering
    \begin{tikzpicture}[scale=1.2]
     \def\a{6}
     \coordinate (A) at (0, 0);
     \coordinate (B) at (\a, 0);
     \coordinate (C) at (\a/2, {\a/2 * sqrt(3)}); 
     \coordinate (E) at (\a/4,{\a/2 * sqrt(3)/2});
     \coordinate (F) at ({(\a+\a/2)/2},{\a/2 * sqrt(3) / 2} );
     \coordinate (G) at (\a/2, 0);
  
     \draw[thick] (A) -- (B) -- (C) -- cycle;
     \draw[thick] (E) -- (F) -- (G) -- cycle;

     \fill[pattern=dots] (A) -- (E) -- (G) -- cycle;
     \fill[pattern=dots] (C) -- (E) -- (F) -- cycle;
     \fill[pattern=dots] (B) -- (F) -- (G) -- cycle;
    \end{tikzpicture}
    \caption{Codespace formed from $\Rl_S$ for $\DP=3$, represented as a discrete (dotted) subset of the simplex.}
    \label{fig:Sierpinski_2D}
  \end{minipage}
\end{figure}

\subsubsection*{\underline{\textit{Lower bound on $l_{\rm max}$}:}}
If we set $l_{\max} \leq  b/{\DP}$, then the first inequality in \cref{eqn:lmodel_contraint_simplified} is automatically satisfied:
\begin{equation*}
    l_{\max} \leq b/\DP \Rightarrow l_0 + l_1 + \dots + l_{\DP-2} + l_{lmax} \leq \DP \cdot l_{\max} \leq \DP \cdot b/\DP = b.
\end{equation*}
Thus for any values of $l_{\max} \leq b/\DP$ the region $\Rl$ is a hypercube with edges of length $l_{\max}$. Since the region defined by $l_{\max} = b/\DP$ contains all the other regions with smaller values of $l_{\max}$, we can set this as the lower bound for $l_{\max}$. This is because considering smaller regions will not improve the scaling of $n(t)$, since we are effectively reducing the number of variables in the relation of \cref{eqn:C4_cond_Simplicial_codes} by setting some to zero. We conclude that the lower bound is given by $l_{\max} \geq b/\DP$.

Notice that we have encountered the region $\Rl$ with $l_{\max} = b/\DP$ for the case of $\DP=3$ in \cref{fig:Mitsubishi_l_model}. The codespace formed from this region is shown in \cref{fig:Mitsubishi_2D}.
We denote this set as $\Rl_M$.

All other $\Rl$ interpolate between $\Rl_M$ and $\Rl_S$ as we change $l_{\max}$ from $b/\DP$ to $b/2$.
\\

In the next two sections we solve an \textbf{auxiliary problem} of finding the region $\Rl$ which, for a fixed $b$, has the highest volume, i.e.\ we find the optimal ratio of $l_{\max}/b$ that achieves the highest volume of $\Rl$.

We then argue in \cref{sec:justifying_optimal} why the region $\Rl$ found as a solution to this auxiliary problem is a good approximation to the optimal $\Rl$ giving the smallest scaling $n(t)$.

\subsubsection{Optimizing volume for $\DP=3$}\label{sec:Optim_vol_q=3}
Let us first consider a special case of this auxiliary problem for $\DP=3$.
\\
\\

\begin{problem}
    For $\DP=3$ find $l_{\max}$ such that for a fixed $b$ it maximizes the volume of the region $\Rl$.
\end{problem}
\noindent \textit{Solution:}\\
We have that the simplex and its region $\Rl$ have dimension $D = \DP-1=2$. The first inequality in \cref{eqn:lmodel_contraint_simplified} defines a half-plane $\mathfrak h$ cut by a line, given by the equation:
\begin{equation}\label{eqn:lmodel_line}
    \frak h: \; l_0 + l_1 \leq b-l_{\max},
\end{equation}
and the rest of the inequalities in \cref{eqn:lmodel_contraint_simplified} define a square:
\begin{equation}\label{eqn:lmodel_square}
\frak s =
    \begin{cases}
        l_0\leq l_{\max} \\
        l_1 \leq l_{\max}
    \end{cases} ,
\end{equation}
therefore any region $\Rl$ for $q=3$ is given by $ {\frak s}\,\cap \, {\frak h}$ depicted as the shaded region in \cref{fig:lmodel_intermediate}.

\begin{figure}
\centering
\begin{tikzpicture}[scale=1.5]
  \def\a{6}
  \draw[thick, -{Latex}] (0,0) -- ({\a+0.5},0) node[below] {$l_0$};
  \draw[thick,-{Latex}] (0,0) -- (0,{\a+0.5}) node[left]  {$l_1$};

  \coordinate (A) at (0, 0);
  \coordinate (B) at (\a, 0);
  \coordinate (C) at (0, \a); 
  \coordinate (E) at (0,\a/2);
  \coordinate (F) at (\a/2,\a/2 );
  \coordinate (G) at (\a/2, 0);

  \coordinate (lmaxG) at ({4/5 * \a/2}, 0);
  \coordinate (otherlmaxG) at ({6/5 * \a/2}, 0);
  
  \coordinate (lmaxE) at (0, {4/5 * \a/2});
  \coordinate (otherlmaxE) at (0, {6/5 * \a/2});
  
  \coordinate (O) at ({4/5 * \a/2}, {4/5 * \a/2});
  
  \coordinate (O1) at ({4/5 * \a/2 *1/2}, {4/5 * \a/2});
  
  \coordinate (O2) at ({4/5 * \a/2}, {4/5 * \a/2 *1/2});
  
  \draw[thick] (C) -- (B);
  \draw[thick] (lmaxG) -- (O2);
  \draw[thick] (lmaxE) -- (O1);
  \draw[thick, dashed] (otherlmaxG) -- (O2);
  \draw[thick, dashed] (otherlmaxE) -- (O1);
  \draw[thick] (O1) -- (O2);
  \draw[thick, dashed] (O1) -- (O)--(O2);

  \fill[pattern={Lines[angle=45, distance=4pt, line width=0.5pt]}] (lmaxE) -- (O1)-- (O2) -- (lmaxG) --(A) -- cycle;

  \draw[thick] (\a/2,0.1) -- (\a/2,-0.1);
  \draw[thick] (-0.1,\a/2) -- (0.1,\a/2);
  \node at (E) [left]  {$b/2$};'
  \node at (otherlmaxE) [left]  {$b-l_{\max}$};
  \node at (G) [below]  {$b/2$};
  \node at (otherlmaxG) [below]  {$\qquad b - l_{\max}$};
  \node at (C) [left]  {$b$};
  \node at (B) [below]  {$b$};
  \node at (lmaxG) [below]  {$l_{\rm{max}}$};
  \node at (lmaxE) [left]  {$l_{\rm{max}}$};
\end{tikzpicture}
\caption{Any region $\Rl$ for $q=3$ is a square at origin with edge length $l_{\max}$, cut by a line $l_1 = b-l_{\max} - l_0$.}
\label{fig:lmodel_intermediate}
\end{figure}
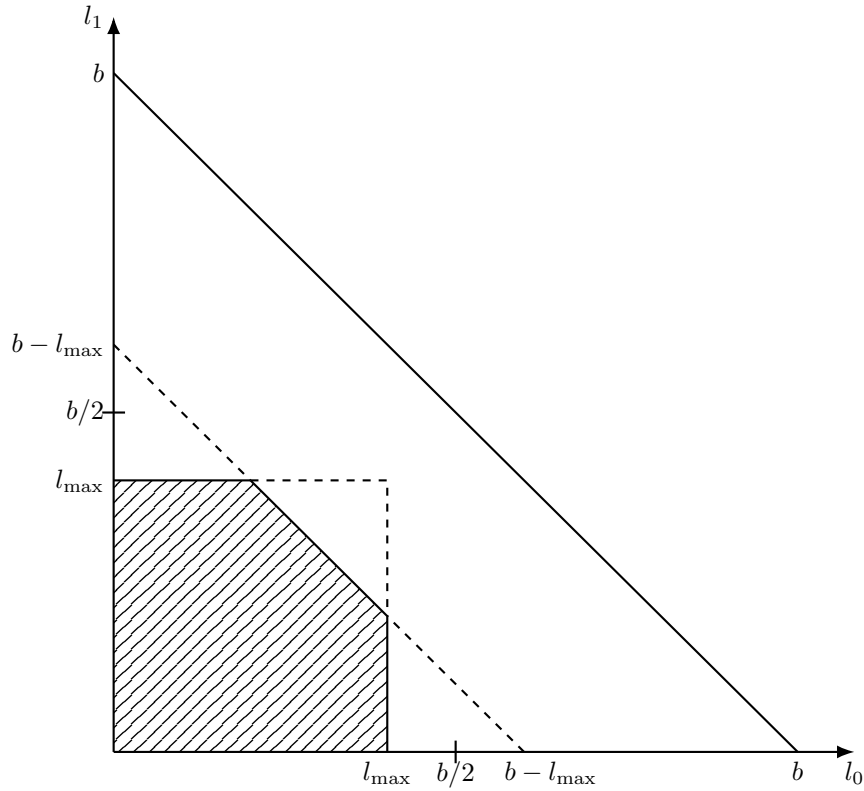
Notice that the larger $l_{\max}$ is, the larger fraction of the square is being cut by the line. Indeed, for $l_{\max}=b/q=b/3$ the line touches the square, and for $l_{\max}=b/2$ the line cuts the square along the diagonal,cf. \cref{fig:lmodel_boundaries_q=3}.

\begin{figure}
  \centering
  \begin{minipage}{0.4\textwidth}
    \centering
    \begin{tikzpicture}[scale=1]
      \def\a{6}
      \draw[thick, -{Latex}] (0,0) -- ({\a+0.5},0) node[below] {$l_0$};
      \draw[thick,-{Latex}] (0,0) -- (0,{\a+0.5}) node[left]  {$l_1$};

      \coordinate (A) at (0, 0);
      \coordinate (B) at (\a, 0);
      \coordinate (C) at (0, \a); 
      \coordinate (E1) at (0,\a/3);
      \coordinate (E2) at (0,2*\a/3);
  
      \coordinate (F1) at (\a/3,2*\a/3 );
      \coordinate (F2) at (2*\a/3,\a/3 );
      
      \coordinate (G1) at (\a/3, 0);
      \coordinate (G2) at (2*\a/3, 0);

      \coordinate (O) at ({\a/3}, {\a/3});
        
      \draw[thick] (C) -- (B);
      \draw[thick] (G1) -- (O);
      
      \draw[thick] (E1) -- (O);
      \draw[thick, dashed] (G2) -- (E2);

      \fill[pattern={Lines[angle=45, distance=4pt, line width=0.5pt]}] (A) -- (E1)-- (O) --(G1)-- cycle;

      \draw[thick] (\a/3,0.1) -- (\a/3,-0.1);
      \draw[thick] (-0.1,\a/3) -- (0.1,\a/3);
      \node at (E1) [left]  {$b/3$};
      \node at (G1) [below]  {$b/3=l_{\rm{max}}$};
      \node at (C) [left]  {$b$};
      \node at (B) [below]  {$b$};
    \end{tikzpicture}
    \caption*{Graphical representation of $\Rl_M$.}
  \end{minipage}
  \hfill
  \begin{minipage}{0.4\textwidth}
    \centering
    \begin{tikzpicture}[scale=1]
      \def\a{6}
      \draw[thick, -{Latex}] (0,0) -- ({\a+0.5},0) node[below]{$l_0$};
      \draw[thick,-{Latex}] (0,0) -- (0,{\a+0.5}) node[left]{$l_1$};
  
      \coordinate (A) at (0, 0);
      \coordinate (B) at (\a, 0);
      \coordinate (C) at (0, \a); 
      \coordinate (E) at (0,\a/2);
      \coordinate (F) at (\a/2,\a/2 );
      \coordinate (G) at (\a/2, 0);

      \coordinate (O) at ({\a/2}, {\a/2});
        
      \draw[thick] (C) -- (B);
      \draw[thick, dashed] (G) -- (E);
      
      \draw[thick, dashed] (E) -- (O);
      \draw[thick, dashed] (G) -- (O);
      
      \fill[pattern={Lines[angle=45, distance=4pt, line width=0.5pt]}] (A) -- (E)-- (G) -- cycle;
  
      \draw[thick] (\a/2,0.1) -- (\a/2,-0.1);
      \draw[thick] (-0.1,\a/2) -- (0.1,\a/2);
      \node at (E) [left]  {$b/2$};
      \node at (G) [below]  {$b/2=l_{\rm{max}}$};
      \node at (C) [left]  {$b$};
      \node at (B) [below]  {$b$};
    \end{tikzpicture}
    \caption*{Graphical representation of $\Rl_S$.}
  \end{minipage}
  \caption{Boundary regions presented as squares cut by lines.}
  \label{fig:lmodel_boundaries_q=3}
\end{figure}
For $l_{\max} \in [b/3;b/2]$, the volume of the region $\Rl$ (i.e.\ the area of the shaded region in \cref{fig:lmodel_intermediate}) reads
\begin{equation}
V=l_{\max}^2 - (l_{\max} - (b - 2l_{\max}))^2 \cdot \frac{1}{2}=l_{\max}^2 - (3l_{\max}-b)^2 \cdot \frac{1}{2}.
\end{equation}
It follows that $V$ is maximized for
\begin{equation} \label{eqn:lmax_opt}
    l_{\max} = \frac{3}{7}b.
\end{equation}
\qed

\subsubsection{Optimizing the volume for arbitrary $\DP$}\label{sec:Optim_vol_all_q}

~\\

\noindent 
\begin{problem} \label{prob:max_vol_R}
     For a fixed $b$, find $l_{\rm max}$ that maximizes the volume of the region $\Rl$. \label{prob:max_vol_R}
\end{problem}
\noindent \textit{Solution:}\\
For arbitrary $\DP$, it is easy to see that the $D:=\DP-1$-dimensional system \eqref{eqn:lmodel_contraint_simplified} defines a hypercube cut by a hyperplane:
\begin{equation}\label{eqn:hypercube_hyperplane_general}
    \forall \vec{l} \in \Rl :
    \begin{cases}
    l_0 + l_1 + \dots + l_{\DP-2} \leq b - l_{\max} \\
    \forall k \in [\DP-1] : l_k \leq l_{\max} 
\end{cases},
\end{equation}
so first we have to compute the volume of this region. It turns out that this is not as simple as in the $\DP=3$ case: in general, computing the volume of convex polytopes is as hard as computing the permanent of a matrix, i.e.\ this problem is \textbf{\#P}-hard \cite{dyer1988polyhedra}. Even in the case of a hypercube cut by a single hyperplane, this problem is algorithmically difficult \cite{khachiyan1989problem}. However, there exist analytic expressions for this problem, which in certain special cases simplify drastically. Since our equation for the hyperplane (first line of \cref{eqn:hypercube_hyperplane_general}) is quite simple, fortunately the volume formula simplifies in our case as well.

We can compute the volume of a hypercube cut by a hyperplane using the following formula \cite{cho2022volumehypercubesclippedhyperplanes, Barrow01011979, polya1913berechnung}:
\begin{equation}\label{eqn:volume_formula}
    \mathrm{vol}([0,1]^D \cap H^+) = \sum_{\mathbf{v} \in F^0 \cap H^+} \frac{(-1)^{|0_{\mathbf{v}}|} h(\mathbf{v})^D}{D! \prod_{t=1}^D a_t},
\end{equation}
where:
\begin{itemize}[itemsep=1pt, parsep=1pt, topsep=1pt, partopsep=1pt]
    \item $D=\DP-1$;
    \item The hypercube $[0,1]^D$ has edges of length 1;
    \item $F^0$ is the set of all vertices of the hypercube;
    \item $|0_\mathbf{v}|$ is the number of zeros in the entries of $\mathbf{v}$;
    \item The half-space $H^+$ clipped by the hyperplane is generally given by:
    \begin{equation}
        H^+ := \left\{ \mathbf{l} \,|\, h(\mathbf{l}) := \mathbf{a} \cdot \mathbf{l} + r = a_0 l_0 + a_1 l_1 + \cdots + a_{D-1} l_{D-1} + r \geq 0 \right\},
    \end{equation}
    where, in our case, $r=\frac{b}{l_{\max}}-1$ and $\mathbf{a} = (\underbrace{-1,\dots,-1}_D)$.
\end{itemize}
The reason why $r=\frac{b}{l_{\max}}-1$ instead of $b-l_{\max}$ like in the system \eqref{eqn:lmodel_contraint_simplified} is because we have rescaled the hypercube by $l_{\max}$ to $[0,1]^D$ in order to fit the volume formula, so we had to rescale the hyperplane as well. The volume formula simplifies further after the substitutions:
\begin{equation*}
    \mathrm{vol}([0,1]^D \cap H^+) = \sum_{\mathbf{v} \in F^0 \cap H^+} \frac{(-1)^{|0_{\mathbf{v}}|} h(\mathbf{v})^D}{D! \, (-1)^D} = \sum_{\mathbf{v} \in F^0 \cap H^+}(-1)^{|1_{\mathbf{v}}|} \frac{h(\mathbf{v})^D}{D!},
\end{equation*}
where $|1_{\mathbf{v}}|$ is the number of ones in the entries of $\mathbf{v}$. Note that $h(\textbf{v})$ also simplifies for vertices:
\begin{equation}
    h(\textbf{v}) = \frac{b}{l_{\max}}-1-|1_{\textbf{v}}|
\end{equation}
Now, to find the volume of the hypercube of edge size $l_{\max}$ cut by a hyperplane, we rescale the system back by $l_{\max}$, to get:
\begin{equation}\label{eqn:lmodel_volume_all_q}
    V = \sum_{\mathbf{v} \in F^0 \cap H^+}(-1)^{|1_{\mathbf{v}}|} \frac{h(\mathbf{v})^D}{D!}l_{\max}^D = \sum_{\mathbf{v} \in F^0 \cap H^+}(-1)^{|1_{\mathbf{v}}|} \frac{(b-(1+|1_{\textbf{v}}|)l_{\max})^D}{D!}
\end{equation}
Notice that vertices $\textbf{v}$ of the hypercube that also belong to the half-space $H^+$ must satisfy $h(\textbf{v}) = \frac{b}{l_{\max}}-1-|1_{\textbf{v}}| \geq 0$, i.e.\ $|1_{\textbf{v}}| \leq  \frac{b}{l_{\max}}-1$. Additionally, there are multiple terms in the above expression that have different $\textbf{v}$ but same $|1_{\textbf{v}}|$, specifically there are exactly $\binom{D}{|1_{\textbf{v}}|}$ such terms for every $|1_{\textbf{v}}|$. So we can further simplify \cref{eqn:lmodel_volume_all_q} to:
\begin{equation}\label{eqn:volume_func}
    V = \sum_{w \in \{0,1,\dots,\left \lfloor{r}\right \rfloor \}}(-1)^w \frac{(b-(1+w)l_{\max})^D}{D!} \binom{D}{w}
\end{equation}
where, again, $r = \frac{b}{l_{\max}}-1$ and $l_{\max} \in [b/\DP,b/2]$. Therefore in this special case we have a closed form formula for the volume of a convex polytope. Now, to find $l_{\max}$ that maximizes the volume, we again solve to find all extrema:
\begin{equation}\label{eqn:dV_dl_simplified}
    \frac{dV}{dl_{\max}} =\sum_{w \in \{0,1,\dots,\left \lfloor{r}\right \rfloor \}}(-1)^{1+w} (1+w)\frac{(b-(1+w)l_{\max})^{D-1}}{(D-1)!} \binom{D}{w} = 0.
\end{equation}
The first couple of solutions found numerically are shown in \cref{tab:lmodel_optim_solutions}. $V(l_{\max})$ is plotted in range $l_{\max} \in [b/\DP,b/2]$ for several values of $\DP$ in \cref{fig:vol_q_3-9}
\begin{table}[H]
    \centering
    \begin{tabular}{ |c|c|c| } 
      \hline
       &    &        \\ 
       $D=2\; (\DP=3) \!: \;\frac{l_{\max}}{b} = \frac{3}{7}$ 
       &  $D=3\; (\DP=4) \!: \;\frac{l_{\max}}{b} = \frac{1}{23}(11-\sqrt{6})$ 
            & $D=4\; (\DP=5) \!: \;\frac{l_{\max}}{b} =
        \frac{1}{3}$ \\ 
       &    &        \\ 
      \hline
    \end{tabular}
    \caption{Optimal $l_{\max}$ for $\DP=3,4,5$.}
    \label{tab:lmodel_optim_solutions}
\end{table}

\begin{figure}[H]
    \centering
    \includegraphics[width=0.7\linewidth]{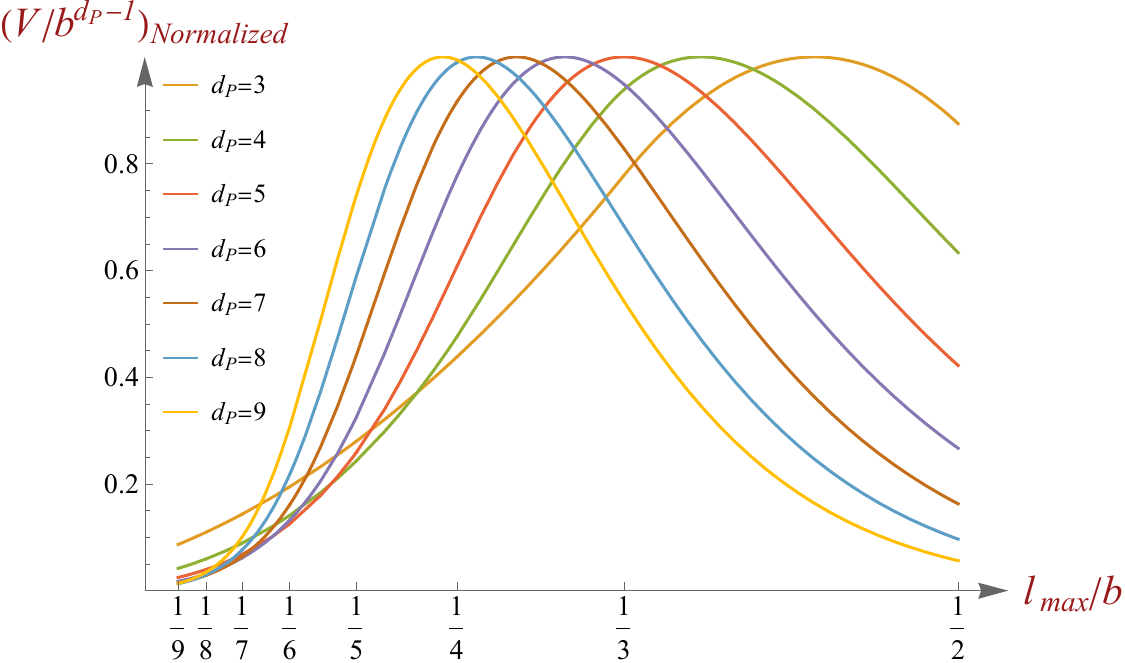}
    \caption{Plots of $V(l_{\max}/b)$ for various values of $\DP$. Note that we rescaled the volume by $b^{\DP-1}$, and also for each curve we normalized by the value of $V$ at the extremum (found through solving \cref{eqn:dV_dl_simplified}) for each value of $\DP$ respectively, in order to be able to plot all the curves on the same scale. We had to do this since as $\DP$ grows for fixed $b$ the value of $V(l_{\max}/b)$ decays super-exponentially, in other words the ratio of the volume of $\Rl$ to the volume of the simplex decays super-exponentially with the dimension of the simplex, which happens due to the curse of dimensionality \cite{bellman1961adaptive}. Notice that for any value of $\DP$, in the range $l_{\max} \in [b/\DP,b/2]$ it is apparent that $V(l_{\max})$ always has one maximum, which is the solution we find when solving \cref{eqn:dV_dl_simplified}.}
    \label{fig:vol_q_3-9}
\end{figure}
\qed

\subsection{Justifying the approximation}\label{sec:justifying_optimal}

In Appendices~\ref{sec:Optim_vol_q=3} and \ref{sec:Optim_vol_all_q} we described the optimal region in terms of giving the highest volume for fixed $b$. But notice that in the system of equations \eqref{eqn:C4_cond_Simplicial_codes} for fixed $n$ (so for fixed $b$ as well) the more possible different vectors $\vec{l}$ the region contains, the more terms does each equation have. That is, the number of variables grows with the number of possible vectors $\vec{l}$ in the region. And the number of vectors grows with the volume of the region $\Rl$.

Of course, if there was a guarantee that all the equations in the system are linearly independent, solving \cref{prob:max_vol_R} would obviously give the optimal $\Rl$ for the original problem: since the number of equations in the system \eqref{eqn:C4_cond_Simplicial_codes} is a function of $\DP,\DL,t$ only, and $n$ is a function of $t$ and $\Rl$, there is clearly some minimal amount of variables that we have to keep non-zero for the linear program (LP) to be solvable. That is, for fixed $t$ and fixed $\Rl$, there is a lower bound on the amount of $f(\vec{l})$ that have to be non-zero, therefore a lower bound on the volume of the region, which for a fixed $l_{\max}$ scales only with $n$. This means that we can achieve lower $n$ (that still keeps the LP solvable) by choosing the right $l_{\max}$ to maximize the volume. 

Unfortunately, it is non-trivial whether the equations in the system \eqref{eqn:C4_cond_Simplicial_codes} are linearly independent. Nevertheless, we can use numerical methods to test the idea of optimizing for the number of variables to find the optimal solution in terms of scaling. The numerical evidence we produced (see \cref{fig:n_lmax_convex} and \cref{fig:n_lmax_converge}) suggests that this approach gives a good approximation.

From \cref{fig:n_lmax_convex} we can make several observations. First, it suggests that the dependence of $n_{\rm min}(l_{\rm max})$ is convex on $l_{max}/b \in [1/\DP, 1/2]$. This is something we should expect if we assume that solutions found through \cref{prob:max_vol_R} are good approximations, since the volume function \cref{eqn:volume_func} is concave on that range of $l_{max}/b$. Second, we see that that the value of $l_{\rm max}$ given by the continuous \cref{prob:max_vol_R} is close to the true solution for various values of $\DP$ and $t$. Note that all values of $l_{max}$ and $b$ are integers for the true problem of testing $n(t)$ dependence of simplicial codes against $l_{max}$, while in the auxiliary \cref{prob:max_vol_R} $l_{max}$ is continuous.

Furthermore, \cref{fig:n_lmax_converge} suggests that for a fixed $\DP$, as we increase $t$, the solution for $l_{max}/b$ found through solving \cref{prob:max_vol_R} for large $t$ is within range of possible true solutions for $l_{max}/b$ achieving the minimum $n(t)$, while all the points outside of the solution for \cref{prob:max_vol_R} that also achieve the minimum for smaller values of $t$ tend to diverge from the minimum as $t$ increases. In other words, as the range of possible $l_{max}$ that achieve the minimum $n(t)$ decreases as $t$ increases, the solution for \cref{prob:max_vol_R} stays within that range. Note that this is not necessarily true for smaller values of $t$, as observed in \cref{fig:n_lmax_convex} for $\DP = 4$ and $5$.

\addtocounter{figure}{-1}
\begin{figure}[H]
\centering
\begin{tabular}{cc}
\begin{subfigure}{0.47\textwidth}
    \includegraphics[width=\linewidth]{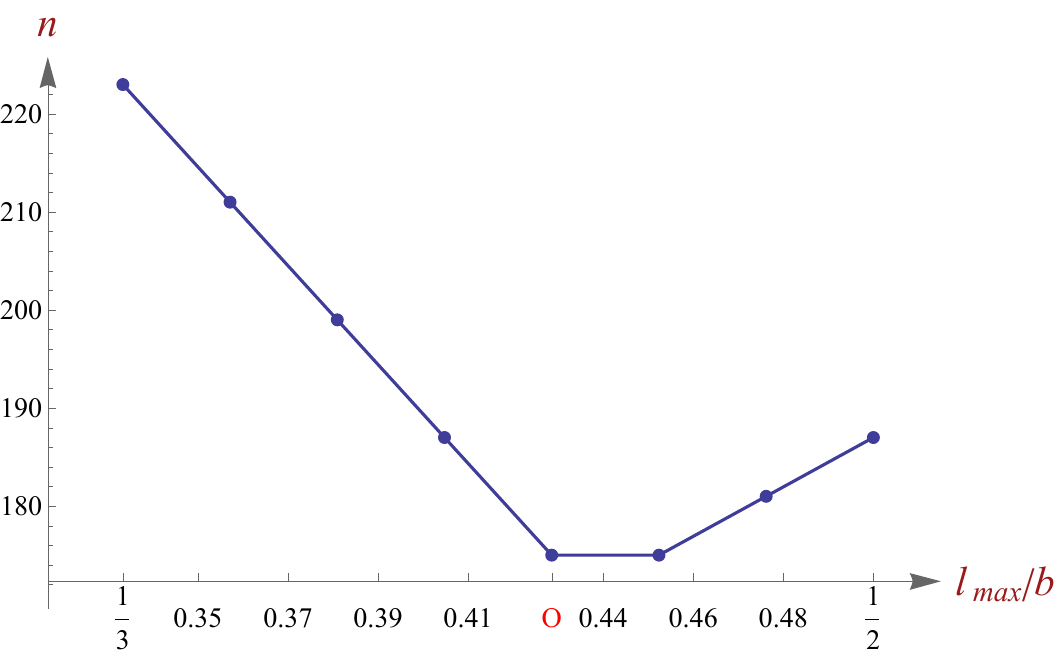}
    \caption{$n(l_{\max})$ at $\DP=3,t=3$, step = 1/42}
\end{subfigure}
&
\begin{subfigure}{0.47\textwidth}
    \includegraphics[width=\linewidth]{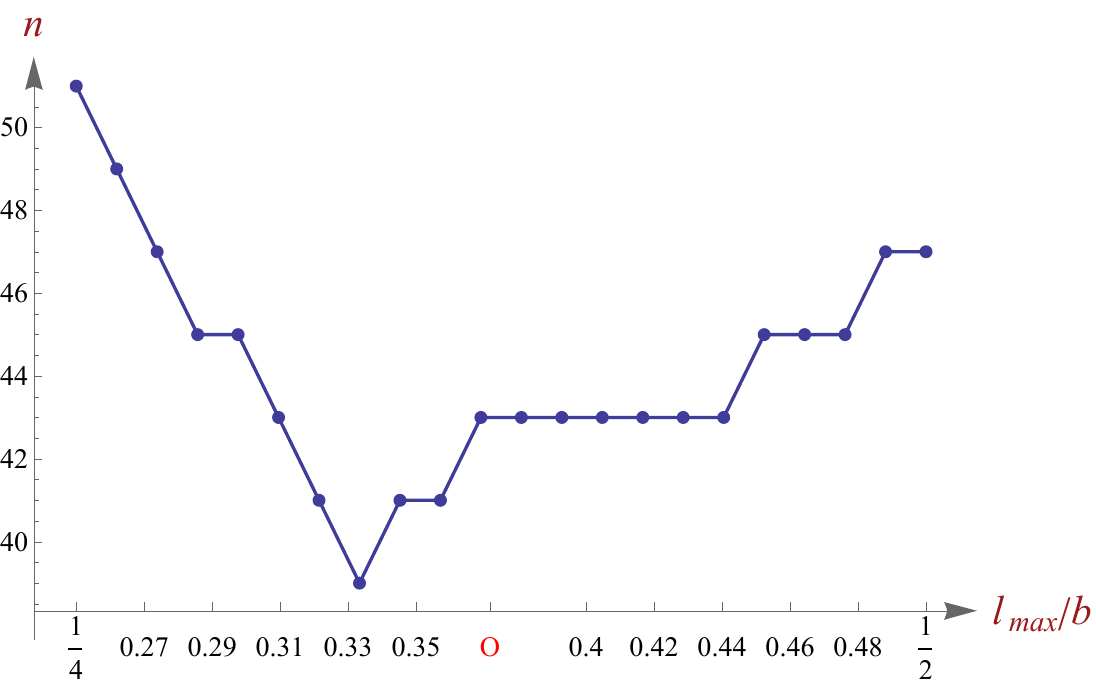}
    \caption{$n(l_{\max})$ at $\DP=4,t=1$, step = 1/84}
\end{subfigure}
\\[0.4cm]
\begin{subfigure}{0.47\textwidth}
    \includegraphics[width=\linewidth]{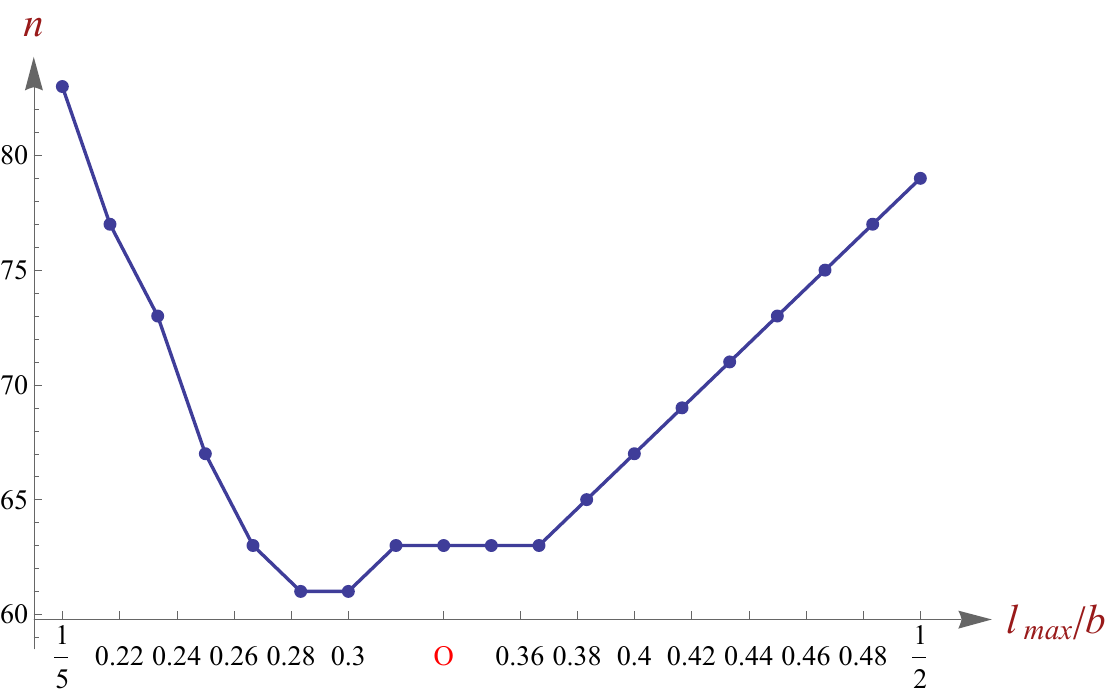}
    \caption{$n(l_{\max})$ at $\DP=5,t=1$, step = 1/60}
\end{subfigure}
&
\parbox[b]{0.6\textwidth}{
\captionof{figure}{Dependence of the block length $n$ on the shape parameter $l_{\max}$. Every individual $n(l_{\max})$ point on the plot was obtained by numerically solving the linear program obtained by fixing $n,t,\DP,l_{\max}$ and decreasing $n$ until there is no solution to the LP, and labelling the smallest $n$ with a solution as $n(l_{\max})$ for given $l_{\max}$. Here, ``step'' is the step in the plotting range of $l_{\max}$, and \textcolor{red}{O} denotes the predicted extremal $l_{\max}$ value, obtained via solving \cref{prob:max_vol_R}.}
\label{fig:n_lmax_convex}
}
\end{tabular}
\end{figure}

\addtocounter{figure}{-1}
\begin{figure}[H]
\centering
\begin{tabular}{cc}
\begin{subfigure}{0.47\textwidth}
    \includegraphics[width=\linewidth]{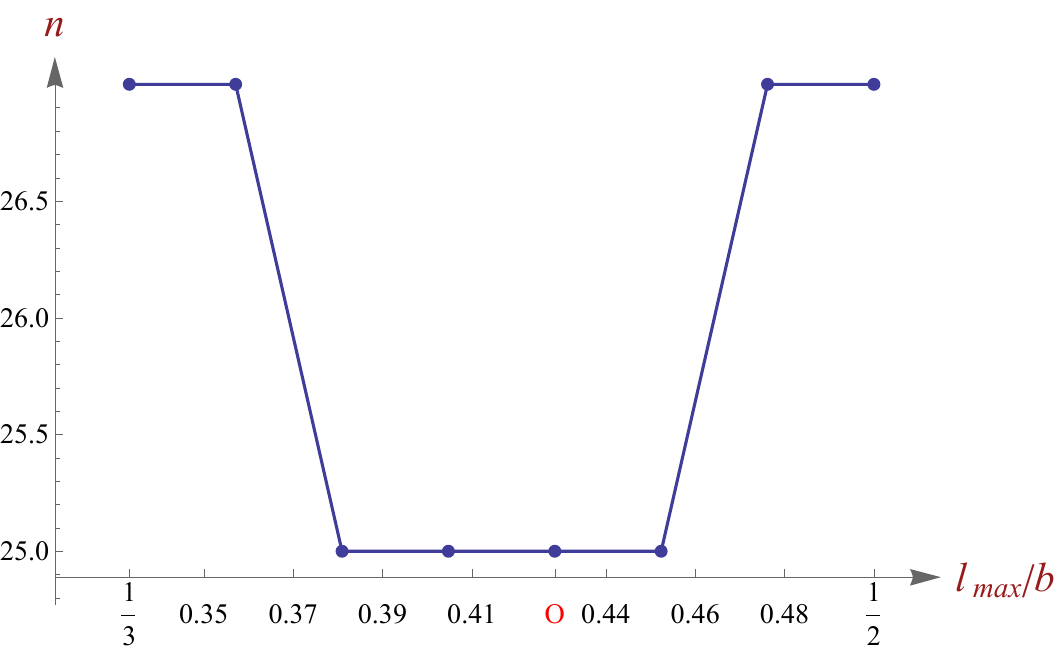}
    \caption{$n(l_{\max})$ at $\DP=3,t=1$}
\end{subfigure}
&
\begin{subfigure}{0.47\textwidth}
    \includegraphics[width=\linewidth]{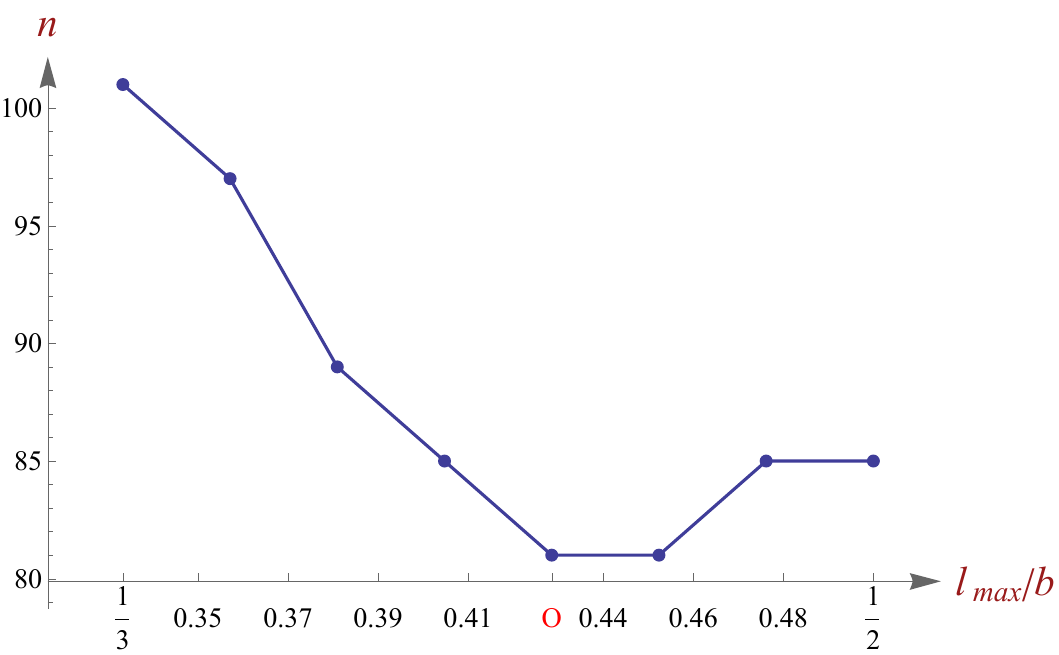}
    \caption{$n(l_{\max})$ at $\DP=3,t=2$}
\end{subfigure}
\\[0.4cm]
\begin{subfigure}{0.47\textwidth}
    \includegraphics[width=\linewidth]{ManuscriptFigures/n_lmax_q3_t3.pdf}
    \caption{$n(l_{\max})$ at $\DP=3,t=3$}
\end{subfigure}
&
\parbox[b]{0.6\textwidth}{
\captionof{figure}{Dependence of the block length $n$ on the shape parameter $l_{\max}$. Every individual $n(l_{\max})$ point on the plot was obtained by solving an LP, in exactly the same way as described for \cref{fig:n_lmax_convex}.}
\label{fig:n_lmax_converge}
}
\end{tabular}
\end{figure}

There is another more fundamental reason why the highest volume regions $\Rl$ for fixed $b$ can only be an approximation to true optimal regions. This has to do with the fact that in Appendix subsections~\ref{sec:Optim_vol_q=3} and \ref{sec:Optim_vol_all_q} we were solving a continuous optimization problem, while the initial problem of finding an optimal region for achieving minimal scaling of $n(t)$ is discrete. In this sense, largest volume for fixed $b$ achieved by choosing a certain $l_{\max}$ in the continuous optimization problem \cref{prob:max_vol_R} might not give as many terms as some other $l_{\max}$ in the discrete case. Nevertheless, the discrete case approaches the continuous limit as $t$ grows, since the number of equations grows with $t$, thus requiring more variables to keep the LP solvable. Indeed, this is something that we observe numerically in \cref{fig:n_lmax_converge}.

\subsection{Minimal scaling $n(t)$ for simplicial codes with $\DP=3$.} \label{sec:numerical_benchmark}

The way we determine the block length scaling $n(t)$ of the PI codes of \cref{eqn:Simplicial_codes} is analogous to the method we used in \cref{sec:qubit_PI_numeric_properties}. Similarly to the qubit case, we assume a polynomial dependence $n(t) = {\rm poly}(t)$. Iteratively, for every value of $t$ we try to solve \cref{eqn:C4_cond_Simplicial_codes} for increasing values of $n$ until a solution exists. For a given $t$, we choose as a starting value of $n$ to be the (minimal) solution of the problem with $t-1$ (the solution in the sense of \eqref{eq:f_cost}).
We recall from \cref{ch:optimal_lmodel} that for a given $t$ and the region $\Rl$ defined by the parameter $l_{\rm max}$, cf.\ \cref{eqn:lmodel_contraint_simplified}, $n$ is defined as:
\begin{equation}
    n = 2bt + 2t + 1.
\end{equation}
We fix $l_{\rm max}/b$ to $l_{\rm max}/b = 3/7$, which is the optimal value for $\DP=3$, cf.\ \cref{eqn:lmax_opt}. The results are shown in \cref{tab:n_b_t_Simplicial_codes}. From our numerical solution we obtain $b(t) = t^2/2 + 13t/{2} + 4$ which implies
\begin{equation}\label{eqn:scaling_n_t}
    n(t) = t^3 + 13t + 10t + 1.
\end{equation}
\begin{table}[H]
    \centering
    \begin{tabular}{|c|c|c|c|c|c|}
        \hline
         t = &  1&  2&  3&  4&  5 \\
         \hline
         b = & 11& 19& 28& 38& 49 \\
         \hline
         n = & 25& 81&175&313&501 \\
         \hline
    \end{tabular}
    \caption{First five numerical values found for the smallest $b$ for a given $t$ through solving \cref{eqn:C4_cond_Simplicial_codes}. The values of $n$ are reconstructed from $b$ and $t$.}
    \label{tab:n_b_t_Simplicial_codes}
\end{table}
We see that the scaling of the $\DP=3$ simplicial PI codes \cref{eqn:Simplicial_codes} is worse than that of Ouyang \cite{ouyang2017permutation}, for which $n = (2t+1)^2 (\DL - 1) \propto t^2$. We refer to \cref{ch:Comparing_simplicial_to_Ruskai} for further details on why we think our construction turned out to be suboptimal.

\end{document}